\keywords{Must-testing preorder; Multiparty composition; Mazurkiewicz traces}
\theoremstyle{plain} 
\newcommand{\uncoordinated}{{individualistic}}
\newcommand{\Uncoordinated}{{Individualistic}}
\newcommand{\distributed}{{uncoordinated}}
\newcommand{\Distributed}{{Uncoordinated}}
\newcommand{\distributedobs}{{uncoordinated}}
\newcommand{\Distributedobs}{{Uncoordinated}}
\newcommand{\distributedtest}{{uncoordinated}}
\newcommand{\Filteredtr}{{Filtered traces}}
\newcommand{\filteredtr}{{filtered traces}}
\newcommand{\success}{\checkmark}
\newcommand{\observers}{\mathcal{O}}
\newcommand{\ie}{i.e.}
\newcommand{\bydef}{\stackrel{\scriptstyle{\mathit{def}}}{=}}
\newcommand{\recX}{\mathbf{rec}_X}
\newcommand{\cohide}{\upharpoonright}
\newcommand{\trc}[1]{\mathsf{str}({#1})}
\newcommand{\actset}{\mathsf{Act}}
\newcommand{\one}{\mathbf{1}}
\newcommand{\zero}{\mathbf{0}}
\newcommand{\irule}[2]{\frac{\textstyle\rule[-1.3ex]{0cm}{3ex}#1}%
{\textstyle\rule[-.5ex]{0cm}{3ex}#2}}
\def \anmathrule #1#2{
    \irule{#1}{#2}%
}
\newcommand{\n}{\mathtt{n}}
\newcommand{\tr}[1]{\stackrel{\ #1\ }{\rightarrowfill}}
\newcommand{\dtr}[1]{\stackrel{#1}{\Longrightarrow}}
\def \rightarrowfill{\m@th\mathord{\smash-}\mkern-6mu%
  \cleaders\hbox{$\mkern-2mu\mathord{\smash-}\mkern-2mu$}\hfill
  \mkern-6mu\mathord\rightarrow}
\def \Rightarrowfill{\m@th\mathord{\smash-}\mkern-6mu%
  \cleaders\hbox{$\mkern-2mu\mathord{\smash-}\mkern-2mu$}\hfill
  \mkern-6mu\mathord\Rightarrow}
\newcommand{\abst}[1]{\cohide{#1}}
\newcommandx{\tequiv}[1][1=D]{\equiv_{#1}}
\newcommandx{\tclass}[2][1=D]{[#2]_{#1}}
\newcommandx{\qm}[1][1=D]{\mathbb{M}({#1})}
\newcommand{\must}{{\sf must}}
\newcommand{\notmust}{\ \mathrlap{\ \ \ \not\ }{\sf must}}
\newcommand{\MUST}{{\sf MUST}}
\newcommand{\NOTMUST}{\ \mathrlap{\ \ \ \not\ }{\sf MUST}}
\newcommand{\mustleq}{\sqsubseteq_{\must}}
\newcommand{\musteq}{\approx_{\must}}
\newcommand{\bmustleq}{\preceq_{\must}}
\newcommand{\unc}{{\sf unc}}
\newcommandx{\tmustleq}[1][1=D]{\sqsubseteq_{\unc_{#1}}}
\newcommandx{\tmusteq}[1][1=D]{\approx_{\unc_{#1}}}
\newcommand{\dmustleq}[1]{\sqsubseteq_{\unc}^{#1}}
\newcommand{\dmusteq}[1]{\approx_{\unc}^{#1}}
\newcommandx{\btmustleq}[1][1=\mathbb{I}]{\preceq_{\unc}^{#1}}
\newcommand{\lmust}{{\sf ind}}
\newcommand{\lmustleq}[1]{\sqsubseteq_{\lmust}^{#1}}
\newcommand{\lmusteq}[1]{\approx_{\lmust}^{#1}}
\newcommand{\after}{\ {\sf after}\ }
\newcommand{\pref}[2]{\trc{#1,#2}}
\newcommand{\mpref}[2]{\pref{#1}{#2}}
\newcommandx{\tsplit}[3][3=\mathbb{I}]{#1^\dagger}
\newcommand{\co}[1]{\overline{#1}}
\newcommandx{\noisyeq}[1][1=I]{\stackrel{\bullet}{\equiv}_{#1}\ }
\newcommandx{\nclass}[2][1=I]{[[#2]]_{#1}}
\newcommandx{\bnmustleq}[1][1=I]{\preceq_{\lmust}^{#1}}
\newcommandx{\sdmust}[1][1=\mathcal{I}]{\preceq_{\unc}^{#1}}
\newcommand{\init}[1]{\mathsf{init}(#1)}
\newcommand\PrologPredicateStyle{}
\newcommand\PrologVarStyle{}
\newcommand\PrologAnonymVarStyle{}
\newcommand\PrologAtomStyle{}
\newcommand\PrologOtherStyle{}
\newcommand\PrologCommentStyle{}
\newif\ifpredicate@prolog@
\newif\ifwithinparens@prolog@
\newcommand\@testChar@prolog%
\newcommand\detectTypeAndHighlight@prolog
  \def\lst@thestyle{\PrologAtomStyle}%
    \def\lst@thestyle{\PrologPredicateStyle}%
\splitfirstchar@prolog\expandafter{\the\lst@token}%
        \let\lst@thestyle\PrologAnonymVarStyle%
        \let\lst@thestyle\PrologVarStyle%
          \let\lst@thestyle\PrologVarStyle%
          \let\iterate\relax
\newcommand\splitfirstchar@prolog{}
\def\splitfirstchar@prolog#1{\@splitfirstchar@prolog#1\relax}
\newcommand\@splitfirstchar@prolog{}
\def\@splitfirstchar@prolog#1#2\relax{\def\@testChar@prolog{#1}}
\def\beginlstdelim#1#2%
  \def\endlstdelim{\PrologOtherStyle #2\egroup}%
  {\PrologOtherStyle #1}%
\newcommand\lang@prolog{Prolog-pretty}
\lstdefinelanguage\expandafter%
\newcommand\@ddedToOutput@prolog\relax
    \let\@ddedToOutput@prolog\@testChar@prolog%
\definecolor{PrologPredicate}{RGB}{000,031,255}
\definecolor{PrologVar}      {RGB}{024,021,125}
\definecolor{PrologAnonymVar}{RGB}{000,127,000}
\definecolor{PrologAtom}     {RGB}{186,032,032}
\definecolor{PrologComment}  {RGB}{063,128,127}
\definecolor{PrologOther}    {RGB}{000,000,000}
\definecolor{mygray}{gray}{0.5}
\definecolor{mylightgray}{gray}{0.9}
\renewcommand\PrologPredicateStyle{\color{PrologPredicate}}
\renewcommand\PrologVarStyle{\color{PrologVar}}
\renewcommand\PrologAnonymVarStyle{\color{PrologAnonymVar}}
\renewcommand\PrologAtomStyle{\color{PrologAtom}}
\renewcommand\PrologCommentStyle{\itshape\color{PrologComment}} 
\renewcommand\PrologOtherStyle{\color{PrologOther}}
\lstdefinestyle{Prolog-pygsty}
{
  language     = Prolog-pretty,
  upquote      = true,
  stringstyle  = \PrologAtomStyle,
  commentstyle = \PrologCommentStyle,
  literate     =
    {:-}{{\PrologOtherStyle :-}}2
    {,}{{\PrologOtherStyle ,}}1
    {.}{{\PrologOtherStyle .}}1
}
\lstdefinestyle{INLINE}{
}
\lstdefinestyle{DISPLAY}{
   numberstyle=\tiny\tt\color{mygray},
   backgroundcolor=\color{mylightgray},
   framerule=0pt,
   frame=tlbr, framesep=0.2cm
}
\newcommand{\PI}{\lstinline[language=Prolog-pretty]}
{}
\begin{document}

\title[Multiparty testing preorders]{Multiparty testing preorders}

\author[R. De Nicola]{Rocco De Nicola\lmcsorcid{0000-0003-4691-7570}}[a]	
\author[H. Melgratti]{Hern\'an Melgratti\lmcsorcid{0000-0003-0760-0618}}[b]	

\address{IMT  School for Advanced Studies Lucca, Italy}	
\email{rocco.denicola@imtlucca.it}  

\address{DC, FCEyN, Universidad de Buenos Aires - CONICET, Argentina}
\email{hmelgra@dc.uba.ar}  


\titlecomment{{\lsuper*}This is an extended and revised version of a paper
  with same title that appeared in the proceedings of the 10th International
  Symposium on Trustworthy Global Computing, Vol. 9533 of LNCS, Springer 2016}


\begin{abstract}
  Variants of the must testing approach have been successfully applied
  in service oriented computing for analysing the compliance between
  (contracts exposed by) clients and servers or, more generally,
  between two peers.  It has however been argued that multiparty
  scenarios call for more permissive notions of compliance because
  partners usually do not have full coordination capabilities. We
  propose two new testing preorders, which are obtained by restricting
  the set of potential observers. For the first preorder, called
  \distributed, we allow only sets of parallel observers that use
  different parts of the interface of a given service and have no
  possibility of intercommunication.  For the second preorder, that we
  call \uncoordinated, we instead rely on parallel observers that
  perceive as silent all the actions that are not in the interface of
  interest.  We have that the \distributed\ preorder is coarser than
  the classical must testing preorder and finer than the
  \uncoordinated\ one.  We also provide a characterisation in terms of
  decorated traces for both preorders: the \distributed\ preorder is
  defined in terms of must-sets and Mazurkiewicz traces while the
  \uncoordinated\ one is described in terms of classes of filtered
  traces that only contain designated visible actions and must-sets.
\end{abstract}

\maketitle


\section{Introduction}\label{sec:intro}

A desired property of communication-centered systems is the graceful
termination of the processes involved in a multiparty interaction, i.e., every
possible interaction ends successfully, in the sense that there are neither
messages waiting forever to be sent nor sent messages which are never
received.
The theories of session types~\cite{THK94,HondaVK98} and of
contracts~\cite{CastagnaGP08,CGP09:TCWS,BZ:TUTCCCC,LaneveP07} are commonly
used to ensure such kind of properties.
The key idea behind both approaches is to associate each process a with type
(or \textit{contract}) that gives an abstract description of its external,
visible behaviour and to use type checking to verify the correctness of
behaviours.

Processes are often defined as sequential nondeterministic {\sc ccs}
processes~\cite{Mil:CC} describing the offered communications, and are
built-up from send and receive actions. These activities are abstractly
represented either as output and input actions that take place over a set of
channels or as internal $\tau$ actions.
Basic actions can be composed sequentially (prefix operator ``.'') or as
alternatives (non deterministic choice ``$+$'').
Typically, the language for describing processes does not have any operator
for parallel composition. It is assumed that all possible interleavings are
made explicit in the description of the service and communication is only used
for modelling the interaction among different processes.

In \emph{client-server} scenarios, i.e., in settings involving just two
processes, variants of the must testing preorder has been used to compare
alternative implementations of servers and
clients~\cite{bernardi2013mutually}.
Technically, two given processes $p$ and $q$ are related via the must preorder
($p\ \mustleq q$) if $q$ satisfies all observers that are satisfied by $p$.
Consequently, $p$ and $q$ are considered equivalent ($p\ \musteq q$) if they
satisfy exactly the same observers.
Standardly, an observer is a unique (sequential) process that runs in parallel
with the tested process and, consequently, all interactions with the tested
process are handled by a unique, central process, i.e., the observer itself.

If one considers a multiparty setting, each process may concurrently interact
with several other partners and its interface is often (logically) partitioned
by allowing each partner to communicate only through a dedicated part of the
interface.
Consider the following scenario involving three partners: an organisation (the
broker) that sells goods produced by a different company (the producer) to a
specific customer (the client). The behaviour of the broker can be described
by the following process:
\[B = \mathit{req}.\co{\mathit{ order}}.\co{\mathit{inv}}.\zero\]
The broker accepts requests on channel $\mathit{req}$ and then places an order to
the producer with the message $\co{\mathit{order}}$ and sends an invoice to the
client with the message $\co{\mathit{inv}}$.
A client may behave as the process $C$ below, which first sends a request on
channel $\mathit{req}$ and then expects the invoice on channel $\mathit{inv}$, i.e.,
\[C = \co{\mathit{req}}.\mathit{inv}.\zero\]
A producer may be modelled by a process that simply accepts an order over
channel $\mathit{ord}$, i.e.,
\[P = \mathit{order}.\zero\]
In this scenario, the broker uses the channels $\mathit{req}$ and
$\mathit{inv}$ to interact with the client, while it interacts with the
producer over the channel $\mathit{order}$.
Moreover, the client and the producer do not know each other and are
completely independent. Hence, the order in which messages $\co{\mathit{order}}$
and $\co{\mathit{inv}}$ are sent is completely irrelevant for them.
In fact, they would be equally happy with a broker defined as follows:
\[B'= \mathit{req}.\co{\mathit{inv}}.\co{\mathit{order}}.\zero\]
Nevertheless, these two different implementations are not considered
must-equivalent.
In these situations, the classical must testing preorder turns out to be
unnecessarily discriminating.

The main goal of this paper is to introduce alternative, less discriminating,
preorders that take into account the distributed nature of the observers and
their possibly limited coordination and interaction capabilities.
A first preorder, called \emph{\distributed\ must preorder}, is obtained by
assuming that a set of observers of a given process interact with it via fully
disjoint sets of ports, i.e., they use different parts of its interface, have
no possibility of intercommunication, and all of them terminate successfully
in every possible interaction. It is however worth noting that these
assumptions about the absence of communication among observers do not fully
eliminate the possibility for observers of being mutually influenced, e.g.,
when one observer does not enable a communication on some ports.
Due to this, it is possible to differentiate $B$ from $B'$ above when one of
the observers refuses to synchronise over a port, e.g., if the client does not
enable the synchronisation over the channel $\mathit{inv}$. Consider a client
$ C' = \co{\mathit{req}}.\zero$ that sends a request and terminates without
accepting the invoice. While $P$ and $C'$ always terminate their interaction
with $B$, this is not the case when interacting with $B'$ because
communication over channel $\mathit{order}$ is never enabled. Consequently, these
two implementations of the broker are distinguished by the \distributed\ must
preorder. However, the \distributed\ must preorder allows for the reordering
of actions. For instance, the following two implementations of the broker are
considered equivalent under the \distributed\ must preorder.
\[
  \begin{array}{l@{\ =\ }l}
    B'' & \mathit{req}.(\co{\mathit{order}}.\co{\mathit{inv}}.\zero
          + \co{\mathit{order}}.\zero + \co{\mathit{inv}}.\zero)\\
    B''' & \mathit{req}.(\co{\mathit{inv}}.\co{\mathit{order}}.\zero + \co{\mathit{order}}.\zero + \co{\mathit{inv}}.\zero)
  \end{array}
\]
We remark that the processes $C'$ and $P$ are not able to distinguish $B''$
from $B'''$, because they both terminate when interacting with either $B''$
and $B'''$. We also note that a client behaving as described by $C$ will not
be satisfied neither by $B''$ nor $B'''$ because they both may decide not to
communicate over $\mathit{inv}$.

The second preorder, which we call \emph{\uncoordinated\ must preorder},
allows observers to take for granted the execution of those actions of the
process that are not explicitly of interest for them (i.e., not in their
alphabet). For instance, a client in the previous scenario assumes that the
producer will always enable the communication over the channel $\mathit{order}$.
In the {\uncoordinated\ must preorder}, the processes $B$ and $B'$ turn out to
be indistinguishable.

The preorders are, as usual, defined in terms of the outcomes of experiments
by specific sets of observers. For defining the \distributed\ must preorder,
we allow only sets of parallel observers that cannot intercommunicate and do
challenge processes via disjoint parts of their interface. For defining the
\uncoordinated\ must preorder, we instead rely on parallel observers that,
again, cannot intercommunicate but in addition perceive as silent all the
actions that are not part of the interface of their interest. This is
instrumental to avoid that a specific observer recovers information about
other involved observers. As expected, we have that the \distributed\ preorder
is coarser than the classical must testing preorder and finer than the
\uncoordinated\ one.

Just like for classical testing preorders, we provide a characterisation for
both new preorders in terms of decorated traces, which avoids dealing with
universal quantifications over the set of observers whenever a specific
relation between two processes has to be established. The alternative
characterisations make it even more evident that our preorders permit action
reordering. Indeed, the \distributed\ preorder is defined in terms of
\emph{Mazurkiewicz traces}~\cite{mazurkiewicz1995introduction} while the
\uncoordinated\ one is described in terms of classes of traces quotiented via
specific sets of visible actions. We would like to remark that our preorders
are different from those defined
in~\cite{bravetti2008foundational,padovani2010contract,mostrous2009global},
which also permit action reordering by relying on buffered communication;
additional details will be provided in Section~\ref{sec:conclusions}.

{\bf Synopsis} The remainder of this paper is organised as follows. In
Section~\ref{sec:centralised} we recall the basics of the classical must testing
approach.
In Section~\ref{sec:mazurkiewicz} and Section~\ref{sec:uncoordinated} we present the
theory of \distributed\ and \uncoordinated\ must testing preorders and their
characterisation in terms of traces. In Section~\ref{sec:relation} we show that the
\distributed\ preorder is coarser than the must testing preorder but finer
than the \uncoordinated\ one.
In Section~\ref{atwork} we describe a Prolog implementation of the \distributed\
and \uncoordinated\ preorders for the finite fragment of our specification
language and use it for analysing a scenario involving a replicated data
store.
Finally, we discuss some related work and future developments in
Section~\ref{sec:conclusions}.

This paper is a revised and extended version of~\cite{NicolaM15}. We fix an
incorrect characterisation of the \distributed\ preorder~\cite{NicolaM15}, and
provide full proofs of previously published results. In addition, we give a
prototype implementation in Prolog of the alternative characterisation of the
proposed preorders for the fragment of the calculus with only finite processes
and illustrate the usability of the proposed preorders by using them to reason
on different implementations of components in a replicated data store
(Section~\ref{atwork}).



\section{Processes and testing preorders}\label{sec:centralised}

Let $\mathcal{N}$ be a countable set of action names, ranged over by
$a, b, \ldots$. As usual, we write co-names in $\co{\mathcal{N}}$ as
$\co{a},\co{b}, \ldots$ and assume $\co{\co{a}}=a$. We will use $\alpha$,
$\beta$ to range over $\actset =(\mathcal{N}\cup\co{\mathcal{N}})$. Moreover,
we consider a distinguished internal action $\tau$ not in $\actset$ and use
$\mu$ to range over $\actset\cup\{\tau\}$. We fix the language for defining
processes as the sequential fragment of {\sc ccs} extended with a
\emph{success} operator, as specified by the following grammar.
\begin{center}
  \begin{math}
    \begin{array}{l@{\ ::=\ }ll}
      p, q & \zero\ |\ \one\ |\ \mu.p \ |\ p+q \ | \ X \ | \ \recX. p
    \end{array}
  \end{math}
\end{center}

The process $\zero$ stands for the terminated process, $\one$ for the process
that reports success and then terminates, and $\mu.p$ for a process that
executes $\mu$ and then continues as $p$. Alternative behaviours are specified
by terms of the form $p + q$, while recursive ones are introduced by terms
like $\recX.p$. We denote by $\mathcal{P}$ the set of all processes.
We write $\n(p)$ for the set of names $a\in\mathcal{N}$ such that either $a$
or $\co{a}$ occur in $p$.

The operational semantics of processes is given in terms of a labelled
transition system (\textsc{lts}) $p\tr{\lambda} q$ with
$\lambda\in\actset\cup\{\tau,\success\}$, where $\success$ signals the
successful termination of an execution.

\begin{defi}[Transition relation]\label{def:lts-processes}
  The transition relation on processes, noted $\tr{\lambda}$, is the
  least relation satisfying the following rules
  \begin{center}
    \begin{math}
      \begin{array}{l@{\qquad}l@{\qquad}l@{\qquad}l@{\qquad}l}
        \one \tr{\success} \zero
        &
        \mu.p \tr{\mu} p
        &
        \anmathrule{p\tr{\lambda} p'}
                   {p+q\tr{\lambda} p'}
                   &
                   \anmathrule{q\tr{\lambda} q'}
                              {p+q\tr{\lambda} q'}
                              &
                              \anmathrule{p[\recX. p / X] \tr{\lambda} p'}
                                         {\recX. p \tr{\lambda} p'}
      \end{array}
    \end{math}
  \end{center}

\end{defi}

Multiparty applications, named \emph{configurations}, are built by composing
processes concurrently. Formally, configurations are given by the following
grammar.

\begin{center}
  \begin{math}
    c,d, o::= p \ |\  c\| d
  \end{math}
\end{center}

We denote by $\observers$ the set of all configurations. We sometimes write
$\Pi_{i\in 0..n} p_i$ for the parallel composition
$p_0 \ \|\ \ldots\ \|\ p_n$.
The operational semantics of configurations, which accounts for the
communication between processes, is obtained by extending the {\sc lts} in
Definition~\ref{def:lts-processes} with the following rules:
\begin{center}
  \begin{math}
    \anmathrule{c\tr{\mu}c'}{c \ \|\  d \tr{\mu} c' \  \|\  d}
    \qquad
    \anmathrule{d\tr{\mu}d'}{c \ \|\  d \tr{\mu} c \ \|\  d'}
    \qquad
    \anmathrule{c\tr{\alpha}c'\quad d\tr{\co{\alpha}}d'}{c\ \|\ d \tr{\tau} c' \ \|\ d'}
    \qquad
    \anmathrule{c\tr{\success}c'\quad d\tr{{\success}}d'}{c\ \|\ d \tr{\success} c' \ \|\ d'}
  \end{math}
\end{center}

All rules are standard apart for the last one that is not present
in~\cite{DBLP:journals/tcs/NicolaH84}. This rule states that the
concurrent composition of processes can report success only when all
processes in the composition do so.

We write $c\tr{\lambda}$ when there exists $c'$ such that $c\tr{\lambda}c'$;
$\Rightarrow$ for the reflexive and transitive closure of $\tr{\tau}$;
$c\dtr{\lambda}c'$ for $\lambda\in\actset\cup\{\success\}$ and
$c\dtr{}\tr{\lambda}\dtr{}$; $c\dtr{\lambda_0\ldots\lambda_n} c'$ for
$c\dtr{\lambda_0}\ldots\dtr{\lambda_n} c'$, and $c\dtr{s}$ with
$s\in{(\actset\cup\{\success\})}^*$ if there exists $c'$ such that $c\dtr{s}c'$.
We say that $c_o\tr{\mu_0} \ldots c_i\tr{\mu_i} \ldots$ is successful when
there exists $j $ s.t. $c_j\tr{\success}$; it is unsuccessful otherwise.

We write $\trc c$ and $\init c $ to denote the sets of strings and enabled
actions of $c$, defined as follows
\begin{center}
  \begin{math}
    \trc c = \{s\in{(\actset\cup\{\success\})}^*\ |\ c\dtr s\}
    \qquad \qquad
    \init c = \{ \lambda\in\actset\cup\{\success\} \ | \ c\dtr{\lambda}\}
  \end{math}
\end{center}

As behavioural semantics, we consider the must-testing
preorder~\cite{DBLP:journals/tcs/NicolaH84}, which is defined in terms of the
computations of a process under test $p$ and an observer $o$. A computation of
$p\ \|\ o$ is a sequence of $\tau$ transitions, i.e.,
\[p\ \|\ o = p_0\ \|\ o_0\tr{\tau}\ldots \tr{\tau}p_k\ \|\ o_k\tr{\tau}
  \ldots\]

A computation is \emph{maximal} if it is either infinite or its last term
$p_n\ \|\ o_n$ is such that $p_n\ \|\ o_n\not\tr{\tau}$.
We say it is \emph{observer-successful} if there exists $j\ge 0$ such
that $o_j\tr{\success}$, and \emph{observer-unsuccessful} otherwise.

\begin{defi}[must]
  We write $p\ \must\ o$ iff for each maximal computation of $p\ \|\ o$ is
  observer-successful.
\end{defi}

The notion of passing a test (or satisfying an observer) represents the fact
that an observer built-up from the parallel composition of processes is able
to report success in every possible interaction with the process under test.
It is then natural to compare processes according to their capacity to satisfy
observers.

The standard framework of~\cite{DBLP:journals/tcs/NicolaH84} can be recovered
by considering only observers without parallel composition.

\begin{defi}[must preorder]
  $p \mustleq q$ iff $\forall r\in\mathcal{P}: p\ \must\ r$ implies
  $q\ \must\ r$. We write $p\musteq q$ when both $p \mustleq q$ and
  $q \mustleq p$.
\end{defi}

\subsection{Semantic characterisation}
The must testing preorder has been characterised in terms of (i) the sequences
of actions that a process may perform, and (ii) the possible sets of actions
that it may perform after executing a particular sequence of
actions~\cite{DBLP:journals/tcs/NicolaH84}. This characterisation relies on a
few auxiliary predicates and functions that are presented below. A process $p$
\emph{diverges}, written $p\Uparrow$, when it exhibits an infinite, internal
computation $p\tr{\tau} p_0\tr{\tau} p_1\tr{\tau} \ldots $. We say $p$
\emph{converges}, written $p\Downarrow$, otherwise. For $s\in\actset^*$, the
convergence predicate is inductively defined by the following rules:
\begin{itemize}
\item
  $p \Downarrow \epsilon$ if $p\Downarrow$.
\item
  $p \Downarrow \alpha.s$ if $p\Downarrow$ and $p\dtr{\alpha} p'$
  implies $p'\Downarrow s$.
\end{itemize}

\noindent
The \emph{residuals} of a process $p$ (or a set of processes $P$) after the
execution of $s\in\actset^*$ is given by the following equations
\begin{itemize}
\item
  $ (p \after s) = \{p'\ |\ p\dtr{s} p'\}$.
\item
  $ (P \after s) = \{ p' \ |\ p\in P, \ p'\in (p \after s)\}$.
\end{itemize}

\begin{defi}[Must-set]\label{def:must-set}
  A \emph{must-set} of a process $p$ (or set of processes $P$) is
  $L\subseteq\actset$, and $L$ {finite} such that
  \begin{itemize}
  \item $p\ \MUST\ L$ iff $\forall p'$ such that $p\dtr{} p'$,
    $\exists \alpha\in L$ such that $p'\dtr{\alpha}$.
  \item $P\ \MUST\ L$ iff $\forall p\in P. p\ \MUST\ L$.
  \end{itemize}
\end{defi}

\noindent
Then, the must testing preorder can be characterised in terms of
strings and must-sets as follows.

\begin{defi}\label{def-charact-must} $p\bmustleq q$ if for every
  $s\in\actset^*$, for all finite $L\subseteq \actset$, if $p\Downarrow s$
  then
  \begin{itemize}
  \item $q\Downarrow s$.
  \item $(p \after { s})\ \MUST\ L$ implies $(q \after { s})\ \MUST\ L$.
  \end{itemize}
\end{defi}

\begin{thmC}[{\cite[Theorem 6.4.5]{DBLP:journals/tcs/NicolaH84}}]
  ${\mustleq} = {\bmustleq}$.
\end{thmC}


\section{A testing preorder with \distributed\
  observers}\label{sec:mazurkiewicz}

The must testing preorder is defined in terms of the tests that each process
is able to pass. Remarkably, the classical setting can be formulated by
considering only sequential tests (see the characterisation of minimal tests
in~\cite{DBLP:journals/tcs/NicolaH84}).
Each sequential test is a unique, centralised process that handles all the
interaction with the process under test and, therefore, has a complete view of
the externally observable behaviour of the process. For this reason, we refer
to the classical must testing preorder as a \emph{centralised preorder}.

Multiparty interactions are generally structured in such a way that pairs of
partners communicate through dedicated channels, for example, when relying on
partner links in service oriented models or buffers in communicating
machines~\cite{basu2012deciding}.
Conceptually, the interface (i.e., the set of channels) of a process is
partitioned and the process interacts with each partner by using only specific
sets of channels in its interface.
In addition, there are scenarios (as the one discussed in
Section~\ref{sec:intro}) in which partners frequently do not know each other
and cannot communicate directly.
As a direct consequence, the partners of a process cannot establish causal
dependencies among actions that take place over different parts of the
interface.
These constraints reduce the discriminating power of partners and call for
equivalences that equate processes that cannot be distinguished by sets of
independent sequential processes.

\begin{exa}\label{ex:trip}
  Consider the classical scenario for planning a trip. A user $U$ interacts
  with a broker $B$, which is responsible for booking flights provided by a
  service $F$ and hotel rooms available at service $H$. The expected
  interaction can be described as follows: $U$ makes a booking request by
  sending a message $req$ to $B$ (we will just describe the interaction and
  abstract away from data details such as trip destination, departure dates
  and duration). Depending on the request, $B$ may internally decide to
  contact service $F$ (for booking just a flight ticket), $H$ (for booking
  rooms) or both. Service $B$ uses channels $reqF$ and $reqH$ to respectively
  contact $F$ and $H$ (for the sake of simplicity, we assume that any request
  to $F$ and $H$ will be granted). Then, the expected behaviour of $B$ can be
  described with the following process. (As usual, $\tau$ actions and $+$ are
  combined to model internal, non-deterministic choices in a process.)

  \begin{center}
    \begin{math}
      B_0 \bydef \mathit{req}.(\tau.\co{\mathit{reqF}}.\zero
      +
      \tau.\co{\mathit{reqH}}.\zero
      +
      \tau.\co{\mathit{reqH}}.\co{\mathit{reqF}}.\zero)
    \end{math}
  \end{center}

  In this process, the third branch represents $B$'s choice to contact first
  $H$ and then $F$. Nevertheless, the other partners ($U$, $F$ and $H$) are
  not affected in any way by this choice, thus they would be equally happy
  with alternative definitions such as:

  \begin{center}
    \begin{math}
      \begin{array}{l@{\bydef} l}
        B_1
        &
        \mathit{req}.(\tau.\co{\mathit{reqF}}.\zero
        + \tau.\co{\mathit{reqH}.\zero
        + \tau.\co{\mathit{reqF}}}.\co{\mathit{reqH}}.\zero)
        \\
        B_2
        &
        \mathit{req}.(\tau.\co{\mathit{reqF}}.\zero
        + \tau.\co{\mathit{reqH}}.\zero
        + \tau.\co{\mathit{reqH}}.\co{\mathit{reqF}}.\zero
        +  \tau.\co{\mathit{reqF}}.\co{\mathit{reqH}}.\zero)
      \end{array}
    \end{math}
  \end{center}

  $B_0$, $B_1$ and $B_2$ are distinguished by the must testing equivalence. It
  suffices to consider
  $o_0= \co{\mathit{req}}.(\tau.\one + \mathit{reqF}.(\tau.\one +
  \mathit{reqH}. \zero))$ for showing that $B_0\not\mustleq B_1$ and that
  $B_0\not\mustleq B_2$, and use
  $o_1= \co{\mathit{req}}.(\tau.\one + \mathit{reqH}.(\tau.\one +
  \mathit{reqF}.\zero))$ for proving that $B_1\not\mustleq B_2$.
  \qed%
\end{exa}

This section is devoted to the definition and characterisation of a preorder,
called \emph{\distributed\ must preorder}, that is coarser than the classical
must preorder and relates processes that cannot be distinguished by
distributed contexts.
The \distributed\ must preorder is obtained by restricting the set of
observers to parallel processes that do not share channels.
We will say $\mathbb{I} = {\{I_i\}}_{i\in 0 \ldots n}$ is an {interface}
whenever $\mathbb{I}$ is a partition of $\actset$ and
$\forall \alpha\in\actset$, $\alpha\in I_i$ implies $\co{\alpha}\in I_i$. In
the rest of this paper we will usually write only the relevant part of an
interface.
For instance, we will write $\{\{a\},\{b\}\}$ for any interface $\{I_0,I_1\}$
such that $a\in I_0$ and $b\in I_1$. Then, the observers used by the
\distributed\ must testing preorder are introduced by the following
definition.

\begin{defi} [\Distributedobs\ observer]
  Let $\mathbb{I}= {\{I_i\}}_{i\in 0 \ldots n}$ be an interface. A process
  $\Pi_{i\in 0..n} o_i = o_0\ \|\ \ldots\ \|\ o_n$ is an
  \emph{\distributedobs\ observer over $\mathbb{I}$} if
  $\n(o_i) \subseteq I_i$ for all $i\in 0 \ldots n$.
\end{defi}

We say $o$ is an \distributedobs\ observer and omit the interface when no
confusion arises.
In our setting, which does not involve name mobility, the fact that
$\mathbb{I}= \{I_i\}$ is a partition of $\actset$ and $\n(o_i) \subseteq I_i$
suffices to avoid a direct communication among the processes of an
\distributedtest\ observer.
As a consequence, a distributed observer cannot impose a total order between
actions that are controlled by different processes of the observer.
Indeed, the executions of a distributed observer are the interleavings of the
executions of all processes ${\{o_i\}}_{i\in 0..n}$ (this property is formally
stated in Section~\ref{subsec:marzukiewicz},
Lemma~\ref{aux-lem:commutative-observer-behavior}).
We remark that a configuration does report success (i.e., perform action
$\success$) only when all processes in the composition do report success;
consequently an \distributedtest\ observer reports success when all its
components report success simultaneously.
Our definition of success deviates from the original setup
of~\cite{DBLP:journals/tcs/NicolaH84}. If success were not synchronised, e.g.,
every process would pass the observer $o = a.\zero\ ||\ \one$ because $o$
would be able to report success immediately. This is not the case in our
setting. In fact, each component of an \distributedtest\ observer accounts for
the view that a particular partner has about the process under test, and we
expect every component of the observer to be able to report success when a
process passes a test.

\begin{defi}[\Distributed\ must preorder $\dmustleq{\mathbb{I}}$]
  Let $\mathbb{I} = {\{I_i\}}_{i\in 0 \ldots n}$ be an interface. We say
  $p\dmustleq{\mathbb{I}} q$ iff for all \distributedobs\ observer $o$ over
  $\mathbb{I}$, $p\ \must\ o$ implies $q\ \must\ o$. We write
  $p\dmusteq{\mathbb{I}} q$ when both $p \dmustleq{\mathbb{I}} q$ and
  $q \dmustleq{\mathbb{I}} p$.
\end{defi}

\begin{exa}\label{ex:trip-dmust-bis} Consider the scenario presented in
  Example~\ref{ex:trip} and the following interface
  $\mathbb{I}=\{\{\mathit{req}\}, \{\mathit{reqF}\},\{\mathit{reqH}\}\}$ for
  the process $B$ that thus interacts with each of the other partners by using
  a dedicated part of its interface. It can be shown that the three
  definitions for $B$ in Example~\ref{ex:trip} are equivalent when considering
  the \distributed\ must testing preorder, i.e.,
  $B_0 \dmusteq{\mathbb{I}} B_1 \dmusteq{\mathbb{I}} B_2$. The actual proof,
  which uses the (trace-based) alternative characterization of the preorder,
  is deferred to Example~\ref{ex:trip-dmust}. \qed%
\end{exa}

\subsection{Semantic characterisation}\label{subsec:marzukiewicz}

We now characterise the \distributed\ must testing preorder in terms of traces
and must-sets. In order to do that, we shift from strings to Mazurkiewicz
traces~\cite{DBLP:conf/ac/Mazurkiewicz86}.
A \emph{Mazurkiewicz trace} is a set of strings, obtained by permuting
independent symbols. Traces represent concurrent computations, in which
commuting symbols stand for actions that execute independently of one another
and non-commuting symbols are causally dependent actions. We start by
summarising the basics of the theory of traces
in~\cite{DBLP:conf/ac/Mazurkiewicz86}.

Let $D \subseteq \actset\times\actset$ be a finite equivalence relation,
called the \emph{dependency relation}, that relates the actions that cannot be
commuted. Thus, if ${(\alpha,\beta)}\in D$ then $\alpha$ and $\beta$ are
considered causally dependent.
Symmetrically, $I_D = (\actset\times\actset) \setminus D$ stands for the
\emph{independency} relation with $({\alpha},{\beta})\in I_D$ meaning that
${\alpha}$ and ${\beta}$ are concurrent.

The trace equivalence induced by the dependency relation $D$ is the least
congruence $\equiv_D$ on $\actset^{*}$ such that for all
${\alpha,\beta}\in\actset: ({\alpha,\beta})\in I_D \implies {\alpha\beta}
\tequiv {\beta\alpha}$.

The equivalence classes of $\equiv_D$, denoted by $\tclass{s}$, are the
Mazurkiewicz \emph{traces}, namely the strings quotiented via $\equiv_D$.
We remark that no action can commute with $\success$ because $I_D$ is defined
over $\actset\times\actset$.

Let $\mathbb{I}$ be an interface, the \emph{dependency relation induced by}
$\mathbb{I}$ is $D = \bigcup_{I\in\mathbb{I}} I\times I$.

\begin{exa}
  Consider the interface
  $\mathbb{I}=\{\{\mathit{req}\}, \{\mathit{reqF}\}, \{\mathit{reqH}\}\}$ in
  Example~\ref{ex:trip-dmust-bis}.
  We recall that
  $\mathbb{I}=\{\{\mathit{req}\}, \{\mathit{reqF}\}, \{\mathit{reqH}\}\}$ is a
  convenient notation for any partition of $\actset$ such that
  $\forall \alpha\in\actset$, $\alpha\in I_i$ implies $\co{\alpha}\in I_i$.
  Consequently,
  \[
    \mathbb{I}=\{
    \{\mathit{req},  \co{\mathit{req}},  \ldots\},
    \{\mathit{reqF}, \co{\mathit{reqF}}, \ldots\},
    \{\mathit{reqH}, \co{\mathit{reqH}}, \ldots\}\}
  \]
  The dependency relation induced by $\mathbb{I}$ is
  \[
    D = \{
    (\mathit{req}, \mathit{req}),
    (\mathit{req}, \co{\mathit{req}}),
    (\co{\mathit{req}}, \mathit{req}),
    (\co{\mathit{req}}, \co{\mathit{req}}),
    \ldots,
    (\mathit{reqF},\mathit{reqF}),
    \ldots,
    (\mathit{reqH},\mathit{reqH}),
    \ldots
    \}
  \]
  Then, $(\alpha,\beta)\in I_D$ iff $(\alpha,\beta)\not\in D$. The relation
  $I_D$ basically states that actions that take place over channels that
  belong to different parts of the interface can commute.
  For instance, $\mathit{req}$, $\mathit{reqH}$ and $\mathit{reqF}$ are
  independent, and hence
  \[
    \mathit{req}\,\mathit{reqH}\,\mathit{reqF}
    \equiv_D
    \mathit{req}\,\mathit{reqF}\,\mathit{Reqh}
    \equiv_D
    \mathit{reqH}\,\mathit{reqF}\,\mathit{req}
    \equiv_D\ldots
  \]
  Consequently,
  \[
    \tclass{\mathit{req}\,\mathit{reqH}\,\mathit{reqF} }
    =
    \{
    \mathit{req}\,\mathit{reqH}\,\mathit{reqF}, \
    \mathit{req}\,\mathit{reqF}\,\mathit{reqH}, \
    \mathit{reqH}\,\mathit{reqF}\,\mathit{req}, \
    \ldots
    \}
  \]
  On the contrary, actions that take place over channels in the same part of
  the interface are dependent and cannot commute. Hence,
  $\mathit{req}\ \co{\mathit{req}} \not\equiv_D \co{\mathit{req}}\ {\mathit{req}}$.
  \qed%
\end{exa}

Now we are able to characterise the behaviour of an \distributedobs\ observer.
We start by formally stating that an \distributedobs\ observer reaches the
same configuration after executing any of the strings in the same equivalence
class. This result is instrumental to the proof of the alternative
characterisation of the \distributed\ preorder.

\begin{lem}\label{aux-lem:commutative-observer-behavior} Let
  $o = \Pi_{i \in 0..n} o_i$ be an \distributedobs\ observer over
  $\mathbb{I} = {\{I_i\}}_{i\in 0..n}$ and $D$ the dependency relation induced
  by $\mathbb{I}$. Then, for all $s\in\actset^*$ and $t\in\tclass s$ we have
  $o\dtr{s} o'$ iff $o\dtr{t} o'$.
\end{lem}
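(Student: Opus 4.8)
The plan is to reduce the statement to a fact about \emph{componentwise projections of strings}, using that $o$ is built for the interface $\mathbb{I}$, i.e.\ $\n(o_i)\subseteq I_i$ (as in the definition of $\dmustleq{\mathbb{I}}$) together with $\n(o_i)\cap\n(o_j)=\emptyset$ for $i\neq j$. The starting observation is that disjointness of names forbids any synchronisation between two distinct components of $o$: if $o_i\tr{\alpha}$ and $o_j\tr{\co\alpha}$ with $i\neq j$, then the name underlying $\alpha$ would lie in $\n(o_i)\cap\n(o_j)$. Hence in any computation of $o$ every transition --- be it $\tau$ or visible --- is produced by exactly one component; and since $\n(o_i)\subseteq I_i$ and every $I_i$ is closed under co-names, a visible action $\alpha\in I_i$ can be performed only by $o_i$ (for $j\neq i$ we have $I_j\cap I_i=\emptyset$, so $o_j$ cannot do $\alpha$).

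The technical heart is the following interleaving/decomposition property. Writing $w\cohide I_i$ for the subsequence of $w\in\actset^*$ formed by the letters lying in $I_i$, I claim
\[
 o\dtr{w} o' \quad\text{iff}\quad o'=\Pi_{i} o'_i \text{ for some }o'_i\text{ with }o_i\dtr{\,w\cohide I_i\,}o'_i\text{ for every }i .
\]
I would prove this by working with concrete computations $c_0\tr{\mu_0}\cdots\tr{\mu_{m-1}}c_m$ of $o$. By the observation above each label $\mu_k$ is localised in a unique component; projecting a concrete computation onto component $i$ (keeping the steps localised in $i$) yields, by a straightforward induction on $m$ using the $\|$-rules, a concrete computation of $o_i$ whose label sequence is precisely the $i$-localised subsequence of $\mu_0\cdots\mu_{m-1}$ and whose endpoint is the $i$-th factor of $c_m$; restricting attention to visible labels gives the left-to-right direction. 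Conversely, from concrete computations $\gamma_i$ of the $o_i$ whose visible traces are the $w\cohide I_i$, and from any $w$ with those projections, one assembles a concrete computation of $o$ with visible trace exactly $w$ by a greedy shuffle: scan $w$ left to right; to emit its next letter, which lies in some $I_i$, first perform the pending $\tau$-steps of $\gamma_i$ and then its next visible step (forced to be that letter, since $w\cohide I_i$ is the visible trace of $\gamma_i$); at the end perform all remaining $\tau$-steps. This gives the right-to-left direction.

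It remains to observe that trace equivalence preserves these projections: $s\tequiv t$ implies $s\cohide I_i=t\cohide I_i$ for every $i$. Since $\tequiv$ is the least congruence containing the swaps $u\alpha\beta v\tequiv u\beta\alpha v$ with $(\alpha,\beta)\in I_D$, and ``having equal $I_k$-projections for every $k$'' is itself a congruence (each $\cdot\cohide I_k$ is a monoid homomorphism $\actset^*\to I_k^*$), it suffices to check the invariance on one such swap; and $(\alpha,\beta)\in I_D$ forces $\alpha\in I_i$, $\beta\in I_j$ with $i\neq j$, so exchanging $\alpha$ and $\beta$ leaves, for every $k$, the order of the $I_k$-letters unchanged (for $k=i$ only $\alpha$ moves, past a letter outside $I_i$; symmetrically for $k=j$; for the other $k$ neither letter occurs). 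Chaining the decomposition property for $s$, this invariance, and the decomposition property for $t$, we conclude, for every $o'$,
\[
 o\dtr{s}o' \iff \big(o'=\Pi_i o'_i\wedge\forall i.\ o_i\dtr{\,s\cohide I_i\,}o'_i\big) \iff \big(o'=\Pi_i o'_i\wedge\forall i.\ o_i\dtr{\,t\cohide I_i\,}o'_i\big) \iff o\dtr{t}o',
\]
which is exactly the stated equivalence.

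I expect the main obstacle to be the decomposition property, and within it the bookkeeping of the shuffle: one has to be sure that the freely interleaved $\tau$-steps of the several components can always be scheduled so as to produce precisely the target string $w$, and that projection and re-assembly of concrete computations are mutually inverse up to the reached state. Intuitively this is just the slogan ``an uncoordinated observer behaves as the free interleaving of its sequential components'', but making the two inductions over the $\|$-rules fully precise is where the actual work lies; the no-synchronisation remark and the projection-invariance of $\tequiv$ are routine.
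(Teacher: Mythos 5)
Your proof is correct, but it takes a genuinely different route from the paper's. The paper argues by induction on $|s|$, invoking Levi's Lemma for Mazurkiewicz traces to split $s=vw$ and $t=xy$ into pieces $z_1,z_2,z_3,z_4$ with $(z_2,z_3)\in I_D$, commuting $z_2$ and $z_3$ via the observation that independent actions are performed by different components, and applying the inductive hypothesis to the shorter sub-reductions. You instead establish two global facts: (i) an \distributedobs\ observer behaves exactly as the free interleaving of its components, i.e.\ $o\dtr{w}o'$ iff $o'=\Pi_i o'_i$ with $o_i\dtr{w\cohide{I_i}}o'_i$ for all $i$ (using that disjointness of the $\n(o_i)$ rules out synchronisation and that $\n(o_i)\subseteq I_i$ localises each visible action to a unique component); and (ii) strings related by $\tequiv$ have identical projections onto every $I_i$, since each generating swap exchanges letters from distinct parts of the partition. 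The lemma follows by composing these two equivalences, with no induction over the trace-equivalence itself. Your route buys a stronger, reusable structural statement --- it is precisely the formal content of the paper's informal remark that ``the executions of a distributed observer are the interleavings of the executions of all sequential components,'' and it would yield the items of Lemma~\ref{aux-cor:commutative-observer-behavior} almost for free --- and it sidesteps the bookkeeping of the Levi-style induction, in particular the need to pick a decomposition with all four pieces nonempty so that the inductive hypothesis applies. The price is the two auxiliary inductions on concrete computations (projection and greedy shuffle), which you correctly identify as where the real work lies; both are routine given that the parallel rules preserve the product shape of configurations.
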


\begin{proof}
  The proof follows by induction on the length $|s|=|t|=n$.

  \begin{itemize}
  \item ${\bf n = 0,1}$. Immediate, because $s = t$.
  \item ${\bf n > 1}$. By the Levi's Lemma for traces~\cite[{Theorem
      1}]{mazurkiewicz1995introduction}, any possible choice of $v,w,x,y$ such
    that $s=vw$ and $t=xy$, implies that $v\tequiv z_1z_2$, $w\tequiv z_3z_4$,
    $x\tequiv z_1z_3$, $y\tequiv z_2z_4$ with $(z_2,z_3)\in I_D$. Consider a
    decomposition such that $|v|,|w|,|x|,|y|>0$ (this is always possible,
    because $n>1$). By inductive hypothesis on the reductions $o\dtr{v}o''$
    and $o'' \dtr{w} o'$, we have $o\dtr{v}o'' \dtr{w} o'$ iff
    $o\dtr{z_1}o_1\dtr{z_2}o_2\dtr{z_3}o_3\dtr{z_4}o'$. Since $z_2$ and $z_3$
    are independent they take part on different components of the observer and
    $o_1\dtr{z_2}o_2\dtr{z_3}o_3$ iff $o_1\dtr{z_3}o_2'\dtr{z_2}o_3$.
    Consequently, $o\dtr{v}\dtr{w} o'$ iff
    $o\dtr{z_1}o_1\dtr{z_2}o_2\dtr{z_3}o_3\dtr{z_4}o'$ iff
    $o\dtr{z_1} o_1\dtr{z_3}o_2'\dtr{z_2}o_3\dtr{z_4}o'$. By inductive
    hypothesis on reductions $o\dtr{z_1z_3} o_2'$ and $o_2'\dtr{z_2z_4}o'$ we
    have $o\dtr{z_1}o_1\dtr{z_2}o_2'\dtr{z_3}o_3\dtr{z_4}o'$ iff
    $o\dtr{x}\dtr{y} o'$.
    \qedhere
  \end{itemize}
\end{proof}

\begin{lem}\label{aux-cor:commutative-observer-behavior} Let
  $o = \Pi_{i \in 0..n} o_i$ be an \distributedobs\ observer over
  $\mathbb{I} = {\{I_i\}}_{i\in 0..n}$ and $D$ the dependency relation induced
  by $\mathbb{I}$. Then, $\forall s\in\actset^*, t\in\tclass s$,
  \begin{enumerate}
  \item\label{aux-lem:comm-observer-generates-whole-trace}
    $s\in\trc o$ implies $t\in\trc{o}$.
  \item\label{aux-lem:comm-observer-preserve-divergence}
    $o\Downarrow s$ implies $o\Downarrow t$.
  \item\label{aux-lem:comm-observer-refusal-sets}
    $(o \after s)\ \MUST\ L $ implies $(o \after t)\ \MUST\ L $.
  \item\label{aux-lem:comm-observer-unsuccessful-comp}
    If there exists an unsuccessful computation $o\dtr{s}$, then there
    exists an unsuccessful computation $o\dtr{t}$.
  \end{enumerate}
\end{lem}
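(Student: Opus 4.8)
The statement to prove is Lemma~\ref{aux-cor:commutative-observer-behavior}, a corollary of Lemma~\ref{aux-lem:commutative-observer-behavior}. The plan is to treat each of the four items by transporting the relevant property along the string equivalence $s \tequiv t$, using the "reach the same processes" equivalence $o\dtr{s}o' \iff o\dtr{t}o'$ already established. I would begin with the observation that since $t\in\tclass s$ we also have $s\in\tclass t$, so Lemma~\ref{aux-lem:commutative-observer-behavior} is symmetric in $s$ and $t$, and moreover every prefix relationship must be handled via the trace structure rather than naive string prefixes; I expect that to be the main subtlety throughout.

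For item~\eqref{aux-lem:comm-observer-generates-whole-trace}, if $s\in\trc o$ then $o\dtr{s}o'$ for some $o'$, and Lemma~\ref{aux-lem:commutative-observer-behavior} immediately gives $o\dtr{t}o'$, hence $t\in\trc o$. For item~\eqref{aux-lem:comm-observer-preserve-divergence}, $o\Downarrow s$ means $o\Downarrow$ and, along any prefix decomposition, the residuals converge; I would argue by induction on $|s|$, writing $t \tequiv t'\beta$ for a suitable last letter $\beta$ and using Levi's Lemma to match $t'\beta$ against a decomposition $s' \alpha$ of $s$ up to $\tequiv$, so that $o\after\tclass{t'} = o\after\tclass{s'}$ by the first lemma and the convergence of these residuals (together with their being finite-branching since each $o_i$ is a finite sequential term) transfers. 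This is the step I expect to require the most care: divergence is a property of $\tau$-reachability from the residual, so I must be sure that "same reachable processes after $s$" from Lemma~\ref{aux-lem:commutative-observer-behavior} genuinely forces identical divergence behaviour, which it does because $o\after s$ and $o\after t$ are literally the same set of processes.

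For item~\eqref{aux-lem:comm-observer-refusal-sets}, $(o\after s)\ \MUST\ L$ is a statement about the set of processes $o\after s$; since Lemma~\ref{aux-lem:commutative-observer-behavior} yields $o\after s = o\after t$ as \emph{sets} (every $o'$ reachable by $s$ is reachable by $t$ and vice versa), the $\MUST$ predicate — which quantifies over exactly this set — is preserved verbatim, giving $(o\after t)\ \MUST\ L$. Item~\eqref{aux-lem:comm-observer-unsuccessful-comp} is the only one not purely about residual sets: an unsuccessful computation $o\dtr{s}$ is a concrete run threading $\tau$'s and the letters of (some representative of) $s$, with no $\success$ along the way. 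Here I would again induct on $|s|$, peel off a last visible action using the Levi's-Lemma decomposition as in the proof of Lemma~\ref{aux-lem:commutative-observer-behavior}, and observe that swapping two independent adjacent actions $z_2, z_3$ (which act on different components $o_i, o_j$) does not create a $\success$: a component emits $\success$ independently of the other components' moves, so reordering $z_2$ and $z_3$ preserves the absence of $\success$ in the run; the $\tau$-segments in between are untouched. Assembling the reordered run gives an unsuccessful $o\dtr{t}$. In all four cases the real content is Lemma~\ref{aux-lem:commutative-observer-behavior} plus the fact that residuals are sets, and the only place genuine work remains is the prefix-by-prefix induction needed for divergence and for unsuccessful runs, where I must invoke Levi's Lemma to reconcile string prefixes of $s$ with string prefixes of $t$.
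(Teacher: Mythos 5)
The paper offers no proof of this lemma at all (it is invoked as an immediate corollary of Lemma~\ref{aux-lem:commutative-observer-behavior}), so the only question is whether your derivation is sound. Items~(\ref{aux-lem:comm-observer-generates-whole-trace}) and~(\ref{aux-lem:comm-observer-refusal-sets}) are correct and surely the intended argument: Lemma~\ref{aux-lem:commutative-observer-behavior} gives $o\after s = o\after t$ as sets, and both membership in $\trc o$ and the $\MUST$ predicate depend only on that set. For item~(\ref{aux-lem:comm-observer-preserve-divergence}) your plan is right in spirit, but the step ``match $t'\beta$ against $s'\alpha$ so that $o\after\tclass{t'} = o\after\tclass{s'}$'' does not go through as written: a prefix of $t$ need not be trace-equivalent to any prefix of $s$ (take $s=ab$, $t=ba$ with $a,b$ independent; the prefix $b$ of $t$ is equivalent to no prefix of $s$), so Lemma~\ref{aux-lem:commutative-observer-behavior} alone does not reach the intermediate residuals. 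The clean repair uses the component structure you only mention in passing: $o\dtr{v}o'$ iff $o'=\Pi_i o_i'$ with $o_i\dtr{v\cohide{I_i}}o_i'$, and $o'$ diverges iff some $o_i'$ does; since $v\cohide{I_i}$ is a prefix of $t\cohide{I_i}=s\cohide{I_i}$ for every prefix $v$ of $t$, every component residual along $t$ already occurs along some prefix of $s$, and convergence transfers.

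The genuine gap is item~(\ref{aux-lem:comm-observer-unsuccessful-comp}). Your justification --- ``a component emits $\success$ independently of the other components' moves, so reordering $z_2$ and $z_3$ preserves the absence of $\success$'' --- is a non sequitur: whether a \emph{global} state enables $\success$ is the conjunction of the component predicates evaluated at one simultaneous snapshot, and reordering independent actions changes which snapshots the run visits. Concretely, take $o = (\one + a.\zero)\ \|\ b.\one$ with $a\in I_0$ and $b\in I_1$. The computation $o\tr{a}\zero\ \|\ b.\one\tr{b}\zero\ \|\ \one$ is unsuccessful, since $b.\one\not\tr{\success}$ and $\zero\not\tr{\success}$ kill every state on it; yet every computation $o\dtr{ba}$ must pass through $(\one+a.\zero)\ \|\ \one$, in which both components enable $\success$. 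So swapping two independent adjacent actions can create a successful intermediate state, and your argument breaks exactly at the swapped position; indeed this example shows that item~(\ref{aux-lem:comm-observer-unsuccessful-comp}) as stated is doubtful for arbitrary \distributedobs\ observers, and any proof of it must either restrict the shape of the observers (e.g.\ to those actually constructed in Lemma~\ref{lem-aux:conv-traces-swap} and Theorem~\ref{th:comm-characterisation}) or supply an argument that genuinely controls the new intermediate states, neither of which your proposal does.
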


\begin{proof}
  All items follow from Lemma~\ref{aux-lem:commutative-observer-behavior}. We
  illustrate (2). By contradiction. Assume $o\Downarrow s$ and $o\Uparrow t$.
  Then, there exist $t_1,t_2$ such that $t =t_1t_2$, $o \dtr{t_1} o'$ and
  $o'\Uparrow$. Without loss of generality, we assume that $o \dtr{t_1} o'$ is
  minimal, i.e., $o=o_0 \tr{\mu_1} o_1 \ldots \tr{\mu_{k}} o_k = o'$ with
  $t_1 = \mu_1\ldots\mu_k$ and $\forall i < k.o_i \Downarrow$. Since
  $o' = \Pi_{i \in 0..n} r_i$, there exists $h \in{0..n}$ s.t. $r_h\Uparrow$.
  Then, take $s_1$ and $s_2$ such that $s = s_1s_2$ and
  $s_1\cohide I_h = t_1\cohide I_h$. By the Levi's Lemma for
  traces~\cite[{Theorem 1}]{mazurkiewicz1995introduction}, we have that
  $t_1 \tequiv z_1z_2$, $t_2 \tequiv z_3z_4$, $s_1 \tequiv z_1z_3$,
  $s_2 \tequiv z_2z_4$ with $(z_2,z_3)\in I_D$. Since $(z_2,z_3)\in I_D$, we
  have $o \dtr{t_1}\dtr{z_3} o''$ and $o''\Uparrow$.
  By Lemma~\ref{aux-lem:commutative-observer-behavior},
  $o \dtr{s_1}\dtr{z_2} o''$, which contradicts the hypothesis that
  $o\Downarrow s$.
\end{proof}

The alternative characterisation for the \distributed\ preorder follows along
the lines of the one for the classical must testing preorder, when
Mazurkiewicz traces are considered instead of strings of actions.
For this reason, we extend the notions of transition relation, convergence and
residuals to Mazurkiewicz traces.

We now focus on the definition of the transition relation, which is
instrumental for the next definitions. We first note that, differently
from centralised must testing preorder, string inclusion may
not hold for the \distributed\ preorders. For
instance, take $p = \tau.a.\zero + \tau.b.\zero$ and $q = a.b.\zero$
over the interface $\mathbb{I}=\{\{a\}, \{b\}\}$.
Note that $p \dmustleq{\mathbb{I}} q$ because any \distributed\ observer that
passes $p$ is happy regardless of whether $a$ and $b$ are executed. Moreover,
such observer would be unable to detect if both $a$ and $b$ are executed
because those actions take place over different parts of the interface. Hence,
$\trc{q}\not\subseteq\trc{p}$ because $a.b\in\trc q$ but $a.b\not\in\trc p$
despite $p \dmustleq{\mathbb{I}} q$.
Our notion of reduction w.r.t a Mazurkiewicz trace will account for such
mismatch and, e.g., $p \dtr{\tclass{a.b}} \zero$ will hold.
Intuitively, the relation $\dtr {\tclass s}$ accounts for reductions in which
$s$ may be partially executed over some of the parts of the interface (but
complete in at least one of them).

We write $s <_D t$ iff $t\in \tclass{ss'}$ for some $s'\neq \epsilon$, i.e.,
$t$ extends $s$ (up-to the swapping of independent actions), and $s\leq_D t$
stands for $s <_D t$ or $s \equiv_D t$.
Then, the set of maximal reductions of $p$ within a trace $\tclass s$ is
\[
  \pref p {\tclass s}
  = \max_{<_D}\{t \ |\ t\in \trc p, t\leq_D s\}
\]
where $\max_{<_D}$ denotes the maximal elements of a set according to the
order ${<_D}$.
If $t\in \pref p {\tclass s}$ then $t$ cannot be extended with any symbol $a$
such that $t'\equiv_D ta\in \pref p {\tclass s}$ and $p\dtr {t'}$.

The following last ingredient allows us to ensure that different maximal
prefixes in $\pref p {\tclass s}$ actually execute the actions in $\tclass s$
(although some prefixes can be partial).
We recall that $\cohide$ stands for the operation that projects a string over
an alphabet.
Let $\mathbb{I}$ be the interface that induces the dependency relation $D$, we
write $\tsplit {\pref p {\tclass s}} {}$ if $\pref p {\tclass s}$
\emph{jointly-completes} ${\tclass s}$, which is defined by
\[
  \tsplit {\pref p {\tclass s}} {} \quad
  \iff \quad
  \forall I \in\mathbb{I}.\exists t\in \pref p {\tclass s}. s\cohide I = t\cohide
\]

\begin{exa}
  Consider the processes $p = \tau.a.\zero + \tau.b.\zero$ and $q = a.b.\zero$
  and the interface $\mathbb{I}=\{\{a\}, \{b\}\}$. Let $D$ be the dependency
  relation induced by $\mathbb{I}$. Then, we have
  \[
    \begin{array}{l@{\ = \ }l}
      \trc p & \{\epsilon,a,b\}\\
      \trc q & \{\epsilon,a,b,a.b\}
    \end{array}
  \]

  The restriction of $<_D$ to the elements in $\trc q$ is
  \[
    \{ (\epsilon, a), (\epsilon,b), (a, a.b), (b, a.b) \}
  \]

  Note that $b <_D ab$ because $b$ can be extended with $a$ and
  $ab \in \tclass{ba}$. Then,
  \[
    \begin{array}{l@{\ = \ }l}
      \pref {p} {\tclass {a.b}}
      & \max_{<_D}\{t \ |\ t\in\trc p, t \leq_D a.b\} =
        \max_{<_D}\{\epsilon,a,b\} = \{a,b\}\\
      \pref {q} {\tclass {a.b}}
      & \max_{<_D}\{t \ |\ t\in\trc q, t \leq_D a.b\} =
        \max_{<_D}\{\epsilon,a,b,a.b\} = \{a.b\}
    \end{array}
  \]
  Both cases jointly-completes $\tclass {a.b}$, i.e.,
  $\tsplit{\pref {p} {\tclass {a.b}}}{}$ and
  $\tsplit{\pref {q} {\tclass {a.b}}}{}$ do hold.

  On the contrary, if we consider $r = a.\zero$ we have that
  $\pref {r} {\tclass {a.b}} = \{a\}$, which does not jointly-complete
  $\tclass {a.b}$, because none of the strings in $\pref {r} {\tclass {a.b}}$
  matches $b$. \qed%
\end{exa}

Let $D$ be the dependency relation induced by the interface $\mathbb{I}$. We
let
\begin{itemize}
\item $p\dtr{\tclass s} p'$ if and only if
  $\tsplit{\mpref p {\tclass s}}{\tclass s}$ and
  $\exists s'\in\mpref p {\tclass s}$ {such that} $p \dtr {s'} {p'}$.
\item
  $p\Downarrow \tclass{s} \mbox{ if } \forall s'\in \tclass{s} \mbox{ then } p
  \Downarrow s'$.
\item $(p \after {\tclass s}) = \{p'\ |\ p\dtr{{\tclass s}} p'\}$.
\end{itemize}

\noindent
We adopt usual conventions for abbreviating notation when dealing with
transition relations and write, e.g., $p\dtr{\tclass s}$ instead of
\emph{there exists $p'$ such that} $p\dtr{\tclass s} p'$. Analogously,
$p \not\dtr{\tclass s}$ stands for \emph{there does not exist $p'$ such that}
$p\dtr{\tclass s} p'$.

In the definition below, the condition $L\subseteq I$ with $I\in\mathbb{I}$
captures the idea that each observation is relative to a specific part of the
interface.

\begin{defi}\label{def-charact-dmust} Let $\mathbb{I}$ be an interface and
  $D$ the dependency relation induced by $\mathbb{I}$. Then, $p\btmustleq q$
  if for every $s\in\actset^*$, for any part $I\in\mathbb{I}$, for all finite
  $L\subseteq I$, if $p\Downarrow \tclass s$ then
  \begin{enumerate}
  \item $q\Downarrow \tclass s$.
  \item $(p \after {\tclass s})\ \MUST\ L$ implies
    $(q \after {\tclass s})\ \MUST\ L$.
  \end{enumerate}
\end{defi}

\noindent
The following three lemmata are instrumental to the proof of the
correspondence theorem and characterise the relation between the Mazurkiewicz
traces of related processes.

\begin{lem}\label{lem-aux:conv-traces-swap} Let $\mathbb{I}$ be an interface
  and $D$ the dependency relation induced by $\mathbb{I}$. If
  $p\dmustleq {\mathbb I} q$ then for all $s\in\actset^*$ we have that
  $p\Downarrow\tclass{ s}$ implies
  \begin{enumerate}
  \item
    $q\Downarrow\tclass{ s}$.
  \item\label{lem-aux:trace-implication} $s\in\trc q$ implies that
    $p \dtr{\tclass s}$.
  \end{enumerate}
\end{lem}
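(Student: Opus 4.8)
The plan is to prove both items by contraposition: assuming the displayed conclusion fails, I would exhibit a distributed observer $o=\Pi_i o_i$ with $\n(o_i)\subseteq I_i$ that $p$ must‑passes but $q$ does not, contradicting $p\dmustleq{\mathbb I}q$. Throughout I write $u_i$ for the subsequence of a string $u$ whose symbols lie in $I_i$; since $D=\bigcup_i I_i\times I_i$, two symbols are $D$‑independent exactly when they sit in different parts, so $\tclass{s}$ is precisely the set of interleavings of $s_0,\ldots,s_n$, and a thread built only from $\tau$, $\one$, $\zero$ and the co‑names of the symbols occurring in $u_i$ has its names inside $I_i$ because parts are closed under complement. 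I would also use Lemma~\ref{aux-cor:commutative-observer-behavior}, which lets a distributed observer be driven along whichever interleaving of $s$ is convenient.

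For item~1, suppose $p\Downarrow\tclass{s}$ but $q\not\Downarrow\tclass{s}$, and fix $s'\in\tclass{s}$ with $q\not\Downarrow s'$; then $q\dtr{v}q'$ for some prefix $v$ of $s'$ with $q'\Uparrow$. Writing $s'_i=\beta^i_1\cdots\beta^i_{m_i}$, I would take $o_i=\tau.\one+\co{\beta^i_1}.(\tau.\one+\cdots+\co{\beta^i_{m_i}}.\one)$, a thread consuming the $i$‑th projection of $s'$ step by step, with a $\tau.\one$ escape before every symbol and $\one$ at the end. Every derivative of every $o_i$ can silently reach $\one$, so the composite observer can always silently reach $\Pi_i\one\tr{\success}$, and every $\tau$‑stable state has all threads at $\one$; together with $p\Downarrow\tclass{s}$, which forbids $p$ from diverging along any interleaving prefix of $s$ (the only visible behaviour the threads permit it), this makes every maximal computation of $p\ \|\ o$ successful, i.e.\ $p\ \must\ o$. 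Feeding $v$ to the threads (possible since $v_i$ is a prefix of $s'_i$) while $q$ performs $v$ reaches $q'\ \|\ \Pi_i o_i^{v}$, and the infinite $\tau$‑computation in which $q'$ diverges and no thread fires its escape is maximal and never successful, so $q\ \notmust\ o$ — the required contradiction.

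For item~2, suppose $s\in\trc q$ (so $q\Downarrow\tclass{s}$ by item~1) while $s'\notin\trc p$ for every $s'\in\tclass{s}$. The obvious candidate is the observer of item~1 for $s$ itself but with the thread handling the last symbol of $s$ ending in $\zero$ instead of $\one$: driving $q$ along $s$ makes every thread consume its projection and the last one reach $\zero$, after which no successful state is reachable, so $q\ \notmust\ o$. The hard part is that this need not give $p\ \must\ o$: even when $p$ performs no interleaving of $s$, it may still perform a single projection $s_j$, or a "false start" in the $\zero$‑ending part, driving that thread to $\zero$ while the other threads settle to $\one$, which leaves a maximal unsuccessful computation of $p\ \|\ o$. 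I expect the fix to be to let each thread, after consuming its part of $s$, keep following exactly the $I_i$‑actions that $p$ can still perform in that part's local view: using $p\Downarrow\tclass{s}$ and the finiteness of the tree of interleaving prefixes of $s$, a König‑style argument yields an interleaving prefix $u$ and a $\tau$‑stable $p\dtr{u}p^{*}$ after which $p^{*}$ cannot produce the next symbol of $s_i$ for any not‑yet‑exhausted part $i$; one then designs $o_i$ to consume $\co{u_i}$ and afterwards to offer only co‑names of $I_i$‑actions that $p^{*}$ and its residuals can still do, steering into the $\zero$‑region exactly the continuations $q$ is forced to take (it performs all of $s$, hence the blocked next symbols). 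With such an $o$ every maximal computation of $p\ \|\ o$ keeps each thread on a branch still reaching $\one$, so $p\ \must\ o$, while driving $q$ along $s$ still yields a maximal unsuccessful computation, so $q\ \notmust\ o$. Verifying these two must‑conditions for the refined observer — especially checking that $p$'s spurious behaviours never trap a thread in the $\zero$‑region — is the main obstacle I foresee.
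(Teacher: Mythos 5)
Your treatment of item~1 is essentially the paper's own argument: the same escape-hatch observer $o_i=\tau.\one+\co{b_1}.(\tau.\one+\cdots+\co{b_k}.\one)$ built from the projections of $s$, with the same reasoning for $p\ \must\ o$ and $q\ \notmust\ o$. One boundary case is glossed over (by you and by the paper alike): if $q$'s divergence occurs only \emph{after} performing all of $s'$, every thread has already reached $\one$, the composite observer can report $\success$, and the starving computation you describe is in fact successful. The standard repair is to end each thread in $b_k.\tau.\one$ rather than $b_k.\one$, so that the observer is not yet success-capable once the whole trace has been consumed and the process's divergence can starve the final $\tau$'s; with that change your argument for item~1 goes through.

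For item~2 the situation is more interesting. The ``obvious candidate'' that you consider and then reject --- all threads as in item~1 except that the thread owning the last symbol $a_n$ of $s$ ends in $\zero$ --- is exactly the observer the paper uses, and the paper simply asserts $p\ \must\ o$ for it. Your objection is correct and can be made concrete: take $\mathbb{I}=\{\{a\},\{b\}\}$, $s=ab$, $q=a.b$ and $p=b$; the hypotheses of the contradiction hold (neither $ab$ nor $ba$ is a trace of $p$), yet $p\ \|\ (\tau.\one+\co{a}.\one)\ \|\ (\tau.\one+\co{b}.\zero)$ has an unsuccessful maximal computation in which $p$ synchronises on $b$, driving that thread to $\zero$ while the $a$-thread escapes to $\one$. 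So you have put your finger on a gap in the published proof, not merely in your own. Unfortunately your proposed repair (the K\"onig-style choice of a blocking interleaving prefix $u$ and threads that afterwards ``offer only what $p$ can still do'') is only a sketch, and, as you yourself note, the two must-conditions for the refined observer are not verified; a single observer must defeat \emph{all} of $p$'s nondeterministic branches simultaneously, and the observers that actually work in concrete instances (e.g.\ $(\tau.\one+\co{a}.\zero)\ \|\ \one$ for the instance above, which $p$ passes and $q$ fails) do not have the shape your sketch produces. So item~2 of your proposal remains a genuine gap --- but it is the same gap that the paper's one-line justification leaves open, and closing it requires a more careful observer construction than either text currently supplies.
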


\begin{proof} Assume $\mathbb{I} = {\{I_i\}}_{i\in 0 \ldots n}$ and $D$ the
  induced dependency relation.
  \begin{enumerate}
  \item By contradiction. Suppose there exists $s=a_1\ldots a_m$ such that
    $p\Downarrow \tclass s$ and $q\Uparrow \tclass s$. Then, take the observer
    $o = \Pi_{i \in 0, \ldots n} o_i$ with $o_i$ defined as follows
    \[ o_i = \tau.\one + \co{b^i_1}.(\tau.\one + \ldots (\tau.\one +
      \co{b^i_{k_i}}.\tau.\one)\ldots) \ \mbox{with}\ s\cohide {I_i} =
      b^i_1\ldots b^i_{k_i}
    \]

    For each maximal computation of $p \ \|\ o$, we proceed by unzipping the
    computation to conclude that $o \dtr{\co{t}}$ and $p\dtr{t}$ for some $t$.
    Note that $o\Downarrow t$ for all $t\in\actset^*$ and $o\dtr{\co{t}}$
    implies that there exists $t'$ such that $\co{t}t'\in\tclass{\co s}$.
    Since $p\Downarrow\tclass{ s}$, we have $p\Downarrow{t}$. Consequently,
    every maximal computation of $p \ \|\ o$ is finite, i.e.,
    $p \ \|\ o \dtr{}  p' \ \|\ o' \not\tr{}$. By induction on the length of
    $t$, it follows that $o \dtr{t} o'\not\tr{}$ implies
    $o' = \Pi_{i \in 0, \ldots n} \one$. Consequently, every maximal
    computation of $p \ \|\ o$ is observer-successful and $p\ \must\ o$.

    Since $q\Uparrow \tclass s$ there exists $tt'\in\tclass s$ such that
    $q\dtr{t}q_0$ and $q_0\Uparrow$. By induction on the length of $\co{t}$ it
    follows that there exists $o'$ such that $o\dtr{\co{t}}o'$ and
    $o' = \Pi_{i \in 0, \ldots n} o_i''$ where $o_i''=\tau.\one$ or
    $o_i''= (\tau.\one + a.(\ldots))$ for all $i$. Then, there exists a
    maximal (divergent) observer-unsuccessful computation
    $q \ \|\ o \tr{} q_0 \ \|\ o' \tr{} q_1 \ \|\ o' \tr{} \ldots \tr{} q_j \
    \|\ o' \tr{}\ldots$. Consequently, $q\notmust\ o$, which contradicts the
    assumption $p\dmustleq {\mathbb I} q$.

  \item By contradiction. Suppose that there exists $s=a_1\ldots a_m$ such
    that $p\Downarrow \tclass s$, $s\in\trc q$ and $p \not\dtr{\tclass s}$.
    From $p\not\dtr{\tclass s}$, either (i)
    $\tsplit{(\mpref p {\tclass s})}{\tclass s}$ does not hold; or (ii)
    $\not\exists s'\in{\mpref p {\tclass s}}$ {such that} $p \dtr {s'} {}$.
    Note that (ii) is impossible: because $\epsilon \in\trc{p}$ for all $p$
    and $\epsilon \leq_D s$ for all $s$ and $D$. Consequently, for all $p, s$
    and $D$ we have $\{t \ |\ t\in \trc p, t\leq_D s\}\neq \emptyset$ and
    $\mpref p {\tclass s} \neq \emptyset$. By the definition of
    $\mpref p {\tclass s}$, $s'\in\mpref p {\tclass s}$ implies
    $s'\in\trc{p}$, and therefore $p \dtr {s'} {}\!\!$, which contradicts
    (ii). Then, (i) holds. Consequently,
    \begin{equation}
      \exists I_j\in\mathbb{I}.\forall t\in(\mpref p {\tclass
        s}).s\cohide I_j \neq t\cohide I_j
    \end{equation}

    Then, choose the observer $o = \Pi_{i \in 0, \ldots n} o_i$ with $o_i$
    defined as follows
    \[
      \begin{array}{l@{\ =\ }l@{\qquad}l}
        o_i
        &
          \tau.\one
          + \co{b^i_1}.(\tau.\one +  \ldots (\tau.\one + \co{b^i_{k_i}}.\one)\ldots)
        & \mbox{ if } b^i_{k_i} \neq a_m
        \\
        o_i
        &
          \tau.\one
          + \co{b^i_1}.(\tau.\one +  \ldots (\tau.\one + \co{b^i_{k_i}}.\zero)\ldots)
        & \mbox{ if } b^i_{k_i} = a_m
      \end{array}
    \]
    with $s\cohide {I_i} = b^i_1\ldots b^i_{k_i}$.

    Since $s\in\trc q$, $q\dtr{s}q'$. By (1) above, $p\Downarrow \tclass s$
    implies $q\Downarrow{\tclass s}$. Consequently, $q'\Downarrow$. Then,
    there is a computation $q\dtr{s}q'\dtr{} q'' \not\tr{}$. Moreover, we can
    build an unsuccessful computation of
    $o\dtr{\co s} o'' = \Pi_{i \in 0, \ldots n} o_i''\not\tr{}$ where
    $o_j'' = \zero$ for $j\in 0,\ldots,n$ and $b^j_{k_j} = a_n$. By zipping
    the computations $q\dtr{s}q''$ and $o\dtr{\co s} o''$, we obtain a maximal
    computation of $q\ \|\ o$ that is observer-unsuccessful. Consequently,
    $q\ \notmust\ o$.

    Take a maximal computation of $p\ \|\ o \dtr{}$. By unzipping it,
    $p\dtr{t} p'$ and $o\dtr{\co{t} }o'$. By construction of $o$, $o\dtr{s}$.
    By Lemma~\ref{aux-lem:commutative-observer-behavior}, $o\dtr{r}$ iff
    $r\in\tclass s$. Then, $o\dtr{\co t}o'$ implies that there exists $t'$
    such that $\co{t}t'\in\tclass {\co s}$. From $p\Downarrow \tclass s$, we
    have $p'\Downarrow$ and, consequently, $p'\not\tr{}$. Also
    $o\Downarrow {\co t}$ holds because $o$ is finite; moreover,
    $o'\not\tr{}$. Hence, $p\ \|\ o \dtr{} p'\ \|\ o'$ is a finite maximal
    computation.

    By assumption, $\tclass s \cap \trc p =\emptyset$ holds (otherwise,
    $\tsplit{(\mpref p {\tclass s})}{\tclass s}$ should hold). Therefore
    $t\not\in\tclass{s}$. Then, $t$ is a prefix of a string in $\tclass s$,
    i.e., there exists $a$ and $t''$ such that $tt''a \in\tclass{s}$ and for
    all $p'$ if $p\dtr{t} p'$ then $p'\not\dtr{\tclass{t''a}}$. Without lost
    of generality we assume $a = a_n$ (otherwise, the definition of $o$ can be
    changed accordingly). By construction, $o \dtr{\co{t}} o'$ implies
    $o' = \Pi_{i \in 0, \ldots n} o'_i$ and $o'\dtr{\tclass{t''a}}$ where for
    any $i$ either (a) $o_i' = \tau.1 + \co{b}.(\ldots)$ or (b) $o_i' = \one$.
    Case (a) is not possible, because we assume $o'\not\tr{}$. For case (b),
    we proceed by zipping the computations $p\dtr{t'}p''\not\tr{}$ and
    $o \dtr{\co{t'}} o'\not\tr{}$, which is observer-successful.
    Consequently, every maximal computation of $p\ \|\ o$ is
    observer-successful and $p\ \must\ o$, which is in contradiction with the
    assumption $p\dmustleq {\mathbb I} q$.
    \qedhere
  \end{enumerate}
\end{proof}

\begin{lem}\label{lem-aux:must-empty}
  If $(p \after {\tclass s})\ \NOTMUST\ L$ for some finite
  $L\subseteq\actset$, then $p\dtr{\tclass s}$.
\end{lem}

\begin{proof}
  Suppose $p\not\dtr{\tclass s}$. Then $(p \after {\tclass s}) =\emptyset$
  and, by definition, $\emptyset\ \MUST\ L$ for every finite
  $L\subseteq \actset$.
\end{proof}

\begin{lem}\label{lem-aux:trace-implication-rec}
  If $p \btmustleq q$, $s\in\trc q$ and $p\Downarrow\tclass s$ then
  $p\dtr{\tclass s}$.
\end{lem}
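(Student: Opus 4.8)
The statement to prove is: if $p \btmustleq q$, $s \in \trc q$ and $p \Downarrow \tclass s$, then there exists $s' \in \tclass s$ with $s' \in \trc p$. Note this is almost item~\eqref{lem-aux:trace-implication} of Lemma~\ref{lem-aux:conv-traces-swap}, but phrased in terms of the trace-based preorder $\btmustleq$ rather than the observer-based preorder $\dmustleq{\mathbb I}$. So the plan is \emph{not} to re-run the observer construction, but to argue directly from Definition~\ref{def-charact-dmust} together with the auxiliary Lemmata~\ref{lem-aux:must-empty} and the convergence-transfer part of Definition~\ref{def-charact-dmust}.

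First I would argue by contradiction: assume that $t \notin \trc p$ for every $t \in \tclass s$. Then $(p \after \tclass s) = \emptyset$, and by the convention that $\emptyset \MUST L$ holds vacuously, we have $(p \after \tclass s)\ \MUST\ L$ for every finite $L$ — in particular for $L = \emptyset$ (or any $L \subseteq I$ for a chosen part $I \in \mathbb{I}$). The key point is that $p \Downarrow \tclass s$ holds by hypothesis, so the premise of Definition~\ref{def-charact-dmust} is satisfied, and therefore $p \btmustleq q$ forces both $q \Downarrow \tclass s$ and $(q \after \tclass s)\ \MUST\ L$. Taking $L = \emptyset$ gives $(q \after \tclass s)\ \MUST\ \emptyset$, which by Definition~\ref{def:must-set} means that every $q'$ with $q \dtr{\tclass s} q'$ satisfies $q' \dtr{\alpha}$ for some $\alpha \in \emptyset$ — an impossibility unless there is no such $q'$ at all. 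Hence $(q \after \tclass s) = \emptyset$, i.e. no $t \in \tclass s$ lies in $\trc q$. But this contradicts the hypothesis $s \in \trc q$ (since $s \in \tclass s$).

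The main subtlety — and the step I would be most careful about — is the interplay between the quantification over parts $I \in \mathbb{I}$ and the choice $L = \emptyset$: Definition~\ref{def-charact-dmust} only guarantees the must-set transfer for $L \subseteq I$ with $I \in \mathbb{I}$, so I must observe that $\emptyset \subseteq I$ for every part $I$, hence the instance with $L = \emptyset$ is legitimately covered. Alternatively, if one prefers to avoid $L=\emptyset$ entirely, one can pick any single action $\alpha$, let $I$ be the part of $\mathbb{I}$ containing $\alpha$ and $\co\alpha$, take $L = \{\alpha\}$: the contradiction then is that $(q \after \tclass s)$ is nonempty (it contains some residual after $s$) yet every such residual must enable $\alpha$, which need not hold; the cleaner route is really $L=\emptyset$. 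Everything else is a direct unwinding of the definitions, with Lemma~\ref{lem-aux:must-empty} available as the packaged form of the "$(p\after\tclass s)=\emptyset \implies$ vacuous $\MUST$" observation if one wishes to cite it rather than re-derive it inline.
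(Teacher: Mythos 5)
Your proof is correct and follows the same overall strategy as the paper's: assume no $t\in\tclass s$ is a trace of $p$, conclude $(p\after\tclass s)=\emptyset$ so that the $\MUST$ predicate holds vacuously, push this through the must-set clause of $\btmustleq$ (legitimate because $p\Downarrow\tclass s$ is a hypothesis), and contradict $s\in\trc q$. The one place you genuinely diverge is the choice of witness set $L$. The paper instead uses $L=\{a\}$ for an action $a$ that no residual of $q$ after $\tclass s$ can perform; to know such an $a$ exists it must first argue that $\bigcup\{\init{q'}\mid q\dtr{\tclass s}q'\}$ is finite (via $q\Downarrow\tclass s$, itself obtained from the first clause of $\btmustleq$). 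Your choice $L=\emptyset$ short-circuits that finiteness argument entirely: no nonempty set of processes can satisfy $\MUST\ \emptyset$, so $(q\after\tclass s)\ \MUST\ \emptyset$ forces $(q\after\tclass s)=\emptyset$, directly contradicting $s\in\trc q$. This is a genuine economy, and you correctly flag the only point that needs checking, namely that $\emptyset\subseteq I$ is an admissible instance of ``finite $L\subseteq I$'' in Definition~\ref{def-charact-dmust}, which it is under the paper's definitions. The only thing the paper's singleton argument buys in exchange is robustness: in presentations where must-sets (or acceptance sets) are required to be nonempty, your instance would be ruled out while the fresh-action argument survives. You are also right that your fallback with an arbitrary singleton $\{\alpha\}$ does not work as stated --- every residual might well enable $\alpha$ --- which is exactly why the paper needs the finiteness step to pick $\alpha$ outside the enabled actions.
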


\begin{proof}
  Assume that $p\not\dtr{\tclass s}$. Then,
  $(p\after {\tclass s}) = \emptyset$. Hence,
  $(p \after {\tclass s})\ \MUST\ L$ for every finite $L\subseteq A$. Since
  $p\Downarrow \tclass{s}$ and $p \btmustleq q$, $q\Downarrow \tclass{s}$. By
  straightforward induction on the length of the reduction, we can show that
  $q \dtr{t} q'$ implies $\n(q') \subseteq \n(q)$ for all $t$ (i.e., the names
  $\n(q')$ of $q'$ are included in the names of $q$). Moreover,
  $\init q \subseteq \n(q)$ trivially holds. Consequently,
  $\bigcup\{\init {q'}\ |\ q\dtr{\tclass s} q'\} \subseteq \n(q)$. Since,
  $\n(q)$ is finite, we can conclude that the set
  $\bigcup\{\init {q'}\ |\ q\dtr{\tclass s} q'\}$ is finite. Therefore, we can
  find an action $a$ such that $q\not\dtr{s a}$. Then
  $(q \after {\tclass s})\ \NOTMUST\ \{a\}$ while
  $(p \after {\tclass s})\ \MUST\ \{a\}$, which contradicts the hypothesis
  $p \btmustleq q$.
\end{proof}

\medskip
\begin{thm}\label{th:comm-characterisation}
  ${\dmustleq {\mathbb{I}}} = {\btmustleq}$.
\end{thm}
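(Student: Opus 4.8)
The plan is to prove the equality by the two inclusions $\dmustleq{\mathbb{I}}\subseteq\btmustleq$ and $\btmustleq\subseteq\dmustleq{\mathbb{I}}$, following the shape of the classical proof of $\mustleq=\bmustleq$ but systematically replacing strings by traces $\tclass s$ and using Lemmas~\ref{aux-lem:commutative-observer-behavior} and~\ref{aux-cor:commutative-observer-behavior} to slide an \distributedobs\ observer between representatives of a trace. For $\dmustleq{\mathbb{I}}\subseteq\btmustleq$ I would assume $p\dmustleq{\mathbb{I}}q$ and check the two clauses of Definition~\ref{def-charact-dmust} for a fixed $s$, part $I\in\mathbb{I}$ and finite $L\subseteq I$ with $p\Downarrow\tclass s$. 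Clause~(1) is exactly Lemma~\ref{lem-aux:conv-traces-swap}(1). For clause~(2) I argue by contraposition: supposing $(p\after\tclass s)\ \MUST\ L$ and $(q\after\tclass s)\ \NOTMUST\ L$, the degenerate case $(p\after\tclass s)=\emptyset$ is handled by Lemma~\ref{lem-aux:conv-traces-swap}(2) (it forces $s\notin\trc q$, so $(q\after\tclass s)=\emptyset$ and trivially $\MUST\ L$); otherwise I build the \distributedobs\ observer $o=\Pi_{i\in 0..n}o_i$ whose component $o_i$ replays $\co{s\cohide{I_i}}$ with a $\tau.\one$ escape before every step, the component over the part $I$ ending its replay with the challenge $\sum_{\alpha\in L}\co\alpha.\one$ and the others with $\one$. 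Using $p\Downarrow\tclass s$, $(p\after\tclass s)\ \MUST\ L$ and Lemma~\ref{aux-cor:commutative-observer-behavior} --- so that, however $p$ interleaves its actions, the component-wise absorptions recombine into a representative of $\tclass s$ --- I would show $p\ \must\ o$, while routing $o$ along $\co{s'}$ for the $s'\in\tclass s$ reaching the $(q\after\tclass s)$-witness of $\NOTMUST\ L$, never taking an escape (and using $q\Downarrow\tclass s$ together with Lemma~\ref{aux-cor:commutative-observer-behavior}(1)), yields an unsuccessful deadlock with the $I$-component stuck at $\sum_{\alpha\in L}\co\alpha.\one\not\tr{\success}$; hence $q\notmust\ o$, contradicting $p\dmustleq{\mathbb{I}}q$.

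For $\btmustleq\subseteq\dmustleq{\mathbb{I}}$ I would assume $p\btmustleq q$, take an arbitrary \distributedobs\ observer $o=\Pi_{i\in 0..n}o_i$ with $p\ \must\ o$, and prove $q\ \must\ o$ by contraposition. From a maximal unsuccessful computation $\gamma$ of $q\ \|\ o$ --- along which, by definition, the observer never enables $\success$ --- with visible trace $s$, I split on the shape of $\gamma$: (a)~$s$ infinite; (b)~$s$ finite and $\gamma$ diverging afterwards, either because $q$ diverges (so $q\Uparrow\tclass s$) or because some component of $o$ diverges; (c)~$\gamma$ finite, deadlocking at $q'\ \|\ o'$ with $o'\not\tr{\success}$. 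In each case I would produce a maximal unsuccessful computation of $p\ \|\ o$, contradicting $p\ \must\ o$. When $q\Uparrow\tclass s$, clause~(1) of $\btmustleq$ gives $p\Uparrow\tclass s$, and I drive $p$ into a divergence along a prefix of some $s^*\in\tclass s$ while routing $o$ along $\co{s^*}$ through a success-free path (Lemmas~\ref{aux-lem:commutative-observer-behavior} and~\ref{aux-cor:commutative-observer-behavior}(4)) and then freeze the observer. When $p\Downarrow\tclass s$, Lemma~\ref{lem-aux:trace-implication-rec} (the must-set clause with $L=\emptyset$, via Lemma~\ref{lem-aux:must-empty}) lets $p$ follow some $s^*\in\tclass s$, after which a divergence of a component of $o$ can be replayed with $p$ frozen; and in case~(c) I pick a non-successful component $o_{i_0}'$ of $o'$, set $L=\{\alpha\mid\co\alpha\in\init{o_{i_0}'}\}\subseteq I_{i_0}$ (finite, as $o'$ is converged), note $(q\after\tclass s)\ \NOTMUST\ L$ (witnessed by $q'$), deduce $(p\after\tclass s)\ \NOTMUST\ L$ from $p\btmustleq q$, extract a converged residual $p_b$ of $p$ that refuses $L$, route $o$ to $o'$ along $\co{t_0}$ for the corresponding $t_0\in\tclass s$ (Lemma~\ref{aux-lem:commutative-observer-behavior}), and continue the computation without ever scheduling a synchronisation on an $I_{i_0}$-action, which keeps $o_{i_0}$ frozen at $o_{i_0}'\not\tr{\success}$.

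The step I expect to be the main obstacle is the deadlock case~(c), and more broadly the reconciliation between the per-part information that the trace characterisation supplies and the global behaviour of an \distributedobs\ observer, which is governed by the rule that success is reported only when every component does so. The must-set clause of $\btmustleq$ exposes only a residual of $p$ deficient with respect to the \emph{single} part $I_{i_0}$ on which $o'$ fails, whereas a bona fide deadlock of $p\ \|\ o$ needs a residual refusing to synchronise with \emph{every} component of $o'$. The intended remedy --- never firing the offending component's synchronisations, so it stays permanently non-successful, which alone makes the computation unsuccessful no matter what the other components do --- needs a careful scheduling argument that from $p_b\ \|\ o'$ the computation can always be prolonged maximally without being forced onto an $I_{i_0}$-action; this is exactly where Lemma~\ref{aux-lem:commutative-observer-behavior} and the independence of actions in distinct parts of $\mathbb{I}$ do the real work. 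A symmetric subtlety appears in the other inclusion, where one must verify that the witnessing observer genuinely satisfies $p\ \must\ o$ even though $p$'s progress across the parts is uncoordinated: the $\tau.\one$ escapes are placed so that every partial, unfinished interaction collapses to success, and Lemmas~\ref{aux-lem:commutative-observer-behavior}--\ref{aux-cor:commutative-observer-behavior} are again what make the projections recombine properly.
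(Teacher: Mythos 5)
Your proposal tracks the paper's proof very closely: both inclusions are argued by contraposition, the witness observers for the $\subseteq$ direction are exactly the paper's (per--part replays of $\co{s}\cohide{I_i}$ with a $\tau.\one$ escape at every step, the challenged part ending in the guarded sum over $L$), and the $\supseteq$ direction uses the same case split into a finite deadlock, an infinite computation with a divergence on one side, and an infinite computation with everything convergent, discharged with Lemma~\ref{lem-aux:conv-traces-swap}, Lemma~\ref{lem-aux:must-empty}, Lemma~\ref{lem-aux:trace-implication-rec} and Lemma~\ref{aux-cor:commutative-observer-behavior} in the same roles.

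The one point where you genuinely depart from the paper is the finite--deadlock case, and there your argument is not closed. At the deadlocked configuration $q_n\ \|\ o_n$ the paper transfers the single must-set $\init{o_n}$ --- the initials of the \emph{whole} parallel observer --- through clause~(2) of Definition~\ref{def-charact-dmust}, so it obtains one residual of $p$ refusing every component of $o_n$ at once, after which the classical deadlock construction applies verbatim. You instead take $L\subseteq I_{i_0}$ for a single non-successful component and propose to keep that component frozen; you correctly observe that one must then be able to prolong the computation from $p_b\ \|\ o'$ \emph{maximally} without ever being forced into an $I_{i_0}$-synchronisation, but the tools you invoke cannot deliver this. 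After $p_b$ synchronises with the other components of $o'$ on actions outside $I_{i_0}$, its residual is no longer a $\tau$-derivative of $p_b$ and may enable an action of $L$; maximality then forces the frozen component to fire, and the computation may become successful. Lemma~\ref{aux-lem:commutative-observer-behavior} and the independence of distinct parts of $\mathbb{I}$ are of no help here, because they constrain the \emph{observer}, whose components have disjoint sorts by construction, and say nothing about the process, whose actions on different parts of the interface need not commute. To close this case you must either justify transferring the global must-set $\init{o_n}$ (which is what the paper in effect does, reading the quantification over $L$ in Definition~\ref{def-charact-dmust} more liberally than the stated restriction $L\subseteq I$), or build the unsuccessful computation by re-applying the must-set clause at every extended trace; as written, the deadlock case of your $\supseteq$ direction, and hence the inclusion $\btmustleq\ \subseteq\ \dmustleq{\mathbb{I}}$, is not established.
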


\begin{proof}
  \hfill
  \begin{itemize}[align=left]
  \item[($\subseteq$)] Actually we prove that $p\not\btmustleq q$ implies
    $p \not\dmustleq {\mathbb{I}} q$. Let $D$ be the dependency relation
    induced by $\mathbb{I}$. Assume that there exists $s=a_1\ldots a_n$ and
    $I_j\in\mathbb{I}$ and $L\subseteq I_j$ such that
    \begin{enumerate}
    \item
      $p\Downarrow \tclass s$ and $q\Uparrow \tclass s$, or
    \item
      $s \in \trc q$ and $\forall t\in\tclass s. t\not\in\trc p$ or
    \item
      $(p \after \tclass s) \  \MUST\ L$ and $(q \after \tclass s) \NOTMUST\ L$
    \end{enumerate}
    For each case we show that there exists an observer such that
    $p\ \must\ o$ and $q \notmust\ o$. For the two first cases, we take the
    observers as defined in proof of Lemma~\ref{lem-aux:conv-traces-swap}. For
    the third one, we take $o = \Pi_{i \in 0, \ldots n} o_i$ with $o_i$
    defined as follows
    \[
      \begin{array}{l@{\ =\ }l@{\qquad}l}
        o_i
        &
          \tau.\one + b_1.(\tau.\one +  \ldots (\tau.\one + b_k.\one )\ldots)
        & \mbox{ if } i \neq j
        \\
        o_i
        &
          \tau.\one + b_1.(\tau.\one
          +  \ldots (\tau.\one + b_k.\sum_{a\in L} a.\one)\ldots)
        &
          \mbox{ if } i = j
      \end{array}
    \]

    with $s\cohide {I_i} = b_1\ldots b_k$.
  \item[($\supseteq$)] We prove $p \btmustleq q$ implies
    $p \dmustleq {\mathbb{I}} q$. Actually, the proof follows by showing that
    $p \btmustleq q$ and $q\ \notmust\ o$ imply $p\ \notmust\ o$. Assume there
    exists an unsuccessful computation
    \[
      q\ \|\ o = q_0\ \|\ o_0\tr{\tau}\ldots \tr{\tau}q_k\
      \|\ o_k\tr{\tau}\ldots
    \]
    Consider the following cases:

    \begin{enumerate}
    \item\label{proof:th:comm-characterisation-item1} {\bf The computation is
        finite}, i.e., there exists $n$ such that $q_n\ \|\ o_n\not\tr{\tau}$.

      By unzipping the computation we have $q_0\dtr{s} q_n$ and
      $o_0\dtr{\co{s}}o_n$, which is unsuccessful, i.e.,
      $o_i\not\tr{\success}$ for all $0 \leq i \leq n$.

      Moreover, $q_n \NOTMUST\ \init {o_n}$. Hence
      $(q\ \after \ \tclass s) \NOTMUST\ \init{o_n}$.
      \begin{enumerate}
      \item Case $p\Uparrow \tclass s$, i.e., $\exists t\in\tclass s$ and
        $p\Uparrow t$. By
      Corollary~\ref{aux-cor:commutative-observer-behavior}~(\ref{aux-lem:comm-observer-unsuccessful-comp}),
        $o\dtr {\co{s}}$ implies $o\dtr{\co t}$ also unsuccessful, and hence
        there is an unsuccessful computation of $p\ \|\ o$.

      \item Case $p\Downarrow \tclass s$. Note that $s\in\trc q$. By
        Lemma~\ref{lem-aux:conv-traces-swap}~(\ref{lem-aux:trace-implication}),
        $\exists t\in\tclass s:t\in\trc p$. Hence,
        $(p\ \after \ \tclass s)\neq\emptyset$. Since $p \btmustleq q$,
        $(q\ \after \ \tclass s)\NOTMUST\ \init{o_n}$ implies
        $(p\ \after \ \tclass s) \NOTMUST\ \init{o_n}$. Therefore, exists some
        $p'\in (p\ \after \ \tclass s)$ and $p' \NOTMUST\ \init {o_n}$. Hence,
        $\exists t'\in\tclass s. p\dtr{t'}$. By
        Corollary~\ref{aux-cor:commutative-observer-behavior}~(\ref{aux-lem:comm-observer-unsuccessful-comp}),
        $o\dtr{\co t'}$ unsuccessful, and hence there is an unsuccessful
        computation of $p\ \|\ o$.
      \end{enumerate}

    \item {\bf The computation is infinite}. We consider two cases:

      \begin{enumerate}
      \item There exist $s\in\trc{ q}$ and $\co{s}\in\trc{o}$ such that
        $q\Uparrow s$ or $o\Uparrow {\co s}$. We proceed by case analysis.
	\begin{itemize}
	\item[(i)] $q\Uparrow \tclass s$: Since $p \btmustleq q$,
          $p\Uparrow \tclass s$. Therefore, $\exists t\in\tclass s$ such that
          $p\Uparrow t$. By
          Corollary~\ref{aux-cor:commutative-observer-behavior}~(\ref{aux-lem:comm-observer-unsuccessful-comp}),
          $o\dtr{\co t}$ unsuccessful, and hence there is an unsuccessful
          computation of $p\ \|\ o$.
	\item[(ii)] $q\Downarrow \tclass{s}$ (and $o\Uparrow {\co s}$): By
          Lemma~\ref{lem-aux:trace-implication-rec},
          $\exists t\in\tclass s: t\in\trc p$. By
          Corollary~\ref{aux-cor:commutative-observer-behavior}~(\ref{aux-lem:comm-observer-preserve-divergence}),
          $o\Uparrow{\co t}$, and hence there is an unsuccessful computation
          of $p\ \|\ o$.
	\end{itemize}

      \item $\forall n. q_n\Downarrow$ and $o_n\Downarrow$. For every $n$,
        take $s\in\actset^*$ such that $q\dtr{s}q_n$ and $q\Downarrow s$ (this
        is possible because $q \|\ o \dtr{} q_n \|\ o_n$ is unsuccessful and
        $\forall i\leq n. q_i\Downarrow)$. By
        Lemma~\ref{lem-aux:conv-traces-swap}, $q\dtr{s}q_n$ and
        $q\Downarrow s$ implies either (i) $p\Uparrow \tclass s$ or (ii)
        $p\Downarrow \tclass s$ and $p\dtr{\tclass s}$.
        \begin{itemize}
  	\item[(i)] $p \Uparrow \tclass s$: $\exists t\in\tclass s$ such that
          $p\Uparrow t$. By
          Corollary~\ref{aux-cor:commutative-observer-behavior}~(\ref{aux-lem:comm-observer-unsuccessful-comp}),
          $o\dtr{\co s}$ unsuccessful implies $o\dtr{\co t}$ unsuccessful, and
          hence there is an unsuccessful computation of $p\ \|\ o$.
	\item[(ii)] $p\Downarrow \tclass s$ and $p\dtr{\tclass s}$. Since
          $q_n\Downarrow$, there exists $q_m$ such that
          $q_n \dtr{}q_m \not\tr{}$ and $q_m\tr{a}q_{m+1}$. Consequently,
          $(q\ \after \ \tclass s) \NOTMUST\ L$ for all $L$ such that
          $a\not\in L$. Since $p \btmustleq q$, then for all $L$ such that
          $a\not\in L$, we have $(p\ \after \ \tclass s) \NOTMUST\ L$.
          Moreover, $o_n \dtr{} o_m \tr{\co{a}} o_{m+1}$. Therefore, there
          exists $p_n \in(p\ \after \ \tclass s) $ and $p_n\dtr{a} p_{m+1}$.
          Hence, for all $n$ there is an unsuccessful computation
          $p \|\ o \dtr{} p_n\|\ o_n \dtr{} p_{m+1} \|\ o_{m+1}$. \qedhere
        \end{itemize}
      \end{enumerate}
    \end{enumerate}
  \end{itemize}
\end{proof}

\noindent
In the following we will write $L_{p, \tclass s}^I$ for the smallest set such
that $(p \after \tclass s)\ \MUST \ L$ and $L\subseteq I$ imply
$L_{p,\tclass s}^I\subseteq L$.

\begin{exa}\label{ex:trip-dmust} We take advantage of the alternative
  characterisation of the \distributed\ preorder to show that the three
  processes for the broker in Example~\ref{ex:trip} are equivalent when
  considering
  $\mathbb{I}=\{\{\mathit{req}\}, \{\mathit{reqF}\}, \{\mathit{reqH}\}\}$.
  Actually, we will only consider $B_0 \dmusteq{\mathbb{I}} B_1$, as the
  proofs for $B_0\dmusteq{\mathbb{I}} B_2$ and $B_1 \dmusteq{\mathbb{I}} B_2$
  are analogous.

  Firstly, we have to consider that $B_0\Downarrow s$ and $B_1\Downarrow s$
  for any $s$ because $B_0$ and $B_1$ do not have infinite computations. The
  relation between must-sets are described in the two tables below. The first
  table shows the sets $(B_0 \after \tclass s)$ and $L_{B_0,\tclass s}^I$.
  Note that $\tclass s$ in the first column will be represented by any string
  $s'\in\tclass s$. Moreover, we write ``$-$'' in the three last columns
  whenever $L_{B_0, \tclass s}^I$ does not exist. The second table does the
  same for $B_1$. In the tables, we let $B_0'$ stand for
  $\tau.\co{\mathit{reqF}}.\zero+ \tau.\co{\mathit{reqH}}.\zero +
  \tau.\co{\mathit{ reqH}}.\co{\mathit{reqF}}.\zero$ and $B_1'$ stand for
  $\tau.\co{\mathit{reqF}}.\zero+ \tau.\co{\mathit{reqH}}.\zero+
  \tau.\co{\mathit{reqF}}.\co{\mathit{reqH}}.\zero$.

  \begin{center}
    \begin{math}
      \begin{array}{lcccc}
        \toprule
        \tclass s
        & B_0\after \tclass s
        & L_{B_0,\tclass s}^{\{req\}}
        & L_{B_0,\tclass s}^{\{reqH\}}
        & L_{B_0, \tclass s}^{\{reqF\}}
        \\
        \midrule
        \epsilon
        & B_0
        & \{\mathit{req}\}
        & -
        & -
        \\
        \mathit{req}
        &
          \{B_0', \co{\mathit{reqF}}.\zero, \co{\mathit{reqH}}.\zero,
          \co{\mathit{reqH}}.\co{\mathit{reqF}}.\zero\}
        & -
        & -
        & -
        \\
        \mathit{req}.\co{\mathit{reqF}}
        & \{0\}
        & -
        & -
        & -
        \\
        \mathit{req}.\co{\mathit{reqH}}
        &
        \{0, \co{\mathit{reqF}}.\zero \}
        & -
        & -
        & -
        \\
        \mathit{req}.\co{\mathit{reqF}}.\co{\mathit{reqH}}
        & \{0\}
        & -
        & -
        & -
        \\
        {\sf other}
        & \emptyset
        & \emptyset
        & \emptyset
        & \emptyset
        \\
        \bottomrule
      \end{array}
    \end{math}
  \end{center}

  \begin{center}
    \begin{math}
      \begin{array}{lcccc}
        \toprule
        \tclass s
        & B_1\after \tclass s
        & L_{B_0,\tclass s}^{\{req\}}
        & L_{B_0,\tclass s}^{\{reqH\}}
        & L_{B_0,\tclass s}^{\{reqF\}}
        \\
        \midrule
        \epsilon
        & B_1
        & \{\mathit{req}\}
        & -
        & -
        \\
        \mathit{req}
        & \{B_1', \co{\mathit{reqF}}.\zero, \co{\mathit{reqH}}.\zero,
          \co{\mathit{reqF}}.\co{\mathit{reqH}}.\zero\}
        & -
        & -
        & -
        \\
        \mathit{req}.\co{\mathit{reqF}}
        & \{0, \co{\mathit{reqH}} \}
        & -
        & -
        & -
        \\
        \mathit{req}.\co{\mathit{reqH}}
        & \{ 0\}
        & -
        & -
        & -
        \\
        \mathit{req}.\co{\mathit{reqF}}.\co{\mathit{reqH}}
        & \{0\}
        & -
        & -
        & -
        \\
        {\sf other}
        & \emptyset
        & \emptyset
        & \emptyset
        & \emptyset
        \\
        \bottomrule
      \end{array}
    \end{math}
  \end{center}

  By inspecting the tables, we can check that for any possible trace
  $\tclass s$ and $I\in\mathbb{I}$, it holds that
  $L_{B_0, \tclass s}^I = L_{B_1, \tclass s}^I$. Consequently,
  $(B_0 \after \tclass s)\ \MUST\ L$ iff $(B_1 \after \tclass s)\ \MUST\ L$
  and thus we have $B_0 \dmusteq{\mathbb{I}} B_1$.
  \qed%
\end{exa}

We now present two additional examples that help us understand the
discriminating capability of the \distributed\ preorder and its differences
with the classical must preorder.

The first of these examples shows that a process that does not communicate its
internal choices over all parts of its interface is useless in a distributed
context.

\begin{exa}
  Consider the process $p = \tau.a.\zero + \tau.b.\zero$ that is intended to
  be used by two processes with the following interface:
  $\mathbb{I} = \{\{a\},\{b\}\}$. We show that this process is less useful
  than $0$ in an \distributed\ context, i.e.,
  $ \tau.a.\zero + \tau.b.\zero \dmustleq{\mathbb{I}} 0$. It is immediate to
  see that $p$ and $0$ strongly converge for any $s\in\actset^*$, then the
  minimal sets $L_{p, \tclass s}^{\{a\}}$, $L_{p,\tclass s}^{\{b\}}$,
  $L_{0, \tclass s}^{\{a\}}$ and $L_{0,\tclass s}^{\{b\}}$ presented in the
  tables below are sufficient for proving our claim.

  \begin{center}
    \begin{math}
      \begin{array}{lcccc}
        \toprule
        \tclass s
        & p \after \tclass s
        & L_{p, \tclass s}^{\{a\}}
        & L_{p, \tclass s}^{\{b\}}
        \\
        \midrule
        \epsilon
        & p,a,b
        & -
        & -
        \\
        a
        & \{0\}
        & -
        & -
        \\
        b
        & \{0\}
        & -
        & -
        \\
        {\sf other}
        & \emptyset
        & \emptyset
        & \emptyset
        \\
        \bottomrule
      \end{array}
      \qquad
      \begin{array}{lcccc}
        \toprule
        \tclass s
        & 0 \after \tclass s
        & L_{0, \tclass s}^{\{a\}}
        & L_{0, \tclass s}^{\{b\}}
        \\
        \midrule
        \epsilon
        & 0
        & -
        & -
        \\
        a
        & \emptyset
        & \emptyset
        & \emptyset
        \\
        b
        & \emptyset
        & \emptyset
        & \emptyset
        \\
        {\sf other}
        & \emptyset
        & \emptyset
        & \emptyset
        \\
        \bottomrule
      \end{array}
    \end{math}
  \end{center}

  Note that differently from the classical must preorder, the \distributed\
  preorder does not consider the must-set $\{a,b\}$ to distinguish $p$ from
  $0$ because this set involves channels in different parts of the interface.
  The key point here is that each internal reduction of $p$ is observed just
  by one part of the interface: the choice of branch $a$ is only observed by
  one process and the choice of $b$ is observed by the other one. Since
  \distributedobs\ observers do not intercommunicate, they can only report
  success simultaneously if they can do it independently from the interactions
  with the tested process, but such observers are exactly the ones that $0$
  can pass.

  Like in the classical must preorder, we have that
  $0 \not\dmustleq{\mathbb{I}} \tau.a.\zero + \tau.b.\zero$. This is witnessed
  by the observer $o= \co{a}.\zero + \tau.\one\ \|\ \one$ that is passed by
  $0$ but not by $\tau.a.\zero + \tau.b.\zero$. \qed%
\end{exa}

The second example shows that the \distributed\ preorder falls somehow short
with respect to the target we set in the introduction of allowing processes to
swap actions that are targeted to different partners.

\begin{exa}\label{ex-dmust-swap-axiom}
  Consider the interface $\mathbb{I} = \{\{a\},\{b\}\}$ and the two pairs of
  processes
  \begin{itemize}
  \item
    $a.b.\zero + a.\zero+ b.\zero$ and
    $b.a.\zero + a.\zero + b.\zero$.
  \item
    $a.b.\zero$ and $b.a.\zero$.
  \end{itemize}

  \noindent
  By inspecting traces and must-sets in the two tables below, where we use $p$
  and $q$ to denote $ a.b.\zero + a.\zero + b.\zero$ and
  $b.a.\zero + a.\zero + b.\zero$

  \begin{center}
    \begin{math}
      \begin{array}{ll}
        \begin{array}{lcccc}
          \toprule
          \tclass s
          & p\after \tclass s
          & L_{p,\tclass s}^{\{a\}}
          & L_{p,\tclass s}^{\{b\}}
          \\
          \midrule
          \epsilon
          & \{p\}
          & \{a\}
          & \{b\}
          \\
          a
          & \{b.\zero, 0\}
          & -
          & -
          \\
          b
          & \{0\}
          & -
          & -
          \\
          ab
          & \{0\}
          & -
          & -
          \\
          {\sf other}
          & \emptyset
          & \emptyset
          & \emptyset
          \\
          \bottomrule
        \end{array}
        &\qquad
          \begin{array}{lcccc}
            \toprule
            \tclass s
            & q\after \tclass s
            & L_{q,\tclass s}^{\{a\}}
            & L_{q,\tclass s}^{\{b\}} \\
            \midrule
            \epsilon
            & \{p\}
            & \{a\}
            & \{b\}
            \\
            a
            & \{0\}
            & -
            & -
            \\
            b
            & \{a.\zero, 0\}
            & -
            & -
            \\
            ab
            & \{0\}
            & -
            & -
            \\
            {\sf other}
            & \emptyset
            & \emptyset
            & \emptyset
            \\
            \bottomrule
          \end{array}
      \end{array}
    \end{math}
  \end{center}

  it is easy to see that
  \[
    a.b.\zero + a.\zero + b.\zero
    \dmusteq {\mathbb{I}}
    b.a.\zero +  a.\zero + b.\zero
  \]
  However, by using $o= \co{a}.\one\ \|\ \one$ and $o'= \one\ \|\ \co{b}.\one$
  as observers, it can be shown that
  \[
    a.b.\zero \not\dmustleq{\mathbb{I}} b.a.\zero
    \qquad \mbox{and} \qquad
    b.a.\zero \not\dmustleq{\mathbb{I}} a.b.\zero
  \]

  Note that $o = \co{a}.\one\ \|\ \one$ actually interacts with the process
  under test by using just one part of the interface and relies on the fact
  that the remaining part of the interface stays idle. Thanks to this ability,
  \distributedtest\ observers have still a limited power to track some
  dependencies among actions on different parts of the interface.

  The preorder presented in the next section limits further the discriminating
  power of observers and allows us to equate processes $a.b.\zero$ and
  $b.a.\zero$. \qed%
\end{exa}


\section{A testing preorder with \uncoordinated\ observers}\label{sec:uncoordinated}
In this section we explore a notion of equivalence equating processes that can
freely permute actions over different parts of their interfaces. As for the
\distributed\ observers, the targeted scenario is that of a service with a
partitioned interface interacting with two or more independent processes by
using separate sets of ports. In addition, each component of an observer
cannot exploit any knowledge about the design choices made by the other
components, i.e., each of them has a local view of the behaviour of the
process that ignores all actions controlled by the remaining components. Local
views are characterised in terms of a projection operator defined as follows.

\begin{defi}[Projection]
  Let $V \subseteq \mathcal{N}$ be a set of observable ports. We write
  $ p \abst V$ for the process obtained by hiding all actions of $p$ over
  channels that are not in $V$. Formally,

  \begin{center}
    \begin{math}
      \begin{array}{l}
        \anmathrule{p\tr{\alpha} p' \quad \alpha\in V\cup\co V}
                   {p\abst V\tr{\alpha} p'\abst V}
                   \qquad\qquad
                   \anmathrule{p\tr{\alpha} p' \quad \alpha\not\in V\cup\co V}
                              {p\abst V\tr{\tau} p'\abst V}
      \end{array}
    \end{math}
  \end{center}
\end{defi}

\begin{exa}
  Let $p = \co a.p_1 + b.p_2$ be a process. Note that $p \tr{\co a} p_1$ and
  $p \tr{b} p_2$. Then, the projection of $p$ over the channel $a$, i.e.,
  $p\abst\{a\}$, has the following two transitions:
  $p\abst\{a\} \tr{\co a} p_1\abst\{a\}$ and
  $p\abst\{a\} \tr{\tau} p_1\abst\{a\}$ where the action $\co a$ of $p$ over
  the visible channel $a$ is reflected on the label of the transition of
  $p\abst\{a\}$ while the action over the non-visible channel $b$ is taken as
  an internal action.\qed%
\end{exa}

\begin{defi}[\Uncoordinated\ (must) preorder $\lmustleq{\mathbb{I}}$]
  Let $\mathbb{I} = {\{I_i\}}_{i\in 0..n}$ be an interface. We say
  $p\lmustleq{\mathbb{I}} q$ iff for all \distributedobs\ observer
  $o = \Pi_{i \in 0..n} o_i $, for all $i\in0..n$, $p\abst I_i \ \must\ o_i\ $
  implies $q\abst I_i\ \must \ o_i$.
\end{defi}

Note that $a.b.\zero$ and $b.a.\zero$ cannot be distinguished anymore by the
observer $o = \co{a}.\one\ \|\ \one$ used in the previous section to prove
$a.b.\zero \not\dmustleq{\{\{a\},\{b\}\}} b.a.\zero$
(Example~\ref{ex-dmust-swap-axiom}), because
$a.b.\zero\abst \{a\}\ \must\ \co{a}.\one$,
$b.a.\zero\abst \{a\}\ \must\ \co{a}.\one$,
$a.b.\zero\abst \{b\}\ \must\ \one$ and $b.a.\zero\abst \{b\}\ \must\ \one$.
Indeed, later (Example~\ref{ab-ind-equiv-ba}) we will see that:
\[a.b.\zero\lmusteq{{\{\{a\},\{b\}\}}}b.a.\zero.\]

\subsection{Semantic characterisation}

In this section we address the characterisation of the \uncoordinated\
preorder in terms of traces. We start by introducing an equivalence relation
over traces that ignores hidden actions.

\begin{defi}[\Filteredtr]
  Let $I\subseteq\actset$. Two strings $s, t\in\actset^*$ are \emph{equivalent
    up-to} $I$, written $s\noisyeq[I] t$, if and only if
  $s\cohide I = t\cohide I$. We write $\nclass[I] s$ for the equivalence class
  of $s$.
\end{defi}

Basically, two traces are equivalent up-to $I$ when they coincide after the
removal of hidden actions. For instance,
${aa} \ \noisyeq[\{a\}] {aba} \ \noisyeq[\{a\}] {ababbb} \noisyeq[\{a\}]
{\ldots}$.

As for the distributed preorder, we extend the notions of reduction,
convergence and residuals to equivalence classes of filtered traces.
\begin{itemize}
\item
  $q\dtr{\nclass s} q'\ \mbox{ if and only if } \ \exists t\in\nclass s\
  \mbox{ such that } \ q \dtr t {q'}$.
\item
  $p\Downarrow\nclass{s} \ \mbox{ if and only if }\ \forall t\in
  \nclass{s}.p \Downarrow t$.
\item $(p \after {\nclass s}) = \{p'\ |\ p\dtr{{\nclass s}} p'\}$.
\end{itemize}

\noindent
The following auxiliary result establishes properties relating reductions,
hiding and \filteredtr, which will be useful in the proof of the
correspondence theorem.
\begin{lem}\label{lem-prop-noisy}
  \hfill
  \begin{enumerate}
  \item\label{lem-red-noisy-class} $p\dtr{s}p'$ implies
    $p\abst I\ \dtr{s\cohide I}p'\abst I$.

  \item\label{lem-red-noisy-class-rec}
    $p\abst I\ \dtr{s}p'\abst I$ implies $\exists t\in\nclass s$ and
    $p\dtr{t}p'$.

  \item\label{lem-div-noisy-class} $p\Uparrow\nclass s$ implies
    $p\abst I \Uparrow s\cohide I$.

  \item\label{lem-after-noisy-class}
    $(p\ \after\ \nclass s)\ \MUST\ L\ \mbox{ iff }\ (p\abst I\after\ s\cohide
    I)\ \MUST\ L\cap I$.
  \end{enumerate}
\end{lem}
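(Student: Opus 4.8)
The plan is to prove each of the four items essentially separately, exploiting the fact that the projection operator $\abst I$ is defined by exactly two transition rules: an $\alpha$ labelled rule when $\alpha \in V \cup \co V$ and a $\tau$ labelled rule when $\alpha \notin V \cup \co V$. (Here, matching the statement, I read the projection port set as $I$, identifying $V$ with the names occurring in $I$.) For item~\ref{lem-red-noisy-class}, I would argue by induction on the length of the weak transition $p \dtr{s} p'$, i.e.\ on the number of underlying $\tr{\mu}$ steps. Each $\tau$ step of $p$ becomes a $\tau$ step of $p \abst I$ by the second rule (its label is not in $V \cup \co V$, since $\tau \notin \actset$); each visible step $p_i \tr{\alpha} p_{i+1}$ with $\alpha \in I$ becomes a matching visible step $p_i \abst I \tr{\alpha} p_{i+1} \abst I$, and contributes $\alpha$ to $s \cohide I$; each visible step with $\alpha \notin I$ becomes a $\tau$ step and contributes nothing to $s \cohide I$. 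Concatenating, $p \abst I \dtr{s \cohide I} p' \abst I$.

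For item~\ref{lem-red-noisy-class-rec}, I would again induct on the length of $p \abst I \dtr{s} p' \abst I$, inverting the projection rules at each step: a $\tau$ step of $p \abst I$ arises either from a $\tau$ step of $p$ or from a visible step of $p$ on a hidden channel, and a visible step on $\alpha \in I$ arises from the very same visible step of $p$. Collecting the labels of the underlying $p$-transitions yields a string $t$ with $t \cohide I = s$, hence $t \in \nclass s$, and by construction $p \dtr{t} p'$. Item~\ref{lem-div-noisy-class} is the contrapositive-flavoured consequence: if $p \Uparrow \nclass s$ then there is $t \in \nclass s$ with $p \Uparrow t$, meaning $p \dtr{t} p''$ for some $p''$ that starts an infinite $\tr{\tau}$ chain; by item~\ref{lem-red-noisy-class} applied to the finite prefix we get $p \abst I \dtr{t \cohide I} p'' \abst I = p'' \abst I$ (note $t \cohide I = s \cohide I$), and the infinite $\tau$ chain of $p''$ projects, via the second rule, to an infinite $\tau$ chain of $p'' \abst I$, so $p \abst I \Uparrow s \cohide I$.

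For item~\ref{lem-after-noisy-class}, I would unfold the definition of $\MUST$: $(p \after \nclass s)\ \MUST\ L$ means every $p'$ with $p \dtr{\nclass s} p'$ has some $\alpha \in L$ with $p' \dtr{\alpha}$, while $(p \abst I \after s \cohide I)\ \MUST\ L \cap I$ means every $r$ with $p \abst I \dtr{s \cohide I} r$ has some $\beta \in L \cap I$ with $r \dtr{\beta}$. The bridge is that, by items~\ref{lem-red-noisy-class} and~\ref{lem-red-noisy-class-rec}, the set of residuals $\{\, p' \abst I \mid p \dtr{\nclass s} p' \,\}$ coincides with the set of residuals $\{\, r \mid p \abst I \dtr{s \cohide I} r \,\}$; and for a fixed residual $p'$, one has $p' \abst I \dtr{\beta}$ with $\beta \in I$ iff $p' \dtr{\beta}$ (the projection neither creates nor destroys visible transitions on ports in $I$), whereas transitions on $\beta \notin I$ are invisible to $p' \abst I$ — which is exactly why the must-set on the projected side is $L \cap I$ rather than $L$. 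Combining these observations gives the biconditional. The main obstacle I anticipate is the bookkeeping in items~\ref{lem-red-noisy-class-rec} and~\ref{lem-after-noisy-class}: one must be careful that inverting $\tau$ steps of $p \abst I$ is genuinely nondeterministic (a hidden visible action versus a real $\tau$), so the witness $t$ is not unique, and that in item~\ref{lem-after-noisy-class} the equivalence classes on the two sides are matched up correctly — in particular that every $r$ reachable on the projected side is of the form $p' \abst I$ for an appropriate $p'$, which is precisely item~\ref{lem-red-noisy-class-rec}.
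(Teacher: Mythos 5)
Your handling of items~(\ref{lem-red-noisy-class})--(\ref{lem-div-noisy-class}) is correct and is exactly the induction on the length of $s$ that the paper's one-line proof alludes to: the case split on the two projection rules, the inversion argument producing a witness $t$ with $t\cohide I=s$ (hence $t\in\nclass s$, since every visible label of $p\abst I$ lies in $I$ and so $s\cohide I=s$), and the derivation of item~(\ref{lem-div-noisy-class}) from item~(\ref{lem-red-noisy-class}) together with the preservation of infinite $\tau$-chains under projection. Your observation that the inversion in item~(\ref{lem-red-noisy-class-rec}) is nondeterministic is the right thing to be careful about, and is handled adequately.

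Item~(\ref{lem-after-noisy-class}) is where the proposal has a genuine gap. The bridging claim that for $\beta\in I$ one has $p'\abst I\dtr{\beta}$ iff $p'\dtr{\beta}$ is false in the direction from the projection back to $p'$: the projection turns hidden visible actions into $\tau$'s, and these may \emph{precede} $\beta$, so the projection can \emph{create} weak visible transitions. Concretely, take $p'=a.b$ and $I=\{b,\co b\}$: then $p'\abst I\dtr{b}$ (via the $\tau$ obtained by hiding $a$), whereas $p'\dtr{b}$ fails. This is not a cosmetic issue, because it hits exactly the delicate direction of the stated biconditional: with $p=a.b$, $s=\epsilon$ and $L=\{b\}$ one gets $(p\abst I\after\epsilon)\ \MUST\ \{b\}$, yet $a.b\in(p\after\nclass{\epsilon})$ and $a.b\ \NOTMUST\ \{b\}$, so the right-to-left implication of item~(\ref{lem-after-noisy-class}) cannot be proved by your argument (and, as literally stated, does not hold). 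The left-to-right direction also needs more than you give when a residual discharges its $\MUST\ L$ obligation only through some $\alpha\in L\setminus I$: the projected residual then offers only a $\tau$, and one must chase the resulting $\tau$-derivatives (this terminates for finite processes, but fails for, e.g., $\recX.\,a.X$ with $a\notin I$ and $L=\{a\}$). A sound treatment has to exploit the restricted shape of $L$ under which the lemma is actually applied later (namely $L\subseteq I$, respectively $L\supseteq\actset\setminus I$), rather than the unrestricted equivalence you assert; your identification of the two residual sets via items~(\ref{lem-red-noisy-class}) and~(\ref{lem-red-noisy-class-rec}) is correct and is the part of the argument worth keeping.
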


\begin{proof} The proof follows by induction on the length of $s$.
\end{proof}

The alternative characterisation for the \uncoordinated\ preorder is given in
terms of \filteredtr.

\begin{defi}
  Let $p\bnmustleq[\mathbb{I}] q$ if for every $I \in\mathbb{I}$, for every
  $s\in I^*$, and for all finite $L\subseteq I$, if $p\Downarrow\nclass s$
  then
  \begin{enumerate}
  \item
    $q\Downarrow\nclass s$
  \item\label{def-cond:futureU}q
    $(p \after {\nclass s})\ \MUST\ L\cup(\actset\backslash I)$ implies
    $(q \after {\nclass s})\ \MUST\ L\cup(\actset\backslash I)$
  \end{enumerate}
\end{defi}

\noindent
We would like to draw attention to condition~\ref{def-cond:futureU} above; it
only considers must-sets that always include all the actions in
$(\actset\backslash I)$ to avoid the possibility of distinguishing reachable
states because of actions that are not in $I$. Consider that this condition
could be formulated as follows: for all finite $L\subseteq \actset$,
\[
  (p \after {\nclass s})\ \MUST\ L
  \mbox{ implies }
  \exists L' \mbox{ such that }
  (q \after {\nclass s})\ \MUST\ L' \mbox{ and } L\cap I = L' \cap I
\]
which makes evident that only the actions from the observable part of the
interface are relevant. We adopted the first formulation because it resembles
the original characterisation of the must preorder.

The following lemmata are analogous to those for the \distributed\ preorder
and their proof follows similarly (proof details are in
Appendix~\ref{ap:local-proofs}).

\begin{restatable}{lem}{convnoisy}\label{lem-aux:conv-noisy}
  If $p\lmustleq{\mathbb{I}} q$ then for all $s\in\actset^*$ and
  $I\in\mathbb{I}$, we have that $p\Downarrow\nclass{ s}$ implies
  \begin{enumerate}
  \item
    $q\Downarrow\nclass{ s}$
  \item\label{lem-aux:trace-implication-noisy}
    $s\in\trc q$ implies that there exists $t\in \nclass s$ such that
    $t\in\trc p$.
  \end{enumerate}
\end{restatable}

\begin{restatable}{lem}{mustemptynoisy}\label{lem-aux:must-empty-noisy}
  if $(p \after {\nclass s})\ \NOTMUST\ L$ for some $L\subseteq \actset$, then
  $\exists t\in\nclass s: t\in\trc p$.
\end{restatable}

We rely on the following auxiliary results relating the traces of
processes in the must preorders.

\begin{restatable}{lem}{traceimplicationrecnoisy}\label{lem-aux:trace-implication-rec-noisy}
  If $p \bnmustleq[\mathbb{I}] q$, $s\in\trc q$ and $p\Downarrow\nclass s$
  with $I\in\mathbb{I}$ then $ t\in (\nclass s \cap \trc p)$.
\end{restatable}

\medskip
\begin{restatable}{thm}{thcharacterisationlocalmust}
  ${\lmustleq{\mathbb{I}}} = {\bnmustleq[\mathbb{I}]}$.
\end{restatable}

\begin{proof} The proof follows along the lines of that of
  Theorem~\ref{th:comm-characterisation} (see details in
  Appendix~\ref{ap:local-proofs}).
\end{proof}

\begin{exa}\label{ab-ind-equiv-ba} Consider the processes $p = a.b.\zero$ and
  $q = b.a.\zero$ and the interface $\mathbb{I} = \{\{a\},\{b\}\}$. The table
  below shows the analysis for the part of the interface $\{a\}$.
  \begin{center}
    \begin{math}
      \begin{array}{lcccc}
        \toprule
        \nclass[{\{a\}}] s
        & p \after \nclass[{\{a\}}] s
        & L_{p,\nclass s}^{\{a\}}
        & q \after \nclass[{\{a\}}] s
        &  L_{q,\nclass s}^{\{a\}}
        \\
        \midrule
        \epsilon
        & \{p\}
        & \{a\}
        & \{q,a.\zero\}
        & \{a\}
        \\
        a
        & \{0,b.\zero\}
        & -
        &\{0\}
        & -
        \\
        {\sf other}
        & \emptyset
        & \emptyset
        & \emptyset
        & \emptyset
        \\
        \bottomrule
      \end{array}
    \end{math}
  \end{center}

  When analysing the sets $(p \after {\nclass[{\{a\}}] \epsilon}) = \{p\}$ and
  $(q \after {\nclass[{\{a\}}] \epsilon}) = \{q, a.\zero\}$, we ignore the
  fact that $q$ starts with a hidden action $b$; the only relevant residuals
  are those performing $a$. With a similar analysis we conclude that the
  condition on must-sets also holds for set $\{b\}$. Hence,
  $ a.b.\zero \lmusteq{\mathbb{I}} b.a.\zero$ holds. \qed%
\end{exa}

The following example illustrates also the fact that \uncoordinated\ observers
are unable to track causal dependencies between choices made in different
parts of the interface.

\begin{exa}\label{ex:dist-ac+bd} Let $p_1 = a.c.\zero+b.d.\zero$ and
  $p_2 = a.d.\zero+b.c.\zero$ be two alternative implementations for a service
  with interface $\mathbb{I}= \{\{a,b\},\{c,d\}\}$. These two processes are
  distinguished by the \distributed\ preorder
  ($p_1\not\dmusteq{\{\{a,b\},\{c,d\}\}} p_2$) because of the observers
  $o_1= \co{a}.\one\ \|\ \co{c}.\one$
  ($p_1\not\dmustleq{\{\{a,b\},\{c,d\}\}} p_2$) and
  $o_2= \co{b}.\one\ \| \ \co{c}.\one$
  ($p_2\not\dmustleq{\{\{a,b\},\{c,d\}\}} p_1$).

  They are instead equated by the \uncoordinated\ preorder with respect to
  $\mathbb{I}$, $p_1\lmusteq{\mathbb{I}} p_2$. Indeed, if only the part of the
  interface $\{a,b\}$ is of interest, we have that $p_1$ and $p_2$ are
  equivalent because they exhibit the same interactions over channels $a$ and
  $b$. Similarly, without any a priori knowledge of the choices made for
  $\{a, b\}$, the behaviour observed over $\{c,d\}$ can be described by the
  non-deterministic choice $ \tau.{c}.\zero+\tau.{d}.\zero$, and hence, $p_1$
  and $p_2$ are indistinguishable also over $\{c,d\}$.

  We use the alternative characterisation to prove our claim. As usual,
  $p_1\Downarrow s$ and $p_2\Downarrow s$ for any $s$. The tables below show
  coincidence of the must-sets. We would only like to remark that
  $ac \in \nclass[\{a,b\}] a$ and, consequently,
  $p_1\after \nclass[\{a,b\}] a$ contains also process $0$.

  \begin{center}
    \begin{math}
      \begin{array}{lcccc}
        \toprule
        \nclass[\{a,b\}] s
        & p_1\after \nclass[\{a,b\}] s
        & L_{p_1,\nclass s}^{\{a,b\}}
        & p_2\after \nclass[\{a,b\}] s
        & L_{p_2,\nclass s}^{\{a,b\}}
        \\
        \midrule
        \epsilon
        & p_1
        & \{\mathit{a, b}\}
        & p_2
        & \{\mathit{a, b}\}
        \\
        a
        &\{c.\zero, 0\}
        & -
        & \{d.\zero, 0\}
        & -
        \\
        b
        & \{d.\zero, 0\}
        & -
        & \{c.\zero, 0\}
        & -
        \\
        {\sf other}
        & \emptyset
        & \emptyset
        & \emptyset
        & \emptyset
        \\
        \bottomrule
      \end{array}
    \end{math}
  \end{center}

  \begin{center}
    \begin{math}
      \begin{array}{lcccc}
        \toprule
        \nclass[\{c,d\}] s
        & p_1\after \nclass[\{a,b\}] s
        & L_{p_1,\nclass s}^{\{c,d\}}
        & p_2\after \nclass[\{a,b\}] s
        & L_{p_2,\nclass s}^{\{c,d\}}
        \\
        \midrule
        \epsilon
        & p_1
        & \{\mathit{c, d}\}
        & p_2
        & \{\mathit{c, d}\}
        \\
        c
        & \{0\}
        & -
        & \{0\}
        & -
        \\
        d
        & \{0\}
        & -
        & \{0\}
        & -
        \\
        {\sf other}
        & \emptyset
        & \emptyset
        & \emptyset
        & \emptyset
        \\
        \bottomrule
      \end{array}
    \end{math}
  \end{center}
  \qed%
\end{exa}


\section{Relation between must, \distributed\ and \uncoordinated\
  preorders}\label{sec:relation}
In this section, we formally study the relationships between the classical
must preorder and the two preorders we have introduced.
We start by showing that a refinement of an interface induces a coarser
preorder, e.g., splitting the observation among more \distributed\ observers
decreases the discriminating power of the observers. We say that an interface
${\mathbb{I}'}$ is a \emph{refinement} of another interface ${\mathbb{I}}$
when the partition ${\mathbb{I}'}$ is finer than the partition ${\mathbb{I}}$.

\begin{lem}\label{lemm-refinement-dist}
  Let $\mathbb{I}$ be an interface and $\mathbb{I}'$ a refinement of
  $\mathbb{I}$. Then, $p \dmustleq{\mathbb{I}} q$ implies
  $p \dmustleq{\mathbb{I}'} q$.
\end{lem}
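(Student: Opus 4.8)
The plan is to show that every observer relevant to $\dmustleq{\mathbb{I}'}$ is, after a harmless regrouping of its components, already an observer of the form quantified over in the definition of $\dmustleq{\mathbb{I}}$, so that the universal quantification defining $\dmustleq{\mathbb{I}}$ automatically subsumes all the tests that matter for $\dmustleq{\mathbb{I}'}$.

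Writing $\mathbb{I}=\{I_i\}_{i\in 0..n}$ and $\mathbb{I}'=\{I'_j\}_{j\in 0..m}$: since $\mathbb{I}'$ is finer than $\mathbb{I}$, each $I'_j$ lies inside exactly one part of $\mathbb{I}$, and I would set $J_i=\{\,j\mid I'_j\subseteq I_i\,\}$, so that the $J_i$ partition $\{0,\ldots,m\}$ with $\bigcup_{j\in J_i}I'_j=I_i$. Then, given an arbitrary observer $o'=\Pi_{j\in 0..m}o'_j$ with $\n(o'_j)\subseteq I'_j$ for every $j$, and assuming $p\ \must\ o'$, I would put $o_i\bydef\Pi_{j\in J_i}o'_j$ and $o\bydef\Pi_{i\in 0..n}o_i$, and observe that $\n(o_i)=\bigcup_{j\in J_i}\n(o'_j)\subseteq\bigcup_{j\in J_i}I'_j\subseteq I_i$, so that $o$ is exactly of the form admitted by the definition of $\dmustleq{\mathbb{I}}$. (If some part $I_i$ of $\mathbb{I}$ happened to contain no part of $\mathbb{I}'$, which cannot occur for a genuine partition into nonempty parts, one would simply take $o_i\bydef\one$ there.)

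The crucial step is then to establish that $o$ and $o'$ induce the same test, i.e.\ $p\ \must\ o\iff p\ \must\ o'$, and likewise for $q$. This is because $o$ is obtained from $o'$ purely by reassociating and reordering the operands of $\|$: both are the parallel composition of the same family of components $\{o'_j\}_{j\in 0..m}$. Here I would appeal to the standard fact that the operational semantics of configurations makes $\|$ associative and commutative up to strong bisimilarity, that strong bisimilarity is a congruence, in particular closed under the context $p\ \|\ [\,\cdot\,]$, and that it pairs the observer component of one configuration with that of the other; consequently $p\ \|\ o$ and $p\ \|\ o'$ have matching maximal $\tau$-computations along which the observer can perform $\success$ at matching moments, so the $\must$ outcomes coincide. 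Finally, chaining $p\ \must\ o'\Rightarrow p\ \must\ o\Rightarrow q\ \must\ o$ (by $p\dmustleq{\mathbb{I}} q$) $\Rightarrow q\ \must\ o'$ and recalling that $o'$ was arbitrary yields $p\dmustleq{\mathbb{I}'} q$.

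I expect the only real obstacle to be that last piece of bookkeeping, namely that $\|$ is associative and commutative up to a relation preserving $\must$; this is routine but needs slightly more care than in pure {\sc ccs}, because of the non-standard rule letting a parallel composition report $\success$ only when all its components do. If one prefers to bypass it, the same conclusion can be read off the trace characterisation of Theorem~\ref{th:comm-characterisation}: the dependency relation $D'$ induced by $\mathbb{I}'$ is contained in the one, $D$, induced by $\mathbb{I}$, so every class $\tclass{t}$ is contained in $\tclass[D']{t}$ and the latter is a union of classes of the former; hence $p\Downarrow\tclass[D']{s}$ and $(p \after \tclass[D']{s})\ \MUST\ L$ unfold, respectively, into $p\Downarrow\tclass{t}$ and $(p \after \tclass{t})\ \MUST\ L$ for every $t\in\tclass[D']{s}$, and since each $I'\in\mathbb{I}'$ lies inside some $I\in\mathbb{I}$ the requirement $L\subseteq I'$ forces $L\subseteq I$; the defining clauses of $\btmustleq[\mathbb{I}']$ then follow termwise from those of $\btmustleq$.
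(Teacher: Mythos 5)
Your proposal is correct, but your primary argument takes a genuinely different route from the paper's. The paper never manipulates observers at all: it works entirely on the trace side, proving that $p\btmustleq[\mathbb{I}] q$ implies $p\btmustleq[\mathbb{I}'] q$ via the inclusion $D'\subseteq D$ of the induced dependency relations (hence $\tclass{s}\subseteq\tclass[D']{s}$, so each class modulo $D'$ is a union of classes modulo $D$) together with the observation that $L\subseteq I'\in\mathbb{I}'$ forces $L\subseteq I$ for the enclosing $I\in\mathbb{I}$ --- exactly the computation you sketch in your closing alternative, which therefore coincides with the paper's proof and, like it, silently relies on Theorem~\ref{th:comm-characterisation} on both sides. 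Your regrouping argument instead works directly with the operational definition: it buys independence from the correspondence theorem (the lemma could then be stated and proved before \S\ref{subsec:marzukiewicz}) and makes the intuition ``finer partition gives fewer observers up to bracketing'' explicit, at the price of two pieces of bookkeeping the paper avoids. First, you must verify that $\|$ is associative and commutative up to a $\must$-preserving correspondence; as you note, the only delicate point is the success rule, and it does go through because $\Pi_{i}(\Pi_{j\in J_i}o'_j)\tr{\success}$ iff every $o'_j\tr{\success}$ iff $\Pi_{j}o'_j\tr{\success}$, and because the disjointness of the $\n(o'_j)$ rules out any cross-component communication that reassociation could create or destroy. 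Second, your reduction reads the quantification ``for all $\Pi_{i\in 0..n}o_i$ with $\n(o_i)\subseteq I_i$'' as allowing each $o_i$ to be an arbitrary configuration rather than a sequential process; this is the natural reading (the paper takes all configurations as observers), but it is the one point your argument genuinely depends on, since the regrouped components $\Pi_{j\in J_i}o'_j$ are not sequential.
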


\begin{proof}
  The proof follows by showing that $p\btmustleq[\mathbb{I}] q $ implies
  $p\btmustleq[\mathbb{I}'] q $. Let $D$ and $D'$ be the dependency relations
  induced respectively by $\mathbb{I}$ and $\mathbb{I}'$. Since $\mathbb{I}'$
  is a refinement of $\mathbb{I}$, $D'\subseteq D$ and therefore
  $\tclass[D] s \subseteq \tclass[D'] s$ for all $s$. Assume
  $p\Downarrow \tclass[D'] s$ for $s\in\actset^*$. Then,
  \begin{itemize}
  \item $p\Downarrow t$ for all $t\in\tclass[D'] s$. Note that
    $\tclass[D'] t = \tclass[D'] s$ because $t\in\tclass[D'] s$. Consequently,
    $p\Downarrow \tclass[D] t$ because
    $\tclass[D] t \subseteq \tclass[D'] t = \tclass[D'] s$. Since
    $p\btmustleq[\mathbb{I}] q $, we know that $q\Downarrow \tclass[D] t $,
    which implies $q \Downarrow t$. Therefore, $q\Downarrow \tclass[D'] s $.
  \item Assume $L\subseteq I'$, $I'\in\mathbb{I}'$ and
    $(p \after {\tclass[D'] s})\ \MUST\ L$. Then, $(p \after t)\ \MUST\ L$ for
    all $t\in \tclass [D'] s$. Therefore, $(p \after \tclass[D] t)\ \MUST\ L$
    because $\tclass[D] t \subseteq \tclass[D'] t = \tclass[D'] s$. Since
    $\mathbb{I}'$ is a refinement of $\mathbb{I}$, there is some
    $I\in\mathbb{I}$ such that $I'\subseteq I$ and $L\subseteq I$ and
    $I\in \mathbb{I}$. Consequently, $p\btmustleq[\mathbb{I}] q $ implies
    $(q \after \tclass[D] t)\ \MUST\ L$ and, hence, $(q \after t)\ \MUST\ L$
    for all $t\in \tclass [D'] s$. Therefore,
    $(q \after {\tclass[D'] s})\ \MUST\ L$.
    \qedhere
  \end{itemize}
\end{proof}

\noindent
This result allows us to conclude that the \distributed\ preorder is coarser
than the classical must testing preorder. It suffices to note that the
preorder associated to the maximal element of the partition lattice, i.e., the
trivial partition ${\mathbb{I}} =\{\actset\}$, corresponds to $\mustleq$.

\begin{lem}\label{lem:eq-must-distributed}
  ${\mustleq} = {\dmustleq{\{\actset\}}}$.
\end{lem}

\begin{proof}
  ($\subseteq$) We show that $p\bmustleq q$ implies
  $p\btmustleq[{\{\actset\}}] q$ for all $p$ and $q$. Assume
  $p\Downarrow \tclass s$ for $s\in\actset^*$. Then,
  \begin{itemize}
  \item $p\Downarrow t$ for all $t\in\tclass s$. Since $p\bmustleq q$, we know
    that $p\Downarrow t$ implies $q \Downarrow t$ for all $t\in \tclass s$.
    Consequently, $q\Downarrow \tclass s$.
  \item Assume $(p \after {\tclass s})\ \MUST\ L$ for any
    $L\subseteq \actset$. Then, $(p \after t)\ \MUST\ L$ for all
    $t\in \tclass s$. Since $p\bmustleq q $, $(q \after t)\ \MUST\ L$ for all
    $t\in \tclass s$. Hence, $(q \after \tclass s)\ \MUST\ L$.
  \end{itemize}

\noindent
($\supseteq$) We show that $p\btmustleq[{\{\actset\}}] q$ implies
$p\bmustleq q$ for all $p$ and $q$. First note that the interface
$\{\actset\}$ induces a total dependency relation $D$ on $\actset$. This
implies $\tclass[ D] s = \{s\}$ for all $s\in\actset^*$. Assume
$p\Downarrow s$ for $s\in\actset^*$. Then,
  \begin{itemize}
  \item $p\Downarrow \tclass s$. From $p\btmustleq[{\{\actset\}}] q$, we know
    that $p\Downarrow \tclass s$ implies $q \Downarrow \tclass s$.
    Consequently, $q\Downarrow s$.
  \item Assume that $(p \after {\tclass s})\ \MUST\ L$ holds for a finite
    $L\subseteq \actset$. Since $p\btmustleq[{\{\actset\}}] q $,
    $(q \after {\tclass s})\ \MUST\ L$. Hence, $(q \after s)\ \MUST\ L$.
    \qedhere
  \end{itemize}
\end{proof}

\begin{cor}\label{prop:must-implies-distributed} Let $\mathbb{I}$ be an
  interface. Then, $p \mustleq q$ implies $ p \dmustleq{\mathbb{I}} q$.
\end{cor}

The converse of Lemma~\ref{lemm-refinement-dist} and
Corollary~\ref{prop:must-implies-distributed} do not hold. Consider the
processes $p = a.b.\zero + a.\zero + b.\zero$ and
$q = b.a.\zero + a.\zero + b.\zero$. It has been shown, in
Example~\ref{ex-dmust-swap-axiom}, that we have
$p \dmustleq{{\{\{a\},\{b\}\}}} q$. Nonetheless, it is easy to check that
$p \not\mustleq q$ (i.e., $p \dmustleq{\{\actset\}} q$) by using
$o=\co{b}.(\tau.\one + \co{a}.\zero)$ as observer.

We also have that the \uncoordinated\ preorder is coarser than the
\distributed\ one.

\begin{prop}\label{lemm-refinement-unc}
  Let $\mathbb{I}$ be an interface. Then, $p\ \dmustleq {\mathbb{I}}\ q$
  implies $p \lmustleq{\mathbb{I}} q$.
\end{prop}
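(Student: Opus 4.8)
The plan is to prove the contrapositive, that $p\not\lmustleq{\mathbb{I}}q$ implies $p\not\dmustleq{\mathbb{I}}q$, working directly from the definitions of the two preorders. First I would observe that the defining implication of $\lmustleq{\mathbb{I}}$ is checked index by index, so $p\not\lmustleq{\mathbb{I}}q$ is equivalent to the existence of a \emph{single} part $I=I_j\in\mathbb{I}$ and a \emph{single} observer $o$ with $\n(o)\subseteq I$ such that $p\abst{I}\ \must\ o$ but $q\abst{I}\ \notmust\ o$. It then suffices to turn $o$ into a \distributedobs\ observer for the whole interface $\mathbb{I}$ that is passed by $p$ but not by $q$.

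Second, I would build $O=\Pi_{i\in 0..n}o_i$ by keeping $o$ on its own part and putting, on every other part, a trivial component that ``absorbs'' whatever $p$ or $q$ can do there: set $o_j:=o$ and, for $i\neq j$, $o_i:=\recX.\bigl(\one+\sum_{\alpha\in A_i}\alpha.X\bigr)$ with $A_i:=I_i\cap(\n(p)\cup\co{\n(p)}\cup\n(q)\cup\co{\n(q)})$ (so $o_i=\recX.\one$ when $A_i=\emptyset$). Each $A_i$ is finite, contained in $I_i$, and closed under complementation, hence $\n(o_i)\subseteq I_i$ and $O$ is a \distributedobs\ observer for $\mathbb{I}$. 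Moreover every such $o_i$ performs no internal $\tau$, always offers $\success$, and offers $\co{\alpha}$ for every action $\alpha\in I_i$ that $p$, $q$ or any of their derivatives can perform.

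Third --- the technical core --- I would prove the correspondence: $r\ \must\ O$ iff $r\abst{I}\ \must\ o$, for $r=p$ and $r=q$. The argument inspects the maximal $\tau$-computations of $r\ \|\ O$. By the properties of the $o_i$ just listed, the $\tau$-steps of $r\ \|\ O$ are exactly: an internal $\tau$ of $r$; an internal $\tau$ of $o$; a synchronisation of $r$ with $o$ on a name of $I$; or a synchronisation of $r$ with some $o_i$ $(i\neq j)$ on a name of $I_i$, after which $o_i$ returns to itself. These match the $\tau$-steps of $r\abst{I}\ \|\ o$, whose internal $\tau$'s of the $r\abst{I}$-component are precisely the internal $\tau$'s of $r$ together with the visible actions of $r$ lying outside $I$; thus the $r$-$o_i$ synchronisations of $r\ \|\ O$ take over the role of the hiding steps of $r\abst{I}$, while the passive $o_i$ are carried along unchanged. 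A routine check then shows that deadlock is preserved --- $r\ \|\ O\not\tr{\tau}$ iff $r$ offers neither $\tau$ nor any action outside $I$, $o$ offers no $\tau$, and $r,o$ admit no synchronisation over $I$, iff $r\abst{I}\ \|\ o\not\tr{\tau}$ --- and, since the $o_i$ always offer $\success$, the observer part of $r\ \|\ O$ reports success at a configuration exactly when $o$ does at the matching configuration of $r\abst{I}\ \|\ o$. Hence maximal $\tau$-computations and their successes correspond, and $r\ \must\ O\iff r\abst{I}\ \must\ o$.

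Instantiating this at $r=p$ and $r=q$ gives $p\ \must\ O$ from $p\abst{I}\ \must\ o$ and $q\ \notmust\ O$ from $q\abst{I}\ \notmust\ o$, which witnesses $p\not\dmustleq{\mathbb{I}}q$ and completes the argument. The step I expect to be most delicate is the computation correspondence: one must check carefully that a visible action of $r$ outside $I$ --- which in $r\ \|\ O$ either fires as a synchronisation $\tau$ with the matching $o_i$ or appears as a visible transition of the whole configuration that the $\must$ test simply ignores --- is faithfully reflected by the internal $\tau$ it becomes under $\abst{I}$, so that the passive components introduce neither spurious deadlocks nor spurious successes. A purely characterisation-based route (deriving $p\bnmustleq[\mathbb{I}]q$ from $p\btmustleq q$, exploiting that for $s\in I^{*}$ the filtered-trace class $\nclass[I]{s}$ is closed under $\equiv_{D}$ and so is a union of Mazurkiewicz traces) handles the convergence clause easily, but the must-sets $L\cup(\actset\setminus I)$ appearing in $\bnmustleq[\mathbb{I}]$ straddle several parts of $\mathbb{I}$ and are not directly usable in the per-part must-condition of $\btmustleq$; closing that gap essentially rebuilds the absorbing-observer construction above.
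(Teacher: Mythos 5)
Your argument is correct, but it follows a genuinely different route from the paper's. The paper proves the statement at the level of the trace characterisations: it shows that $p \btmustleq q$ implies $p \bnmustleq[\mathbb{I}] q$, the key observation being that every filtered class $\nclass{s}$ is a union of Mazurkiewicz classes $\tclass{t}$ (strings related by $\equiv_D$ contain the same symbols, hence the same projection on each part), so both clauses of $\bnmustleq[\mathbb{I}]$ can be checked classwise, the must-set clause being discharged via Lemma~\ref{lem-prop-noisy}(\ref{lem-after-noisy-class}), which converts $(p \after \nclass{s})\ \MUST\ L\cup(\actset\setminus I)$ into a statement about $p\abst I$. You instead stay at the level of observers and prove the contrapositive directly: a single-part witness $o$ of $p\not\lmustleq{\mathbb{I}} q$ is completed to a full \distributedobs\ observer $O$ by adding $\tau$-free, always-successful, always-receptive components on the remaining parts, and the heart of the proof is the correspondence $r\ \must\ O$ iff $r\abst{I}\ \must\ o$. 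Your route buys self-containedness --- it does not depend on Theorems~\ref{th:comm-characterisation} and~\ref{th-characterisation-local-must} --- and it sidesteps the one delicate point of the characterisation route, namely that the sets $L\cup(\actset\setminus I)$ occurring in $\bnmustleq[\mathbb{I}]$ are neither finite nor contained in a single part, so the per-part must-clause of $\btmustleq$ does not apply to them verbatim; the paper's own proof passes over this step rather quickly. The price you pay is the computation-correspondence lemma, which does need a careful proof, but your sketch of it is sound: the absorbing components contribute no $\tau$'s of their own, always offer $\success$, and always offer the complement of every action of $p$ or $q$ lying in their part, so the maximal $\tau$-computations of $r\ \|\ O$ and of $r\abst{I}\ \|\ o$ are in a success- and deadlock-preserving bijection.
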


\begin{proof}
  Let $D$ be the dependency relation induced by $\mathbb{I}$. We first note
  that $\tclass t \subseteq \nclass s$ for all $t\in\nclass s$ and
  $I\in\mathbb{I}$ because every two strings in the same Mazurkiewicz trace
  have the same symbols and symbols in the same part of the interface do not
  commute. Then, assume $p\Downarrow \nclass s$ for $s\in\actset^*$.
  Consequently,
  \begin{itemize}
  \item $p\Downarrow t$ for all $t\in\nclass s$. Since $\nclass t = \nclass s$
    and $\tclass t \subseteq \nclass s$, $p\Downarrow {t'}$ for all
    ${t'}\in\tclass t$ and $t\in\nclass s$. Moreover,
    $p \btmustleq[{\mathbb{I}}] q$ implies $q \Downarrow {t'}$ for all
    ${t'}\in\tclass t$ and $t\in\nclass s$. Consequently,
    $q\Downarrow \nclass s$.
  \item Assume $(p \after {\nclass s})\ \MUST\ L \cup (\actset\backslash I)$
    with $L\subseteq I$. Then, $(p \abst I \after t\abst I)\ \MUST\ L$, for
    all ${t}\in\nclass s$.
    Then $p \btmustleq[{\mathbb{I}}] q$ implies $(q \abst I \after t'
    \abst I)\ \MUST\ L$ for all $t'\in \tclass t$ and $t\in\nclass s$.
    \qedhere
   \end{itemize}
\end{proof}

\noindent
The converse does not hold, i.e., $p\ \lmustleq{\mathbb{I}} q$ does not imply
$q \ \dmustleq{\mathbb{I}} \ p$. Indeed, we have that
$a.b.\zero \ \lmustleq {\{\{a\},\{b\}\}}b.a.\zero$
(Example~\ref{ab-ind-equiv-ba}) but
$a.b.\zero \ \not\dmustleq {\{\{a\},\{b\}\}} b.a.\zero$
(Example~\ref{ex-dmust-swap-axiom}).

\section{Multiparty testing at work}\label{atwork}

In this section we show how to use the proposed preorders for analysing a
larger scenario involving a replicated data store with alternative policies
for consistency. In order to support the task of checking relations, we
introduce a prototype implementation of the alternative characterisations
provided in the paper limited to the fragment of the calculus with finite
processes. We resorted to this limitation because the proposed alternative
characterisations use quantification over all possible traces of a process.
The development of decision procedures for the infinite case (e.g., along the
lines of~\cite{bernardi2017full}) is left to future work.

\subsection{Implementation in Prolog}

To provide the Prolog implementation of the new testing preorders, we rely on
their alternative characterisations in term of traces. The actual
implementation is restricted to the finite fragment of our specification
language (Sequential {\sc ccs}) and is available
from \url{https://github.com/hmelgra/Multiparty-preorders}.

Processes are represented as functional terms built-up from the constants
\PI|0| and \PI|1|, the unary operator \PI|~| (output actions), and the binary
functions \PI|*| (prefix) and \PI|+| (choice). The operational semantics of
finite {\sc ccs} processes is given by the ternary predicate \PI|red(P,L,Q)|,
which is defined in one-to-one correspondence with the inference rules for
finite processes (i.e., those rules that do not involve recursive processes)
in Definition~\ref{def:lts-processes}. The corresponding Prolog predicates are
the following.

\begin{PrologN}
red(1, tick, 0).
red(L * P, L, P).
red(P + _, L, P1) :- red(P, L, P1).
red(_ + Q, L, Q1) :- red(Q, L, Q1).
\end{PrologN}

Now, by building on the predicate \PI|red(_,_,_)|, we inductively define the
ternary (\emph{weak reduction}) relation $P \dtr{S} Q$ as the predicate
\PI|wred(P, S, Q)| below.

\begin{PrologD}
wred(P, [], P).
wred(P, [L|S], Q) :- red(P, L, R), L\=tau, wred(R, S, Q).
wred(P, S, Q) :- red(P, tau, R), wred(R, S, Q).
\end{PrologD}

The rules above respectively stand for $P \dtr{} P$ (line 1); $P \dtr{L S} Q$
if $L \neq \tau$ and $P \tr{L} R$ and $R\dtr{S}Q$ (line 2); and $P \dtr{S} Q$
if $P \tr{\tau} R$ and $R\dtr{S}Q$ (line 3).

Then, the \emph{set of traces from} $P$, $S = \trc{P}$, is defined as follows.
\begin{PrologD}
tr(P, T) :- wred(P, T, _).
str(P, S) :- setof(T, tr(P,T), S).
\end{PrologD}

The set containing the residuals of a process $P$ after the execution of a
sequence of actions $T$ is defined by the following two rules
\begin{PrologD}
after(P, T, []) :- not(wred(P, T, _)), !.
after(P, T, Qs) :- setof(Q, wred(P,T,Q), Qs).
\end{PrologD}

The first rule states that $P\after T = \emptyset$ when $P$ does not have $T$
as one of its traces, while the second one handles the case in which $T$ is a
trace of $P$. The predicate \PI|after(_,_,_)| is implemented with two rules
because \PI|setof(Q, wred(P,T,Q), Qs)| fails when the goal \PI|wred(P,T,Q)|
does not have any solution.

The predicate $P\ \MUST\ L$ of Definition~\ref{def:must-set} is inductively
implemented by the following rules.
\begin{PrologD}
must([], _).
must([P|Ps], L) :-
    member(A, L), wred(P, [A], _), !, must(Ps, L).
\end{PrologD}

Line 1 stands for the base case, i.e., $\emptyset\ \MUST\ L$ for any $L$.
Differently, Line 2 states that for a non empty set of processes
$\{P\}\cup Ps$, it should be the case that there exists some action $A\in L$
such that $P\dtr{A}$ and $Ps\ \MUST\ L$.

We have now all the ingredients needed for the definition of $\bmustleq$.
Our implementation relies on refutation, i.e., we indirectly show
$p \bmustleq q$ by falsifying $p \not\bmustleq q$. From
Definition~\ref{def-charact-dmust}, we deduce that $p\not\btmustleq q$ if
there exist $s\in\actset^*$, $I\in\mathbb{I}$, a finite $L\subseteq I$, such
that $p\Downarrow \tclass s$, and either
\begin{itemize}
\item $q\Uparrow \tclass s$, or
\item $(p \after {\tclass s})\ \MUST\ L$ and
  $(q \after {\tclass    s})\ \NOTMUST\ L$.
\end{itemize}

\noindent
Since we are considering the finite fragment of the calculus, the convergence
predicate $\Downarrow_s$ trivially holds for finite processes. Consequently,
for finite processes we have, $p\not\btmustleq q$ if there exist
$s\in\actset^*$, $I\in\mathbb{I}$, and a finite $L\subseteq I$, such that
$(p \after {\tclass s})\ \MUST\ L$ and $(q \after {\tclass s})\ \NOTMUST\ L$.
Moreover, we take advantage of the refutation procedure to obtain witnesses
that explain why two particular processes are not in $\bmustleq$ relation.
Hence, we implement $\not\bmustleq$ as the quaternary predicate
\PI|notleqmust(P,Q,S,L)| meaning that $P\not\bmustleq Q$ because
$(P \after S)\ \MUST\ L$ but $(Q \after S) \NOTMUST\ L$.

\begin{PrologD}
notleqmust(P, Q, S, L):-
    str(P+Q, Ss), member(S, Ss),
    after(P, S, Ps), after(Q, S, Qs),
    n(P+Q, As), subseteq(L, As),
    must(Ps, L), not(must(Qs, L)).

leqmust(P,Q) :- not(notleqmust(P,Q,_,_)).
\end{PrologD}

Line 2 states that we only consider the set \PI|Ss| of traces that are either
traces of \PI|P| or \PI|Q| and disregard any other trace because the residuals
for both \PI|P| and \PI|Q| are empty in those cases, and hence uninteresting.
When defining must-sets, it is useless to consider actions that are not in the
alphabet of the processes\footnote{The definition of predicate \PI|n(P,As)|,
  which computes the alphabet of \PI|P|, has been omitted because it is
  straightforward.}. Therefore, line 4 states that we only consider subsets
\PI|L| of the names occurring in either \PI|P| or \PI|Q| . Then, in order to
show that \PI|P| and \PI|Q| are not in $\bmustleq$ relation, we search for a
set \PI|L| that is a must-set of the residuals of \PI|P| after \PI|S| (i.e.,
\PI|must(Ps, L)|) but not of the residuals of \PI|Q| after \PI|S| (line 5).
Finally, the predicate $\bmustleq$ is just defined as the negation of
$\not\bmustleq$ (line 7).

As an example of use of the \PI|notleqmust(_,_,_,_)| predicate, we can use it
to show that neither $\zero \mustleq \tau.a.\zero + \tau. b.\zero$ nor
$\tau.a.\zero + \tau. b.\zero \mustleq \zero$ hold.

In fact, the following query
\begin{PrologD}
?- notleqmust(0, tau * a * 0 + tau * b * 0,S,L).
\end{PrologD}
has several solutions, among which we have \PI|S = [a], L = []|.

Similarly,
 \begin{PrologD}
?- notleqmust(tau * a * 0 + tau * b * 0, 0, S,L).
\end{PrologD}
has \PI|S = [], L = [a, b]| among its solutions.

The implementation for the \distributed\ and \uncoordinated\ preorders follows
analogously. First, we generalise the definition of residuals to consider a
set of traces instead of just a trace. This is done just by collecting all the
residuals of the process for each trace in the set. We use the ternary
predicate \PI|afterC(_,_,_)| defined as follows.

\begin{PrologD}
afterC(_,[],[]).
afterC(P,[X|Xs],Ps):- after(P,X,P1s), afterC(P,Xs,P2s),
    union(P1s,P2s,Ps).
\end{PrologD}

In addition, we use two auxiliary predicates:
\PI|independence(I,Ind)|, which computes the independence relation
\PI|Ind| induced by an interface \PI|I|; and
\PI|mazurkiewicz(Ind,S,CT)|, which takes an independence relation
\PI|Ind|, a set of traces \PI|S| belonging to the same equivalence
class, and generates the complete set of traces \PI|CT| in that
equivalence class.  We omit here their definition because are
straightforward and not interesting.  We first define the
  relation \PI|maximalPref(_,_,_,_)| for the set of maximal reductions
  \PI|MR| of a process \PI|P| within the equivalence class of \PI|T|
  (for the interface \PI|I|).
\begin{PrologD}
maximalPref(P,T,I,MR) :- independence(I,Ind), mazurkiewicz(Ind,[T],CT),
    str(P,SP), prefs(SP,CT,PSP), maximal(PSP,PSP,MR),!.
\end{PrologD}
We compute the independence relation \PI|Ind| induced by the interface \PI|I|
and the Mazurkiewicz class \PI|CT| of \PI|T|. Then, we select the maximal
elements \PI|MR| from the set of reductions \PI|SP| of \PI|P| for the class
\PI|CT| (the omitted implementations of \PI|prefs(_,_,_)| and
\PI|maximal(_,_,_)| are uninteresting). Then, the actual set of strings for
Mazurkiewicz trace is the set of maximal prefixes, if they jointly-complete
the trace (i.e., the omitted predicate \PI|dagger(_,_,_)|); otherwise is
empty.
\begin{PrologD}
strClass(P,T,I,MR):-  maximalPref(P,T,I,MR), dagger(MR,T,I).
strClass(P,T,I,[]):-  maximalPref(P,T,I,MR), not(dagger(MR,T,I)).
\end{PrologD}

As for the classical must preorder, we implement $\btmustleq$ in terms of
$\not\btmustleq$, which is defined by the predicate \PI|notlequnc(P,Q,I,T,L)|,
in which the additional parameter \PI|I| stands for the interface. Its
definition is below.

\begin{PrologD}
notlequnc(P,Q,I,T,L):-
    str(P+Q,Ts), !, member(T,Ts),
    strClass(P,T,I,CTP), strClass(Q,T,I,CTQ),
    afterC(P,CTP,Ps), afterC(Q,CTQ,Qs),
    member(PI, I), subseteq(L, PI),
    must(Ps, L), not(must(Qs,L)).

lequnc(P,Q,I) :- not(notlequnc(P,Q,I,_,_)).
\end{PrologD}

The differences with respect to the definition of \PI|notlequnc(_,_,_,_)| are
the following:
\begin{itemize}
\item we compute the set of strings with respect to an equivalence class,
  i.e., \PI|CTP| and \PI|CTQ| (line 3);
\item residuals are obtained for each equivalence class of a trace (line 4)
  (instead of just a string);
\item must-sets are built with actions in just one part of the interface (line
  5).
\end{itemize}

\noindent
Then, we can check, e.g., that
$\tau.a.\zero + \tau.b.\zero \dmustleq{\mathbb{I}} 0$ for
$\mathbb{I} = \{\{a\},\{b\}\}$ by executing the query
\begin{PrologD}
 ?- notlequnc(tau * a * 0 + tau * b * 0,0,[[a],[b]],T,L).
\end{PrologD}
which does not have any solutions.

Also, we can test that $a . b \not\dmustleq{\mathbb{I}} b.a$ for
$\mathbb{I} = \{\{a\},\{b\}\}$, because the query
\begin{PrologD}
?- notlequnc(b* a * 0 ,a * b * 0,[[a],[b]], T, L).
\end{PrologD}
has several solutions, among which we have \PI|T = [], L = [b]|.

The implementation of the \uncoordinated\ preorder consists in the definition
of an analogous predicate \PI|notleqind(P,Q,I,T,L)|, which considers the
equivalence classes of filtered traces instead of the Mazurkiewicz ones. It is
defined as follows.
\begin{PrologD}
notleqind(P,Q,I,T,L1):-
    str(P+Q,Ts), member(T,Ts), member(PI,I),
    filtered(T,PI,Ts,CT), afterC(P,CT,P1), afterC(Q,CT,Q1),
    complement(I, PI, C), subseteq(L1, PI), append(L1,C,L),
    must(P1, L), not(must(Q1,L)).

leqind(P,Q,I) :- not(notlequnc(P,Q,I,_,_)).
\end{PrologD}

In this case the variable \PI|CT| in line 3 stands for the (relevant part of
the) equivalence class of the trace \PI|T|. Since the equivalence classes for
the filtered case are all infinite and, hence, cannot be computed completely,
the predicate \PI|filtered(T,PI,Ts,CT)| simply generates the traces in the
equivalence class of \PI|T| that are also traces of at least one of the two
processes under comparison (note that the residuals are empty for both
processes in the remaining cases, and hence irrelevant). The definition of
\PI|filtered(T,PI,Ts,CT)| takes a part of the interface \PI|PI| and a set of
traces \PI|Ts|, and returns \PI|CT| which contains the set of traces in
\PI|Ts| whose projection over \PI|PI| coincides with the projection of \PI|T|.
Note that \PI|Ts| in line 3 corresponds to the traces in either \PI|P| or
\PI|Q| (line 2). The remaining difference concerns to the generation of
must-sets (line 4). In this case, each candidate must-set \PI|L| contains a
subset \PI|L1| of the part of the interface under analysis \PI|PI| and the set
\PI|C| containing all actions in the interface \PI|I| that are not in \PI|PI|
(this set is computed by the predicate \PI|complement(I, PI, C)|, whose
definition has been omitted).

We can use this predicate to check, e.g., that
$a . b.\zero \lmustleq{\mathbb{I}} b.a.\zero$ for
$\mathbb{I} = \{\{a\},\{b\}\}$ by executing the query
\begin{PrologD}
 ?- notleqind( a * b * 0, b * a * 0, [[a],[b]], T,L).
\end{PrologD}
which does not have any solution.

Also, we may check that $a.b.\zero \lmustleq{\mathbb{I}} b.a.\zero$ does not
hold when $\mathbb{I} = \{\{a, b\}\}$ because the query
\begin{PrologD}
?- notleqind( a * b * 0, b * a * 0, [[a,b]], T,L).
\end{PrologD}
has several solutions, e.g., \PI|T = [], L = [a]|.

We now illustrate the use of the introduced preorders and of our prototype
implementation in a larger scenario.

\subsection{A case study}\label{ex:dynamo}
Distributed, non-relational databases such as Dynamo~\cite{DeCandiaHJKLPSVV07}
and Cassandra~\cite{LakshmanM10} provide highly available storage by
replicating data and relaxing consistency guarantees. Such databases store
key-value pairs that can be accessed by using two operations: {\tt get} to
retrieve the value associated with a key, and {\tt put} to store the value of
a particular key. A client issuing an operation interacts with the closest
server, which plays the role of a coordinator and mediates between the client
and the replicas to complete the client request. Each client request is
associated with a consistency level, which specifies the degree of consistency
required over data. For a {\tt put} operation, the consistency level states
the number of replicas that must be written before sending an acknowledgement
to the client. Similarly, the consistency level of a {\tt get} operation
specifies the number of replicas that must reply to the read request before
returning the data to the client. Cassandra provides several consistency
levels; for instance, an operation may request to be performed over just {\tt
  ONE} or {\tt TWO} replicas, or over the majority of the replicas (\ie, {\tt
  QUORUM}) or over {\tt ALL} the replicas. Consequently, depending on the
consistency level required by the client, the coordinator chooses the replicas
to contact.

We will now describe the behaviour of a node acting as coordinator in a
configuration that involves two additional replicas. Then we will introduce
alternative policies the coordinator might want to use when reacting to users
request and will discuss their relationships.

For simplicity reasons, we just illustrate the protocol for processing the
operation {\tt get} and abstract away from the values exchanged during the
communication (the {\tt put} operation is analogous). The actual protocol for
handling a {\tt get} is described below as a {\tt CCS} process.

\bigskip

\begin{math}
  \begin{array}{r@{\ }c@{\ }l}
    {\tt Coord}
    & \bydef &
    {\tt get}.(\tau.\co{\tt err}.\zero+ \tau.\co{\tt ret}.\zero + \tau.{\tt
               Query}_1 \\ && \hspace{.8cm}
               + \tau.{\tt Query}_2 + \tau.{\tt Query}_{1,2} + \tau.{\tt Query}_{2,1})
    \\
    {\tt Query}_i
    & \bydef &
    \co{{\tt read}_i}.( \tau.\co{\tt err}.\zero
    + {{\tt ret}_i}.(\tau.\co{\tt err}.\zero
    + \tau.\co{\tt ret}.\zero))
    \\
    {\tt Query}_{i,j}
    & \bydef &
    \co{{\tt read}_i}.\co{{\tt read}_j}.(\tau.\co{\tt err}.\zero
    + {\tt Ans}_{i,j} +  {\tt Ans}_{j,i})
    \\
    {\tt Ans}_{i,j}
    & \bydef &
    {{\tt ret}_i}.(\tau.\co{\tt err} .\zero
    + \tau.\co{\tt ret}.\zero
    + {{\tt ret}_j}.(\tau.\co{\tt ret}.\zero
    +\tau.\co{\tt err}.\zero))
    \\
  \end{array}
\end{math}

\bigskip

As stated in ${\tt Coord}$, the coordinator after receiving the request
${\tt get}$ internally decides to either:
\begin{itemize}
\item reply to the client with the error message ${\tt err}$, e.g., when the
  available nodes are not enough to guarantee the requested consistency level;
  or
\item return the requested information by using just local information
  (message ${\tt ret}$); or
\item retrieve information by contacting just one of the additional replicas
  following the protocol defined by ${\tt Query}_{i}$; or
\item retrieve information from both replicas, following the protocol defined
  by ${\tt Query}_{i,j}$.
\end{itemize}

\noindent
The protocol followed by the coordinator when contacting replica $i$ is
modelled by process ${\tt Query}_i$: The coordinator sends a read request over
the channel ${{\tt read}_i}$ and awaits an answer on channel ${{\tt ret}_i}$,
however it may internally decide not to wait for the answer from the replica
and send an error to the client (e.g., in a timeout expires). When the
coordinator receives the response from the replica, it may return the
requested information to the client or signal an error (e.g., when the
consistency level cannot be satisfied by the current state of the replicas).

The protocol followed by the coordinator when contacting both replicas is
modelled by process ${\tt Query}_{i,j}$: When awaiting for their responses,
the coordinator may internally decide to reply to the client before or after
receiving any of the two answers.

Any equation ${\tt name} \bydef {proc}$ above can be defined in Prolog by
using the predicate \PI|proc(name, proc)| as shown below.

\begin{PrologD}
proc(coord,  get * (tau * ~err * 0 + tau * ~ret * 0
                     + tau * Query1    + tau * Query2
      	             + tau * Query12  + tau * Query21))
:- proc(query(1), Query1), proc(query(2), Query2),
   proc(query(1,2), Query12), proc(query(2,1), Query21).

proc(query(I), ~read(I) * (tau * ~err * 0
                              + ret(I) * (tau * ~err * 0 + tau * ~ret * 0))).

proc(query(I,J), ~read(I) * ~read(J) * (tau * ~err * 0 + AnsIJ + AnsJI))
:- proc(ans(I,J),AnsIJ), proc(ans(J,I),AnsJI).

proc(ans(I,J), ret(I) * (tau * ~ret * 0 + tau * ~err * 0
  	             + ret(J) * (tau * ~ret * 0 + tau * ~err * 0))).
\end{PrologD}

A possible implementation of ${\tt Coord}$ may only provide the part of the
protocol that always contact the two additional replicas regardless of the
information and consistency level requested by the client. Such implementation
can be described as follows,
\[
\begin{array}{rcl@{}l}
  {\tt Coord}_1
  & \bydef &
  {\tt get}. {\tt Query}_{1,2}
\end{array}
\]
where ${\tt Query}_{1,2}$ is as before.  This defining equation is
implemented in Prolog as follows,

\begin{PrologD}
proc(coord1, get * Query12) :- proc(query(1,2), Query12).
\end{PrologD}

We can check that ${\tt Coord} \mustleq {\tt Coord}_1$ by performing
the query
\begin{PrologD}
?:- proc(coord, Coord), proc(coord1,Coord1), leqmust(Coord,Coord1).
\end{PrologD}

An alternative implementation of ${\tt Coord}$ may decide to communicate an
error to the client but still accept responses from the replicas after this
interaction. This feature allows the coordinator to update its local state
with information that can be used when answering future requests. Such
implementation can be described as follows:

\bigskip

\begin{math}
  \begin{array}{rcl@{}l}
    {\tt Coord}_2
    & \bydef &
    {\tt get}.\co{{\tt read}_1}.\co{{\tt read}_2}.
    (\tau.\co{\tt err}.{{\tt ret}_1}.{{\tt ret}_2}.\zero
    + {\tt Wait}_{1,2} + {\tt Wait}_{2,1})\\
    {\tt Wait}_{i,j}
    & \bydef &
    {{\tt ret}_i}.(\tau.\co{\tt ret}.{{\tt ret}_j}.\zero
    +\tau.\co{\tt err}.{{\tt ret}_j}.\zero
    + {{\tt ret}_j}.(\tau.\co{\tt ret}.\zero
    + \tau.\co{\tt err}.\zero))
  \end{array}
\end{math}

\bigskip

Note that ${\tt Coord}_2$ accepts responses from the replicas even after it
has replied to the client. As for ${\tt Coord}$, the definition of
${\tt Coord}_2$ in Prolog is straightforward (and omitted here). When
considering the classical must testing preorder, it holds that
${\tt Coord} \not\mustleq {\tt Coord}_2$. However, as far as the behaviour of
the client and the replicas is concerned, the implementation of
${\tt Coord}_2$ is harmless. In fact, we can prove that
${\tt Coord} \dmustleq{\mathbb{I}} {\tt Coord}_2$ when
\[
  {\mathbb{I}}
  =\{
  \{{\tt get},{\tt ret},{\tt err}\},
  \{{\tt get},{\tt read}_1,{\tt ret}_1\},
  \{{\tt get},{\tt read}_2,{\tt ret}_2\}
  \}
\]

For convenience, when querying the program we add the following definition
rule for the interface.

\begin{PrologD}
int([[get,~ret,~err], [~read(1),ret(1)],[~read(2),ret(2)]]).
\end{PrologD}
and then query the program as follows:
\begin{PrologD}
?- proc(coord, C), proc(coord2,C2), int(I), notlequnc(C,C2,I,_,_).
\end{PrologD}

The query above has no solutions, and hence
${\tt Coord} \dmustleq{\mathbb{I}} {\tt Coord}_2$. Similarly, it can be proved
that ${\tt Coord}_1 \dmustleq{\mathbb{I}} {\tt Coord}_2$. On the contrary, it
can be checked the neither ${\tt Coord}_2 \dmustleq{\mathbb{I}} {\tt Coord}$
nor ${\tt Coord}_2 \dmustleq{\mathbb{I}} {\tt Coord}_1$. For instance, the
query
\begin{PrologD}
?-  proc(coord2,C2), proc(coord1, C1), int(I), notlequnc(C2,C1,I,S,L).
\end{PrologD}
has several solutions, e.g.,:

\begin{itemize}
\item
  \PI|S = [get, ~read(1), ~read(2), ret(1), ~err], L = [ret(2)]|,

\item
  \PI|S = [get, ~read(1), ~read(2), ret(2), ~err], L = [ret(1)]|,

\item
  \PI|S = [get, ~read(1), ~read(2), ~err], L = [ret(1)]|,

\item
  \PI|S = [get, ~read(1), ~read(2), ~err, ret(1)],L = [ret(2)]|.
\end{itemize}

\noindent
All of them show that ${\tt Coord}_2$ is able to accept an answer from a
replica even after signaling an error to the client, while ${\tt Coord_1}$ is
not. Consequently, a replica may distinguish the behaviours of the different
implementations: when interacting with ${\tt Coord_1}$, a replica $i$ may
discover that the coordinator has sent the message $\co{{\tt err}}$ to the
client because the interaction $\co{{\tt ret}_i}$ cannot not take place.

We now consider a variant of ${\tt Coord}_2$ that chooses a different order
for contacting replicas, defined as follows:
\[
\begin{array}{rcl@{}l}
  {\tt Coord}_3
  & \bydef &
  {\tt get}.\co{{\tt read}_2}.\co{{\tt read}_1}.
  (\tau.\co{\tt err}.{{\tt ret}_1}.{{\tt ret}_2}.\zero
  + {\tt Wait}_{1,2} + {\tt Wait}_{2,1})\\
\end{array}
\]
where ${\tt Wait}_{i,j}$ is defined as before. The only difference between
${\tt Coord}_2$ and ${\tt Coord}_3$ is the order in which $\co{{\tt read}_1}$
and $\co{{\tt read}_2}$ are executed.

We have that ${\tt Coord}_2$ and ${\tt Coord}_3$ are still distinguishable in
the \distributed\ preorder. For instance, the query
\begin{PrologD}
?- proc(coord2, C2), proc(coord3,C3), int(I), notlequnc(C2,C3,I,S,L).
\end{PrologD}
has among it solutions the following one:
\begin{PrologD}
S = [get], L = [~read(1)]
\end{PrologD}
showing that ${\tt Coord}_2 \not\dmustleq {\mathbb{I}}{\tt Coord}_3$. The test
associated with the above witness is built by preventing the interaction with
the replica $2$ (i.e., when the communication over ${\tt read}_2$ is not
enabled). However, if the interaction with the replicas is guaranteed, both
implementations should be deemed as indistinguishable. In fact,
${\tt Coord}_2$ and ${\tt Coord}_3$ are indistinguishable in the
\uncoordinated\ preorder. We remark, however, that ${\tt Coord}_1$ is still
not equivalent to either ${\tt Coord}_2$ or ${\tt Coord}_3$. For instance, the
pair
\begin{PrologD}
S = [get, ~read(1)], L = [~ret(1)]
\end{PrologD}
witnesses the fact that
${\tt Coord_3} \not\lmustleq {\mathbb{I}}{\tt Coord}_1$. In fact, while
${\tt Coord_3}$ ensures that it will always receive the reply from the replica
$1$ after sending the request ${\tt read}_1$. This is not the case for
${\tt Coord_1}$, which may refuse to communicate over ${\tt read}_1$, e.g.,
after an internal timeout.


\section{Conclusions and related work}\label{sec:conclusions}
In this paper we have explored different relaxations of the must testing
preorder tailored to define new behavioural relations that, in the framework
of Service Oriented Computing, are better suited to study compliance between
contracts exposed by clients and servers interacting via synchronous binary
communication primitives.

In particular, we have considered two different scenarios in which contexts of
a service are represented by processes with distributed control. The first
variant, that we called \distributed\ preorder, corresponds to multiparty
contexts without runtime communication between peers but with the possibility
of one peer to block another if it does not perform the expected action.
Indeed, the observations at the basis of our experiments are designed with the
assumption that the users of a service interact only via dedicated ports but
might be influenced by the fact that other partners do not perform the
expected actions. The second preorder we introduced is called \uncoordinated\
preorder. It accounts for partners that are completely independent from the
behaviour of the other ones. Indeed, from a viewpoint of a client, actions by
other clients are considered unobservable.

We have shown that the discriminating power of the induced equivalences
decreases as observers become weaker; and thus that the \uncoordinated\
preorder for a given interface is coarser than the \distributed\ preorder for
the same interface, which in turn is coarser than the classical testing
preorder. As future work we plan to consider different ``real life'' scenarios
and to assess the impact of the different assumptions at the basis of the new
preorders and the identifications/orderings they induce. We plan also to
perform further studies to get a fuller account, possibly via axiomatisations,
of their discriminating power. In the near future, we will also consider the
impact of our testing framework on calculi based on asynchronous interactions.

Several variants of the must testing preorder, contract compliance and
sub-contract relation have been developed in the literature to deal with
different aspects of services compositions, such as buffered asynchronous
communication
~\cite{bravetti2008foundational,padovani2010contract,mostrous2009global},
fairness~\cite{padovani2011fair},
peer-interaction~\cite{bernardi2013mutually}. We have however to remark that
these approaches deal with aspects that are somehow orthogonal to the
discriminating power of the distributed tests analysed in this work. Our
preorders have some similarities with those relying on buffered communications
in that both aim at guaranteeing that actions performed by independent peers
can be reordered, but we rely on synchronous communication and, hence, message
reordering is not obtained by swapping buffered messages but by relying on
more generous observers. As mentioned above, we have left the study of
distributed tests under asynchronous communication as a future work. However,
we would like to remark that, even the \distributed\ and the \uncoordinated\
preorders are different from those
in~\cite{bravetti2008foundational,padovani2010contract,mostrous2009global}
that permit explicit action reordering. The paradigmatic example is the
equivalence $a.c + b.d \lmusteq{\{a,b\},\{c,d\}} a.d + b.c$, which does not
hold for any of the preorders with buffered communication. The main reason is
that, even in presence of buffered communication, the causality, e.g., between
$a$ and $c$ is always observed.

\section*{Acknowledgments}
We are very grateful to Marzia Buscemi for interesting discussions during
early stages of this work. We thank the anonymous reviewers for their careful
reading of our manuscript and their many insightful comments and suggestions.
The work has been partially supported by CONICET (under project PIP
11220130100148CO), UBA (under project UBACYT 20020130200092BA), and MIUR under
PRIN projects CINA and IT-Matters.

\bibliographystyle{alphaurl}
\bibliography{biblio}

\newcommand{\etalchar}[1]{$^{#1}$}
\begin{thebibliography}{DHJ{\etalchar{+}}07}

\bibitem[BBO12]{basu2012deciding}
Samik Basu, Tevfik Bultan, and Meriem Ouederni.
\newblock Deciding choreography realizability.
\newblock In John Field and Michael Hicks, editors, {\em Proceedings of the
  39th {ACM} {SIGPLAN-SIGACT} Symposium on Principles of Programming Languages,
  {POPL} 2012, Philadelphia, Pennsylvania, USA, January 22-28, 2012}, pages
  191--202. {ACM}, 2012.
\newblock \href {https://doi.org/10.1145/2103656.2103680}
  {\path{doi:10.1145/2103656.2103680}}.

\bibitem[BF18]{bernardi2017full}
Giovanni~Tito Bernardi and Adrian Francalanza.
\newblock Full-abstraction for client testing preorders.
\newblock {\em Sci. Comput. Program.}, 168:94--117, 2018.
\newblock \href {https://doi.org/10.1016/j.scico.2018.08.004}
  {\path{doi:10.1016/j.scico.2018.08.004}}.

\bibitem[BH13]{bernardi2013mutually}
Giovanni Bernardi and Matthew Hennessy.
\newblock Mutually testing processes - (extended abstract).
\newblock In Pedro~R. D'Argenio and Hern{\'{a}}n~C. Melgratti, editors, {\em
  {CONCUR} 2013 - Concurrency Theory - 24th International Conference, {CONCUR}
  2013, Buenos Aires, Argentina, August 27-30, 2013. Proceedings}, volume 8052
  of {\em Lecture Notes in Computer Science}, pages 61--75. Springer, 2013.
\newblock \href {https://doi.org/10.1007/978-3-642-40184-8\_6}
  {\path{doi:10.1007/978-3-642-40184-8\_6}}.

\bibitem[BZ08]{bravetti2008foundational}
Mario Bravetti and Gianluigi Zavattaro.
\newblock A foundational theory of contracts for multi-party service
  composition.
\newblock {\em Fundam. Informaticae}, 89(4):451--478, 2008.
\newblock URL:
  \url{http://content.iospress.com/articles/fundamenta-informaticae/fi89-4-05}.

\bibitem[BZ09]{BZ:TUTCCCC}
Mario Bravetti and Gianluigi Zavattaro.
\newblock Contract-based discovery and composition of web services.
\newblock In Marco Bernardo, Luca Padovani, and Gianluigi Zavattaro, editors,
  {\em Formal Methods for Web Services, 9th International School on Formal
  Methods for the Design of Computer, Communication, and Software Systems,
  {SFM} 2009, Bertinoro, Italy, June 1-6, 2009, Advanced Lectures}, volume 5569
  of {\em Lecture Notes in Computer Science}, pages 261--295. Springer, 2009.
\newblock \href {https://doi.org/10.1007/978-3-642-01918-0\_7}
  {\path{doi:10.1007/978-3-642-01918-0\_7}}.

\bibitem[CGP08]{CastagnaGP08}
Giuseppe Castagna, Nils Gesbert, and Luca Padovani.
\newblock A theory of contracts for web services.
\newblock In George~C. Necula and Philip Wadler, editors, {\em Proceedings of
  the 35th {ACM} {SIGPLAN-SIGACT} Symposium on Principles of Programming
  Languages, {POPL} 2008, San Francisco, California, USA, January 7-12, 2008},
  pages 261--272. {ACM}, 2008.
\newblock \href {https://doi.org/10.1145/1328438.1328471}
  {\path{doi:10.1145/1328438.1328471}}.

\bibitem[CGP09]{CGP09:TCWS}
Giuseppe Castagna, Nils Gesbert, and Luca Padovani.
\newblock A theory of contracts for web services.
\newblock {\em {ACM} Trans. Program. Lang. Syst.}, 31(5):19:1--19:61, 2009.
\newblock \href {https://doi.org/10.1145/1538917.1538920}
  {\path{doi:10.1145/1538917.1538920}}.

\bibitem[DHJ{\etalchar{+}}07]{DeCandiaHJKLPSVV07}
Giuseppe DeCandia, Deniz Hastorun, Madan Jampani, Gunavardhan Kakulapati,
  Avinash Lakshman, Alex Pilchin, Swaminathan Sivasubramanian, Peter Vosshall,
  and Werner Vogels.
\newblock Dynamo: amazon's highly available key-value store.
\newblock In Thomas~C. Bressoud and M.~Frans Kaashoek, editors, {\em
  Proceedings of the 21st {ACM} Symposium on Operating Systems Principles 2007,
  {SOSP} 2007, Stevenson, Washington, USA, October 14-17, 2007}, pages
  205--220. {ACM}, 2007.
\newblock \href {https://doi.org/10.1145/1294261.1294281}
  {\path{doi:10.1145/1294261.1294281}}.

\bibitem[HVK98]{HondaVK98}
Kohei Honda, Vasco~Thudichum Vasconcelos, and Makoto Kubo.
\newblock Language primitives and type discipline for structured
  communication-based programming.
\newblock In Chris Hankin, editor, {\em Programming Languages and Systems -
  ESOP'98, 7th European Symposium on Programming, Held as Part of the European
  Joint Conferences on the Theory and Practice of Software, ETAPS'98, Lisbon,
  Portugal, March 28 - April 4, 1998, Proceedings}, volume 1381 of {\em Lecture
  Notes in Computer Science}, pages 122--138. Springer, 1998.
\newblock \href {https://doi.org/10.1007/BFb0053567}
  {\path{doi:10.1007/BFb0053567}}.

\bibitem[LM10]{LakshmanM10}
Avinash Lakshman and Prashant Malik.
\newblock Cassandra: a decentralized structured storage system.
\newblock {\em {ACM} {SIGOPS} Oper. Syst. Rev.}, 44(2):35--40, 2010.
\newblock \href {https://doi.org/10.1145/1773912.1773922}
  {\path{doi:10.1145/1773912.1773922}}.

\bibitem[LP07]{LaneveP07}
Cosimo Laneve and Luca Padovani.
\newblock The \emph{Must} preorder revisited.
\newblock In Lu{\'{\i}}s Caires and Vasco~Thudichum Vasconcelos, editors, {\em
  {CONCUR} 2007 - Concurrency Theory, 18th International Conference, {CONCUR}
  2007, Lisbon, Portugal, September 3-8, 2007, Proceedings}, volume 4703 of
  {\em Lecture Notes in Computer Science}, pages 212--225. Springer, 2007.
\newblock \href {https://doi.org/10.1007/978-3-540-74407-8\_15}
  {\path{doi:10.1007/978-3-540-74407-8\_15}}.

\bibitem[Maz86]{DBLP:conf/ac/Mazurkiewicz86}
Antoni~W. Mazurkiewicz.
\newblock Trace theory.
\newblock In Wilfried Brauer, Wolfgang Reisig, and Grzegorz Rozenberg, editors,
  {\em Petri Nets: Central Models and Their Properties, Advances in Petri Nets
  1986, Part II, Proceedings of an Advanced Course, Bad Honnef, Germany, 8-19
  September 1986}, volume 255 of {\em Lecture Notes in Computer Science}, pages
  279--324. Springer, 1986.
\newblock \href {https://doi.org/10.1007/3-540-17906-2\_30}
  {\path{doi:10.1007/3-540-17906-2\_30}}.

\bibitem[Maz95]{mazurkiewicz1995introduction}
Antoni~W. Mazurkiewicz.
\newblock Introduction to trace theory.
\newblock In Volker Diekert and Grzegorz Rozenberg, editors, {\em The Book of
  Traces}, pages 3--41. World Scientific, 1995.
\newblock \href {https://doi.org/10.1142/9789814261456\_0001}
  {\path{doi:10.1142/9789814261456\_0001}}.

\bibitem[Mil89]{Mil:CC}
Robin Milner.
\newblock {\em Communication and concurrency}.
\newblock {PHI} Series in computer science. Prentice Hall, 1989.

\bibitem[MYH09]{mostrous2009global}
Dimitris Mostrous, Nobuko Yoshida, and Kohei Honda.
\newblock Global principal typing in partially commutative asynchronous
  sessions.
\newblock In Giuseppe Castagna, editor, {\em Programming Languages and Systems,
  18th European Symposium on Programming, {ESOP} 2009, Held as Part of the
  Joint European Conferences on Theory and Practice of Software, {ETAPS} 2009,
  York, UK, March 22-29, 2009. Proceedings}, volume 5502 of {\em Lecture Notes
  in Computer Science}, pages 316--332. Springer, 2009.
\newblock \href {https://doi.org/10.1007/978-3-642-00590-9\_23}
  {\path{doi:10.1007/978-3-642-00590-9\_23}}.

\bibitem[NH84]{DBLP:journals/tcs/NicolaH84}
Rocco~De Nicola and Matthew Hennessy.
\newblock Testing equivalences for processes.
\newblock {\em Theor. Comput. Sci.}, 34:83--133, 1984.
\newblock \href {https://doi.org/10.1016/0304-3975(84)90113-0}
  {\path{doi:10.1016/0304-3975(84)90113-0}}.

\bibitem[NM15]{NicolaM15}
Rocco~De Nicola and Hern{\'{a}}n~C. Melgratti.
\newblock Multiparty testing preorders.
\newblock In Pierre Ganty and Michele Loreti, editors, {\em Trustworthy Global
  Computing - 10th International Symposium, {TGC} 2015, Madrid, Spain, August
  31 - September 1, 2015 Revised Selected Papers}, volume 9533 of {\em Lecture
  Notes in Computer Science}, pages 16--31. Springer, 2015.
\newblock \href {https://doi.org/10.1007/978-3-319-28766-9\_2}
  {\path{doi:10.1007/978-3-319-28766-9\_2}}.

\bibitem[Pad10]{padovani2010contract}
Luca Padovani.
\newblock Contract-based discovery of web services modulo simple orchestrators.
\newblock {\em Theor. Comput. Sci.}, 411(37):3328--3347, 2010.
\newblock \href {https://doi.org/10.1016/j.tcs.2010.05.002}
  {\path{doi:10.1016/j.tcs.2010.05.002}}.

\bibitem[Pad11]{padovani2011fair}
Luca Padovani.
\newblock Fair subtyping for multi-party session types.
\newblock In Wolfgang~De Meuter and Gruia{-}Catalin Roman, editors, {\em
  Coordination Models and Languages - 13th International Conference,
  {COORDINATION} 2011, Reykjavik, Iceland, June 6-9, 2011. Proceedings}, volume
  6721 of {\em Lecture Notes in Computer Science}, pages 127--141. Springer,
  2011.
\newblock \href {https://doi.org/10.1007/978-3-642-21464-6\_9}
  {\path{doi:10.1007/978-3-642-21464-6\_9}}.

\bibitem[THK94]{THK94}
Kaku Takeuchi, Kohei Honda, and Makoto Kubo.
\newblock An interaction-based language and its typing system.
\newblock In Constantine Halatsis, Dimitris~G. Maritsas, George Philokyprou,
  and Sergios Theodoridis, editors, {\em {PARLE} '94: Parallel Architectures
  and Languages Europe, 6th International {PARLE} Conference, Athens, Greece,
  July 4-8, 1994, Proceedings}, volume 817 of {\em Lecture Notes in Computer
  Science}, pages 398--413. Springer, 1994.
\newblock \href {https://doi.org/10.1007/3-540-58184-7\_118}
  {\path{doi:10.1007/3-540-58184-7\_118}}.

\end{thebibliography}

\appendix


\section{Proof Details of results in
  Section~\ref{sec:uncoordinated}}\label{ap:local-proofs}

In this section we provide the detailed proofs of results in
Section~\ref{sec:uncoordinated}.

\convnoisy*

\begin{proof}
\hfill
  \begin{enumerate}
  \item By contradiction. Suppose there exists $s=a_1\ldots a_n$ such that
    $p\Downarrow \nclass s$ and $q\Uparrow \nclass s$. Then, take the observer
    $o_i$ defined as follows
    \[oi = \tau.\one + \co{b^i_1}.(\tau.\one +  \ldots (\tau.\one + \co{b^i_{k_i}}.\tau.\one)\ldots)\]
    with $s_i = s\cohide {I} =b^i_1\ldots b^i_{k_i}$. Then,  $p\abst
    I\ \must\ o_i$ and $q\abst I\ \notmust\ o_i$.

    Note that $o_i \Downarrow$. Since $p\Downarrow\nclass{ s_i}$, every
    maximal computation of $(p \abst I) \| o_i $ does not diverge. For each
    maximal computation
    $(p \abst I) \| o_i \dtr{} (p' \abst I) \| o_i'\not\tr{}$ we proceed by
    unzipping the computation to conclude that $o_i \dtr{t} o_i'\not\tr{}$ for
    some $t$. It can be shown by straightforward induction on the length of
    the reduction that $o_i \dtr{t} o_i'\not\tr{}$ implies
    $o_i \dtr{t} \one \tr{\success}\zero = o'_i$. Consequently, each maximal
    computation of $(p \abst I) \| o_i $ is successful and $p\ \must\ o_i$.

    Since $q\Uparrow \nclass {s_i}$ there exists $t\in\nclass {s_i}$ such that
    $t=t_1t_2$ and $q\dtr{t_1}q'$ and $q'\Uparrow$. As before, we can conclude
    that $o_i\dtr{\co{t_1}\abst I}o_i'\dtr{\co{t_2}\abst I}$. It can be shown
    by induction on the length of $\co{t_1}$ that there exists an unsuccessful
    computation $o_i\dtr{\co{t_1}\abst I}o_i'$ such that either
    $o_i''=\tau.\one$ or $o_i''= (\tau.\one + a.\ldots)$. Then, there exists a
    maximal (divergent) unsuccessful computation
    $(q\abst I) \| o_i \dtr{} (q'\abst I) \| o''_i \tr{} (q''\abst I) \| o''_i
    \tr{} \ldots$. Consequently, $q\notmust\ o$ and this contradicts the
    assumption $p\lmustleq {\mathbb I} q$.

  \item By contradiction. Suppose there exists $s=a_1\ldots a_n$ such that
    $p\Downarrow \nclass s$, $s\in\trc q$ and for all $t\in \nclass s$,
    $t\not\in\trc p$. Then, choose $o_i$ defined as follows
    \[o_i= \tau.\one + \co{b^i_1}.(\tau.\one +  \ldots (\tau.\one + \co{b^i_{k_i}}.\zero)\ldots)  \]
    with $s_i =  s\cohide {I} =b^i_1\ldots b^i_{k_i}$.

    Note that $(q\abst I) \notmust\ o_i$ because there is a maximal
    unsuccessful computation of $(q\abst I) \| o_i$. Since $t\in \nclass s$,
    $t \abst I= t' b^i_{k_i}$. Without lost of generality, we assume
    $t= t'' b^i_{k_i}$ $t\not\in\trc p$. Then, either (i)
    $(p\abst I) \dtr{t'}\not\tr{b^i_{k_i}}$ or (ii) $(p\abst I)\not\dtr{t'}$.
    Case (i) follows immediately, because $o_i \dtr{t'} o_i'\not\tr{}$ implies
    $o_i \dtr{t'} \one \tr{\success} \zero = o'$. Hence,
    $(p\abst I) \must\ o_i$, which is in contradiction with the assumption
    that $p\lmustleq {\mathbb I} q$. Case (ii) follows by noting that
    $(p\abst I)\not\dtr{t'}$ implies that there exists $t_1$, $t_2$ with
    $t_2\neq\epsilon$ such that $p\dtr{t_1}\not\dtr{t_2}$. Moreover, if
    $o_i \dtr{t_1} o'_i\not\tr{}$ implies
    $o_i \dtr{t_2} \one \tr{\success} \zero = o'_i$. Consequently, every
    maximal computation of $(p\abst I)\| o_i$ is successful and
    $(p\abst I)\ \notmust\ o_i$, which is in contradiction with the assumption
    $p\dmustleq {\mathbb I} q$.
    \qedhere
  \end{enumerate}
\end{proof}

\mustemptynoisy*
\begin{proof}
  Suppose $\forall t\in\nclass s$, $t\not\in\trc p$. Then $(p \after
  {\nclass s}) =\emptyset$ and, by definition, $\emptyset\ \MUST\ L$
  for every finite $L\subseteq \actset$.
\end{proof}

\traceimplicationrecnoisy*
\begin{proof}
  Assume $\forall t\in\nclass s:t\not\in\trc p$. By
  Lemma~\ref{lem-aux:must-empty-noisy}, $(p \after {\nclass s})\ \MUST\ L$ for
  every finite $L\subseteq \actset$. By straightforward induction on the
  length of the reduction, we can show that $p \dtr{t} q$ implies
  $\n(q) \subseteq \n(p)$. Analogously, we can show that
  $\init q \subseteq \n(p)$. Consequently,
  $\bigcup\{\init {q'}\ |\ q\dtr{\nclass s} q'\} \subseteq \n(p)$. Since,
  $\n(p)$ is finite, we can conclude that the set
  $\bigcup\{\init{q'}\ |\ q\dtr{\nclass s} q'\}$ is finite. Therefore, we can
  find an action $a$ such that for all $t\in\nclass s$ we have
  $q\not\dtr{t a}$. Then $(q \after {\nclass s})\ \NOTMUST\ \{a\}$ while
  $(p \after {\nclass s})\ \MUST\ \{a\}$, which contradicts the hypothesis
  $p \btmustleq q$.
\end{proof}

\medskip

\thcharacterisationlocalmust*
\begin{proof}
  \hfill
  \begin{itemize}[align=left]
  \item [($\subseteq$)] Actually we prove that $p\not\bnmustleq[I] q$ implies
    $p \not\lmustleq{I} q$. Assume that there exists $s=a_1\ldots a_n$ and
    $L\subseteq I$ such that
    \begin{enumerate}
    \item $p\Downarrow \nclass s$ and $q\Uparrow \nclass s$, or
    \item $s \in \trc q$ and $\forall t\in\nclass s. t\not\in\trc p$ or
    \item $(p \after \nclass s) \ \MUST\ L \cup (\actset\backslash I)$ and
      $(q \after \nclass s) \NOTMUST\ L\cup (\actset\backslash I)$
    \end{enumerate}
    For each case we show that there exists an observer such that
    $p\abst I\ \must\ o$ and $q\abst I\ \notmust\ o$. For the two first cases,
    we take the observers as defined in proof of
    Lemma~\ref{lem-aux:conv-noisy}. For the third one, define
  \[o = \tau.\one + \co{b^i_1}.(\tau.\one + \ldots (\tau.\one +
    \co{b^i_{k_i}}.\sum_{a\in L} a.\one)\ldots)\] with
  $s\cohide {I} = b^i_1\ldots b^i_{k_i}$.
\item [($\supseteq$)] We prove $p \bnmustleq[I] q$ implies $p \lmustleq{I} q$.
  Actually, the proof follows by showing that $p \bnmustleq[I] q$ and
  $q\abst I\ \notmust\ o$ imply $p\abst I\ \notmust\ o$. Assume there exists
  an unsuccessful computation
  \[q\abst I\ \|\ o = q_0\abst I\ \|\ o_0\tr{\tau} \ldots \tr{\tau}q_k\abst I\
    \|\ o_k\tr{\tau}\ldots\] Consider the following cases:

  \begin{enumerate}
  \item\label{proof:th:comm-characterisation-item1-noisy} {\bf The computation
      is finite}, i.e., there exists $n$ such that
    $q_n\abst I\ \|\ o_n\not\tr{\tau}$ and $q_i\abst I\Downarrow$ and
    $o_i\Downarrow$ for $i\leq n$. By unzipping the computation, there exists
    $s$ such that $q_0\abst I\dtr{s} q_n\abst I$ and $o_0\dtr{\co{s}}o_n$
    unsuccessful. Note that $\n(s)\subseteq I$ and hence $s\cohide I = s$
    Moreover, $q_n \abst I \NOTMUST\ \init {o_n}$ and, hence
    $(q\abst I\ \after s) \NOTMUST\ \init{o_n}$. By
    Lemma~\ref{lem-prop-noisy}~(\ref{lem-after-noisy-class}), we have that
    $(q\ \after \ \nclass s) \NOTMUST\ \init{o_n}$.
    \begin{enumerate}
    \item Case $p \Uparrow \nclass s$. By
      Lemma~\ref{lem-prop-noisy}~(\ref{lem-div-noisy-class}),
      $p \abst I\Uparrow s$. Consequently, there is an unsuccessful
      computation of $p\abst I\ \|\ o$.
    \item Case $p\Downarrow \nclass s$. Note that $s\in\trc{q\abst I}$. By
      Lemma~\ref{lem-aux:conv-noisy}~(\ref{lem-aux:trace-implication-noisy}),
      therefore, $\exists t \in \nclass s$ and $t\in\trc {p}$. Hence,
      $(p \after \ \nclass s)\neq\emptyset$. Moreover, from
      $p \bnmustleq[I] q$ we conclude that
      $(q\ \after \ \nclass s) \NOTMUST\ \init o$ implies
      $(p\ \after \ \nclass s) \NOTMUST\ \init{o_n}$. Therefore, exists some
      $p'\in (p\ \after \ \nclass s)$ such that $p' \NOTMUST\ \init {o_n}$,
      and $p\abst I\dtr{s}p'\abst I$ is unsuccessful. Hence, there is an
      unsuccessful computation of $p\abst I\ \|\ o$.
    \end{enumerate}

  \item {\bf The computation is infinite}. We consider two cases:
    \begin{enumerate}
    \item There exists $s\in\trc{ q\abst I}$ and $\co{s}\in\trc{o}$ such that
      $q\abst I\Uparrow s$ or $o\Uparrow {\co s}$. Note that
      $\n(s)\subseteq I$ and hence $s\cohide I = s$. We proceed by case
      analysis.
      \begin{itemize}
      \item $q \abst I \Uparrow s$: By
        Lemma~\ref{lem-prop-noisy}~(\ref{lem-div-noisy-class}),
        $q \Uparrow \nclass s$. Therefore $p\Uparrow \nclass s$ because
        $p \btmustleq q$. By
        Lemma~\ref{lem-prop-noisy}~(\ref{lem-div-noisy-class}),
        $p \abst I\Uparrow s$. Therefore, there is an unsuccessful computation
        of $p\abst I\ \|\ o$.
      \item $q\abst I\Downarrow {s}$ (and $o\Uparrow {\co s}$): Therefore
        $q \Downarrow \nclass s$ by
        Lemma~\ref{lem-prop-noisy}~(\ref{lem-div-noisy-class}). Therefore
        $p \Downarrow \nclass s$ because $p \btmustleq q$. By
        Lemma~\ref{lem-aux:trace-implication-rec-noisy},
        $\exists t\in\nclass s: t\in\trc p$, hence $p\dtr{t}p'$. By
        Lemma~\ref{lem-prop-noisy}~(\ref{lem-red-noisy-class})
        $p\abst I \dtr{ t\cohide I}p'\abst I $. Note that $ t\cohide I = s$.
        Then, $p\abst I \dtr{ s }p'\abst I$. Hence, the computation obtained
        by zipping $p\abst I \dtr{ s }p'\abst I$ and $o\dtr {\co s} o'$ is
        unsuccessful.
      \end{itemize}

    \item $\forall n. (q_n \cohide I)\Downarrow$ and $o_n\Downarrow$. Take
      $s\in\actset^*$ such that $q\dtr{s}q_n$ and $q \cohide I \Downarrow s$
      (this is possible because
      $q \cohide I \ \| o \dtr{} q_n\cohide I\ \| o_n$ is unsuccessful and
      $\forall i\leq n. (q_i\cohide I )\Downarrow)$. By
      Lemma~\ref{lem-prop-noisy}~(\ref{lem-red-noisy-class}),
      $(q\cohide I)\Downarrow s$ implies
      $q\abst I\ \dtr{s\cohide I}q'\abst I$.

      By Lemma~\ref{def-charact-dmust}, $q\dtr{s}q_n$ and $q\Downarrow s$
      implies either (i) $p\Uparrow \nclass s$ or (ii) $p\Downarrow \nclass s$
      and $p\dtr{\nclass s}$.

      \begin{itemize}
      \item[(i)] $p \Uparrow \nclass s$: $\exists t\in\nclass s$ such that
        $p\Uparrow t$. By
        Corollary~\ref{aux-cor:commutative-observer-behavior}~(\ref{aux-lem:comm-observer-unsuccessful-comp}),
        $o\dtr{\co s}$ unsuccessful implies $o\dtr{\co t}$ unsuccessful, and
        hence there is an unsuccessful computation of $p\ \| o$.
      \item[(ii)] $p\Downarrow \nclass s$ and $p\dtr{\nclass s}$. Since
        $q_n\Downarrow$, in the computation there exists $q_m$ such that
        $q_n \dtr{}q_m \not\tr{}$ and $q_m\tr{a}q_{m+1}$. Moreover,
        $o_n \dtr{} o_m \tr{\co{a}} o_{m+1}$. Consequently, for all $L$ such
        that $a\not\in L$, we have $(q\ \after \ \nclass s) \NOTMUST\ L$.
        Since $p \btmustleq q$, then for all $L$ such that $a\not\in L$, we
        have $(p\ \after \ \nclass s) \NOTMUST\ L$. Therefore, there exists
        $p_n \in(p\ \after \ \nclass s) $ and $p_n\dtr{a} p_{m+1}$. Hence, for
        all $n$ there is an unsuccessful computation
        $p \| o \dtr{} p_n\| o_n \dtr{} p_{m+1} \| o_{m+1}$.
        \qedhere
      \end{itemize}
    \end{enumerate}
  \end{enumerate}
\end{itemize}
\end{proof}

\end{document}
